\definecolor{verde}{cmyk}{.83,.21,1,.08}
\newcommand{\sM}{\mathsf M}
\newcommand{\sN}{\mathsf N}
\newcommand{\cinf}{{C^\infty({\cal M})}}
\newcommand{\M}{\mathcal M}
\newcommand{\A}{\mathcal A}    
\newcommand{\HH}{\mathcal H}
\newcommand{\ds}{\slashed \partial}
\newcommand{\Ds}{\slashed D}
\newcommand{\sig}{{\boldsymbol{\sigma}}} 
\newcommand{\XX}{{\mathbb X}}
\font\mybb=msbm10 at 12pt
\def\bb#1{\hbox{\mybb#1}}
\newcommand{\Tr}[1]{\:{\rm Tr}\,#1}
\def\be{\begin{equation}}
\def\ee{\end{equation}}
\def\bea{\begin{eqnarray}}
\def\eea{\end{eqnarray}}
\newcommand{\del}{\partial}
\newcommand{\dd}{{\mathrm d}}
\newcommand{\spinind}[1]{\mathit{#1}}
\newcommand{\dotspinind}[1]{\mathit{\dot #1}}
\newcommand{\colorind}[1]{{\mathrm{#1}}}
\newcommand{\flavind}[1]{{\mathbf #1}}
\newcommand{\partind}[1]{{\sf #1}}
\newcommand{\sI}{\mathrm  I}
\newcommand{\sJ}{\mathrm J}
\newcommand{\sC}{\mathsf C}
\newcommand{\sD}{\mathsf D}
\newtheorem{prop}{Proposition}[section]
\newtheorem{thm}[prop]{Theorem}
\newtheorem{lemma}[prop]{Lemma}
\newtheorem{df}[prop]{Definition}
\begin{document}
\baselineskip=14pt
\begin{flushright}
\end{flushright}
\begin{center}
{\Large{\bf Twisted spectral triple for the Standard Model\\[2pt] and\\[4pt]
spontaneous breaking of the Grand Symmetry}}
\bigskip

{\large{Agostino Devastato$^{ ab}$,\; Pierre Martinetti$^{ abcd}$}}
\bigskip

$^a$\,  Dipartimento di Fisica, Universit\`{a} di Napoli {\sl Federico
  II} \;\&\; $^b${ INFN, Sezione di Napoli}\\
{\it Monte S.~Angelo, Via Cintia, I-80126 Napoli}\\[5pt]
$^c$ Dipartimento di Matematica,  Universit\`{a} di Trieste\\
{\it Via Valerio
12/1 I-34127 Trieste}\\
$^d$ Dipartimento di Matematica,  Universit\`{a} di Genova\\
{\it Via Dodecaneso 5 I-16146 Genova}
\footnotetext{
agostino.devastato@na.infn.it, martinetti@dima.unige.it}
\end{center}

\begin{abstract}
Grand symmetry models in noncommutative geometry, characterized by a
non-trivial action of functions on spinors, have been introduced 
to generate minimally (i.e. without adding new fermions)
and in agreement with the first order condition an
extra scalar field beyond the standard model, which both stabilizes the electroweak vacuum and
makes the computation of the mass of the Higgs
compatible with its experimental value. In this
paper, we use a twist in the sense of Connes-Moscovici to  cure
a technical problem due to the non-trivial action on spinors, that is the appearance
 together with the extra scalar field  of unbounded vectorial terms. The
twist makes these terms bounded and - thanks to a twisted version
of the first-order condition that we introduce here - also permits to
understand the breaking to the standard model as a dynamical process
induced by the spectral action, as conjectured in
\cite{Devastato:2013fk}. This is a spontaneous breaking from a pre-geometric Pati-Salam model to the almost-commutative
geometry of the standard model, with two Higgs-like fields: scalar and vector.
\end{abstract}


\section{Introduction}

Noncommutative geometry [NCG] provides a description of the standard model of elementary particles [SM] in
which the mass of the Higgs $-$ at unification scale $\Lambda$ $-$ is a function of the other parameters
of the theory, especially the Yukawa coupling of fermions \cite{Chamseddine:2007oz}. Assuming there is no new physics between
the electroweak and the unification scales (the ``big desert
hypothesis''),  the flow of this mass under
the renormalization group yields a prediction for the Higgs observable
mass $m_H$. It is well known that in the absence of new physics the three
constants of interaction fail to meet at a single unification
scale, but form a triangle which lays between $10^{13}$ and $10^{17}$~GeV. The situation can be improved by taking into account
higher order terms in the NCG action \cite{spectral-action-order6}, or
gravitational effects \cite{Agostino-grav}. Nevertheless, the
prediction of $m_H$ is not much sensible on the precise choice of the unification scale. Since the beginning of the model in the
early 90' \cite{Connes:1990ix, Connes:1996fu}, for $\Lambda$ between
$10^{13}$ and $10^{17}$GeV 
this prediction had
been around $170\text{ GeV}$, a value ruled out by Tevratron in
2008. Consequently, either the model should be abandoned, or the big desert hypothesis questioned.

The recent discovery of the Higgs boson with a mass $m_H\simeq 126
\,\text{Gev}$ suggests the big desert hypothesis should be
questioned. There is indeed an instability in the electroweak vacuum which is meta-stable rather than
stable (see \cite{near-critic} for the most recent update). There does
not seem to be a consensus in
the community whether this is an important problem or not: on the one
hand the
mean time of this meta-stable state is longer than the age of the
universe, on the other hand in some cosmological scenario the
meta-stabililty may be problematic \cite{cosmohiggs, Higgscosmoprocee}. Still, the fact that $m_H$ is almost
at the boundary value between the stable and meta-stable phases of the
electroweak vacuum suggests that ``something may be going on''. In
particular, particle physicists have shown how a
new scalar field  suitably coupled to the Higgs - usually
denoted $\sigma$ - can cure the
instability (e.g. \cite{C.-S.-Chen:2012vn,Elias-Miro:2012ys}).

Taking into account this extra field in the
NCG description of the SM induces a modification of the flow of the
Higgs mass, governed by the parameter $r=\frac{k_\nu}{k_t}$ which is the
ratio of the Dirac mass of the neutrino and of the Yukawa coupling of the quark top. Remarkably, for any value of 
$\Lambda$ between $10^{12}$ and $10^{17}$ Gev, there exists a
realistic value $r\simeq 1$  which brings back the computed value of
$m_H$ to $126 \text{
  Gev}$ \cite{Chamseddine:2012fk}.

The question is then to generate the extra field
$\sigma$ in agreement with the tools of noncommutative
geometry. Early attempts in this direction have been done in \cite{Stephan:2009fk}, but they require the
adjunction of new fermions (see \cite{Stephan:2013fk} for a
recent state of the art). In \cite{Chamseddine:2012fk}, a scalar $\sigma$ correctly coupled to the
Higgs is obtained without touching the fermionic content of the model,
simply by turning the Majorana mass $k_R$ of the neutrino into a field
\begin{equation}
  \label{eq:114}
  k_R \to k_R\, \sigma.
\end{equation}
Usually the bosonic fields in NCG are generated by inner
fluctuations of the geometry. However this does not work for the field
$\sigma$ because
of the first-order
condition
\begin{equation}
  \label{eq:121}
  [[D,a], Jb^*J^{-1}] = 0 \quad \forall a,b \in \A
\end{equation}
where $\A$ and $D$ are the algebra and the Dirac operator of the
spectral triple of the standard model, and $J$ the real structure.
 
In \cite{Chamseddine:2013fk,Chamseddine:2013uq} it was
shown how to obtain $\sigma$ by an inner fluctuation
that does not satisfy the first-order condition, but in such a way that
the latter is retrieved dynamically, as a minimum of the spectral
action. The field $\sigma$ is then interpreted as an excitation around this
minimum. Previously in \cite{Devastato:2013fk} another way had been
investigated  to generate 
$\sigma$ in agreement with the first-order condition, taking advantage of the fermion doubling in the Hilbert space $\cal H$ of the spectral
  triple of the SM \cite{fermionsdoublingNapoli2,
    fermiondoublingNaples1, fermiondoublingMarseille}. 

More specifically, under natural assumptions on the representation of
the algebra and an ad-hoc symplectic hypothesis, it is shown in
\cite{Chamseddine:2008uq} that the algebra in the spectral triple of
the SM should be a sub-algebra of $\cinf\otimes \A_F$, where
$\M$ is a Riemannian compact spin manifold (usually of dimension $4$) while 
\begin{equation}
\A_F=M_a(\bb H) \oplus M_{2a}(\bb C) \quad a\in \bb N.
 \end{equation}
The algebra of the standard model
\begin{equation}
  \label{eq:115}
  \A_{sm}:= \bb C \oplus \bb H \oplus M_3(\bb C)
\end{equation}
is obtained
from $\A_F$ for $a=2$, by the grading and the first-order conditions. 
Starting instead with the ``grand algebra'' ($a=4$)
\begin{equation}
  \label{eq:116}
  \A_G := M_4(\bb H) \oplus M_8(\bb C),
\end{equation}
one generates the field
$\sigma$ by a inner fluctuation which respects the first-order
condition imposed by the part $D_M$ of the Dirac
operator that contains the Majorana mass $k_R$ \cite{Devastato:2013fk}.  The breaking to $\A_{sm}$
is then obtained by the first-order
condition imposed by the free Dirac operator
\begin{equation}
\Ds:=\ds\otimes
{\mathbb I}_F
\label{eq:89}
\end{equation}
where ${\mathbb I}_F$ is the identity operator on the finite
dimensional Hilbert space ${\cal H}_F$ on which acts $\A_G$.

Unfortunately, before this breaking not
only is the first-order condition not satisfied, but the commutator
\begin{equation}
[\Ds, A] \quad A\in \cinf\otimes \A_G
\label{eq:117}
\end{equation}
is never bounded. This is problematic both for physics, because the
connection 1-form describing the gauge bosons is unbounded; and
from a mathematical point of view, because the construction of a
Fredholm module over $\A$ and Hochschild character cocycle depends on
the boundedness of the commutator \eqref{eq:117}.

In this paper, we solve this problem by using instead a
\emph{twisted spectral triple} $(\A, \HH, D,
\rho)$~\cite{Connes:1938fk}{\footnote{Also called
    $\sigma$-triple, but to avoid confusion with the field $\sigma$,
    we denote by $\rho$ the automorphism called $\sigma$ in \cite{Connes:1938fk}.}}. Rather than requiring the boundedness of the commutator,
one asks that there exists an automorphism $\rho$ of $\A$ such that the
twisted commutator 
\begin{equation}
  \label{eq:113}
 [D, a]_\rho :=  D a - \rho(a)D
\end{equation}
is bounded for any $a\in\A$. Accordingly, we introduce in Def. \ref{deftwistedfirstorder} 
a \emph{twisted first-order condition}
\begin{equation}
  \label{eq:122}
  [[D,a]_\rho, Jb^*J^{-1}]_\rho :=  [D,a]_\rho Jb^*J^{-1} -
  J\rho(b^*)J^{-1}[D,a]_\rho = 0 \quad \forall a,b\in \A.
\end{equation}
We then show that for a suitable choice of a subalgebra $\cal B$ of
$\A_G$, a \emph{twisted fluctuation}
of $\Ds +D_M$
that satisfies \eqref{eq:122} generates a field $\sig$ - slightly different
from the one of \cite{Chamseddine:2012fk} - together with an
additional  vector field $X_\mu$.

Furthermore, the breaking to the standard model is now spontaneous, as conjectured by Lizzi in
\cite{Devastato:2013fk}. Namely the reduction of the grand
algebra $\A_G$ to $\A_{sm}$ is obtained dynamically, as a minimum of the spectral
action. The scalar field $\sigma$ then plays a role similar as the one of the
Higgs in the electroweak symmetry breaking.

Mathematically, twists make sense as explained in
  \cite{Connes:1938fk},  for the
  Chern character of finitely summable spectral triples extends to the
  twisted case, and lands in ordinary (untwisted) cyclic
  cohomology. Twisted spectral triples have been introduced to deal
  with type III examples, such as those arising from transverse
  geometry of codimension one foliation, and have been used in
    various context since, like quantum statistical dynamical systems
    \cite{Greenfield:2014aa}. It is quite surprising that the same
    tool gives a possibility to implement in NCG the idea of a ``bigger symmetry beyond the SM''.
The main results of the paper are summarized in the following theorem.

\begin{thm}
\label{theo1}
Let $\cal H$ be the Hilbert space of the standard model described in
\S \ref{spectraltripleSM}.
There exists 
 a sub-algebra ${\cal B}$ of the grand algebra $\A_G$
  containing $\A_{sm}$ together with an automorphism $\rho$ of $\cinf\otimes{\cal B}$
such that
\smallskip 

i)  $T:= (\cinf\otimes{\cal B}, {\cal H}, \,\Ds + D_M;\,\rho)$ is a twisted
spectral triple satisfying the twisted $1^{\text{st}}$-order
condition~\eqref{eq:122};
\smallskip

ii) twisted
fluctuations of $\Ds + D_M$ by $\cinf \otimes\cal
B$ are parametrized by a scalar field $\boldsymbol \sigma$
and a vector field $X_\mu$; 
\smallskip

iii) the
spectral triple of the standard model is obtained from $T$ by
minimizing the potential of the vector field $X_\mu$ induced by the
spectral action coming from a twisted fluctuation of $\Ds$;

iv) the spectral triple of the standard model is also obtained by
minimizing the potential induced by the spectral action of a
twisted fluctuation of the whole Dirac operator $\Ds +D_M$. Such a
fluctuation provides a potential for the scalar field $\sig$, which is
minimum when $\Ds+D_M$ is fluctuated by $\cinf\otimes\A_{SM}$, that is when $\sig$
is the constant field $k_R$. 
\end{thm}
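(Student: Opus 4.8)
\medskip
\noindent\textbf{Proof strategy.} The plan is to work out everything explicitly in the finite-dimensional fibre $\mathcal H_F$ of $\mathcal H=L^2(\mathcal M,S)\otimes\mathcal H_F$ recalled in \S\ref{spectraltripleSM}, keeping careful track of the fact that an element of $\A_G=M_4(\mathbb H)\oplus M_8(\mathbb C)$ acts not only on the ``internal'' indices of $\mathcal H_F$ but also on the genuine spinor index $\mathbb C^4$ of $L^2(\mathcal M,S)$. The obstruction in \eqref{eq:117} then becomes transparent: for $A=a\otimes m$ with $m$ acting on the spinor $\mathbb C^4$, one has $[\ds\otimes\mathbb I_F,A]=-\ii\gamma^\mu(\partial_\mu a)\otimes m-\ii a\,[\gamma^\mu,m]\,\partial_\mu$, and the second term is a genuine first-order differential operator whenever $m$ fails to commute with the $\gamma^\mu$'s. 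The first step is therefore to single out the largest $\ast$-subalgebra $\mathcal B\subset\A_G$ whose elements act on the spinor index only through $\mathbb I$ and the chirality $\gamma^5$ -- so that the offending part of $m$ is necessarily of the shape $\gamma^5\otimes m'$ -- and then to observe, by direct inspection of the representation, that $\A_{sm}=\mathbb C\oplus\mathbb H\oplus M_3(\mathbb C)$, embedded as in \eqref{eq:115}, acts trivially on the spinor index and hence lies in $\mathcal B$.

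Second, I would define the twist. Let $\mathcal R$ be the self-adjoint unitary on $\mathcal H$ that flips the chirality index (any unitary anticommuting with $\gamma^5\otimes\mathbb I_F$ and commuting with $\mathcal H_F$ works), and set $\rho:=\mathrm{id}_{C^\infty(\mathcal M)}\otimes\big(b\mapsto\mathcal R\,b\,\mathcal R\big)$ on $C^\infty(\mathcal M)\otimes\mathcal B$. Then one checks: (a) $\rho$ is a $\ast$-automorphism, since conjugation by $\mathcal R$ is the identity on the $\mathbb I\otimes m'$ components and $-1$ on the $\gamma^5\otimes m'$ components, and $\mathcal B$ is stable under this by construction; (b) $[\Ds+D_M,a]_\rho$ is bounded for every $a\in C^\infty(\mathcal M)\otimes\mathcal B$, because in the $\Ds$ part the sign produced by $\rho$ on the $\gamma^5\otimes m'$ components turns the bad term $[\gamma^\mu,\cdot]\,\partial_\mu$ into $\{\gamma^\mu,\gamma^5\}\otimes m'\,\partial_\mu=0$, leaving only the bounded Clifford term $-\ii\gamma^\mu(\partial_\mu a)\otimes m$, whereas the $D_M$ part is bounded for free ($D_M$ being a finite-dimensional operator); (c) the remaining axioms -- compact resolvent, regularity, metric dimension $4$ -- are inherited verbatim from the spectral triple of the SM, the twist touching none of them. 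Next, for the twisted first-order condition \eqref{eq:122} I would argue blockwise: $[\Ds,a]_\rho$ is a Clifford-multiplication operator whose finite part is of the form $\gamma^\mu\otimes(\text{matrix})+\gamma^\mu\gamma^5\otimes(\text{matrix})$ with matrices in the image of $\mathcal B$, and one checks those matrices commute with every $Jb^\ast J^{-1}$ (which acts only on the antiparticle/species factors of $\mathcal H_F$), so the $\Ds$ contribution to \eqref{eq:122} vanishes; and $D_M$ already satisfies the first-order condition with respect to all of $\A_G$ -- the fact exploited in \cite{Devastato:2013fk} -- while $\rho$ is trivial on the blocks that $D_M$ couples, so the twist does not spoil it. This establishes (i).

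Third, for (ii) I would compute the general twisted fluctuation
\[
 A_\rho \;=\; \sum_i a_i\,[\Ds+D_M,b_i]_\rho \;+\; J\Big(\sum_i a_i\,[\Ds+D_M,b_i]_\rho\Big)J^{-1},
\]
with $\{a_i\},\{b_i\}\subset C^\infty(\mathcal M)\otimes\mathcal B$, impose self-adjointness together with \eqref{eq:122}, and read off the surviving free parameters: the $D_M$ piece yields -- as in the Majorana-mass fluctuation of \cite{Devastato:2013fk}, but with the matrix structure modified by $\rho$ -- a single scalar field $\sig$, and the $\Ds$ piece, once the first-order constraint and reality have removed all other candidate one-forms, leaves exactly one real vector field $X_\mu$.

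Finally, (iii)--(iv) amount to feeding $A_\rho$ into the spectral action $\Tr{f((\Ds+A_\rho)/\Lambda)}$ and performing the heat-kernel (Lichnerowicz/Gilkey) expansion to extract the bosonic potential. For (iii) one retains only the twisted fluctuation of $\Ds$: the $\Lambda^0$ Seeley--DeWitt coefficient produces a quartic-plus-quadratic potential $V(X_\mu)$ which, because $X_\mu$ enters through a mass-type term dressed by the fermion-doubling structure, is minimised when the components of $X_\mu$ along the generators of $\mathcal B$ not contained in $\A_{sm}$ are set to zero; at that minimum the fluctuated triple reduces to that of the standard model. For (iv) one repeats with the fluctuation of the full $\Ds+D_M$: $V$ now also depends on $\sig$ and, up to positive constants, has the double-well shape $V(\sig)\sim(|\sig|^2-k_R^2)^2$, whose minimum sits at $\sig$ equal to the constant $k_R$ -- i.e. at the configuration in which $\Ds+D_M$ is fluctuated by $\A_{sm}$ only -- again the SM triple, with $\sig$ now a Higgs-like excitation around $k_R$ as in \eqref{eq:114}. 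The main obstacle is two-fold. First, pinning down the \emph{right} pair $(\mathcal B,\rho)$: $\mathcal B$ must be large enough to still produce a non-trivial $\sig$ and a non-trivial $X_\mu$ and to contain $\A_{sm}$, yet small enough -- and matched to a suitable $\rho$ -- that \eqref{eq:117} becomes bounded and \eqref{eq:122} holds; these requirements pull in opposite directions and the verification is a delicate bookkeeping inside the block decomposition of $M_4(\mathbb H)\oplus M_8(\mathbb C)$ acting on $\mathcal H_F$. Second, the explicit spectral-action computation in (iii)--(iv), where one must follow $X_\mu$ and $\sig$ through the fluctuated operator and its square, evaluate the relevant heat-kernel coefficients, and show the potential is minimised exactly on the SM locus -- that is where essentially all of the genuine calculation, and all of the sign-and-normalisation risk, resides.
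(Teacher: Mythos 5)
Your overall architecture coincides with the paper's, and the heart of your twist is exactly the paper's mechanism: on elements acting on the spinor index only through $\mathbb I$ and $\gamma^5$, flipping the sign of the $\gamma^5$-component turns the unbounded term into $\{\gamma^\mu,\gamma^5\}\otimes(\cdot)\,\partial_\mu=0$, which is precisely \eqref{eq:130} of Lemma~\ref{lemautomorfismo}, and your $\mathcal R$-conjugation is the paper's $\rho$ in \eqref{eq:133}. The genuine gap is in your identification of $\cal B$. You take the largest subalgebra acting on spinors through $\mathbb I,\gamma^5$ (essentially ${\cal B}'$ of \eqref{eq:87}) and assert that ``$D_M$ already satisfies the first-order condition with respect to all of $\A_G$'', so that the twist cannot spoil it. This is false: in \cite{Devastato:2013fk} the first-order condition for $D_M$ is a nontrivial constraint (it is what cuts $\A'_G$ down to $\A''_G$ in \eqref{eq:50}), and in the twisted setting it is Prop.~\ref{lemmasigma} that carves $\cal B$ of \eqref{eq:96} out of ${\cal B}'$ through the constraints \eqref{eq:105}, forcing the identifications $c^r_R=m$, hence $\bb C^r_R=\bb C$ and $M_4(\bb C)\to M_3(\bb C)\oplus\bb C$. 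With your larger algebra, statement (i) simply fails; note also that $\rho$ is not ``trivial on the blocks that $D_M$ couples'', since $\rho(Q)$ enters $[D_M,A]_\rho$ explicitly in \eqref{eq:95}. A related imprecision affects your check for $\Ds$: $JbJ^{-1}$ does act non-trivially on the chiral spinor index (through $\bar R$ on the antiparticle sector), and the vanishing of \eqref{eq:120bis} relies on twisting the $JbJ^{-1}$ factor as well — with the half-twisted condition \eqref{eq:124} it would not hold.

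For (iii)--(iv) your description of the minimization is also off the actual mechanism. In the paper, minimizing $V(X)$ forces $\Delta(\XX)_\mu=\XX_\mu-\rho(\XX_\mu)=0$, i.e.\ twist-invariance of the vector field (not the vanishing of its ``non-SM components''), and a separate algebraic argument (Lemma~\ref{propinv}) is needed to show that the largest subalgebra all of whose fluctuations are invariant is $\cinf\otimes\A_{sm}$; your proposal skips that step. For (iv), your guessed double-well $V(\sig)\sim(|\sig|^2-k_R^2)^2$ is not what the computation gives: Lemma~\ref{propscal} yields $V(\sig)=C_4\phi^4+C_2\phi^2+C_0$ with $C_2\geq 0$ under \eqref{eq:168}, so the minimum sits at the symmetric point $\phi=0$ (i.e.\ $\sig=\mathbb I$, $D_\sig=D_M$), not at a symmetry-breaking vacuum. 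More importantly, minimizing the scalar potential alone does not reduce the algebra to $\A_{sm}$: $\phi=0$ only forces $c^r_R=c^l_R$, leaving the strictly larger algebra \eqref{eq:52}; the paper obtains the reduction by minimizing the scalar, vector and interaction potentials jointly (Prop.~\ref{propinteract}), the interaction term $V(X,\sig)$ being not bounded below on its own. Since your proposal omits $V(X,\sig)$ entirely, the argument for (iv) does not close as written.
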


\noindent Explicitly, $\cal B$ is a sub-algebra $\bb H^2 \oplus \bb C^2
\oplus M_3(\bb C)$ of $\A_G$. Labeling the two copies of the
quaternions and complex algebras by the left/right spinorial indices $l,r$ and the left/right
internal indices L/R, that is 
\begin{equation}
  \label{eq:128}
  {\cal B} = {\bb H}^l_L \oplus {\bb H}^r_L \oplus {\bb C}^l_R \oplus {\bb
    C}^r_R \oplus M_3(\bb C),
\end{equation}
the automorphism $\rho$ is the exchange of the left/right spinorial indices:
\begin{equation}
  \label{eq:133}
  \rho\,(\,q^l_L,\, q^r_L,\, c^l_R,\, c^r_R,\, m) \to (q^r_L,\, q^l_L,\,
  c^r_R,\, c^l_R,\, m)
\end{equation}
where $m\in M_3({\bb C})$ while the $q$'s and $c$'s are quaternions and
complex numbers belonging to their
respective copy of $\bb H$ and $\bb C$.  
\bigskip

The paper is organized as follows. In section \ref{sectionSM} we
recall briefly the spectral triple of the standard model (\S
\ref{spectraltripleSM}), the tensorial notation used all
along the paper (\S\ref{mixing}), and the
results of \cite{Devastato:2013fk} on the grand algebra (\S \ref{se:tripleforsm}). We discuss the unboundedness
of the commutator \eqref{eq:117} in \S\ref{unboundsection}.  
Section \ref{section:twisting} deals with the twist. It begins with
the definition of the twisted first-order
condition in Def. \ref{deftwistedfirstorder}. In
\S \ref{subsec:representation} we fix the
representation of the grand algebra, which differs from the one used
in \cite{Devastato:2013fk}. It is used in \S\ref{subsec:twistedfirtsfree}
to build a  twisted
spectral triple with the free Dirac operator
(Prop. \ref{twisted-spec-triple}). In \S \ref{subsec:twistedfirstmaj} the twisted first-order condition for
$D_M$ yields the reduction to the algebra $\cal B$ and the
construction of the spectral triple $T$ (Prop. \ref{lemmasigma}). This
proves the first point of theorem \ref{theo1}.
In section \ref{section:twistedoperators} we compute the twisted fluctuations
$D_X$ of the free Dirac
operator $\Ds$ (\S\ref{potenziale}), and $D_\sigma$ of the
Majorana-Dirac operator $D_M$ (\S\ref{soussectionsigma}).
This yields the additional vector field in Prop. \ref{propdx}, and the
extra scalar field $\boldsymbol\sigma$ in
Prop. \ref{propsigma}, proving the second point of theorem \ref{theo1}.
In section \ref{section:breaking}, after some  generalities on the spectral
action in \S\ref{subsec:spectral}, we compute the generalized
Lichnerowicz formula for the twisted-fluctuated Dirac operator in
\S \ref{section:squaretwisteddirac}. The comparison with the
non-twisted case is made in \S~\ref{deviation}. The dynamical reduction of $\cal
B$ to the standard model by minimizing the potential of the additional
vector field is
obtained in \S \ref{secbreaking}. The potential of the scalar field is
treated in \S\ref{section:potentialsigma}, and the potential of
interaction between the vector and the scalar field in \S\ref{interaction}.
 These results are
discussed in section \ref{section:discussion}.   In  \S\ref{secdisc}
we stress how twisting the almost
commutative geometry of the SM may open the way to models where the
algebra is not the tensor product of a manifold by a finite
dimensional geometry. This justifies the choice of the representation
of $\A_G$ made in the present paper, but we show in \S\ref{subsec:inv}  that the results are the
same with the representation used in \cite{Devastato:2013fk}. 
\bigskip

\section{Standard model and the grand algebra}
\setcounter{equation}{0}
\label{sectionSM}

\subsection{The spectral triple of the standard model}
\label{spectraltripleSM}

The main tools of NCG \cite{Connes:1994kx} are encoded
  within a \emph{spectral triple} $(\A,
  \HH, D)$ where $\A$ is an involutive algebra acting on a Hilbert
  space $\HH$, and $D$ is a selfadjoint operator on $\HH$. These
  three elements come with two more operators, a real structure $J$
  \cite{Connes:1995kx} and a ${\mathbb Z}_2$-grading 
  $\Gamma$  that are generalizations to the noncommutative setting of 
  the charge conjugation and the chirality operators of quantum field
  theory. These five objects satisfy a set of properties guaranteeing that given any
  spectral triple with $\A$ unital and commutative, then there exists
  a closed Riemannian spin manifold $\M$ such that $\A=\cinf$ \cite{connesreconstruct}. These
  conditions still make sense in the noncommutative case \cite{Connes:1996fu}, hence the
  definition of a \emph{noncommutative geometry} as a spectral triple
  where the algebra $\A$ is non necessarily commutative. 

Among these conditions, the ones that play an important role in this work are the first-order condition \eqref{eq:121}, the boundedness and the grading conditions 
  \begin{equation}
[D,a]\in  {\cal B}(\HH),\quad  [\Gamma,a]=0 \quad \forall a\in\A,
\label{eq:136}
  \end{equation}
as well as the order-zero condition
\begin{equation}
  \label{eq:138}
  [a, Jb^* J^{-1}]=0 \quad \forall a,b\in \A.
\end{equation}

A gauge theory is
described by an \emph{almost commutative geometry}
\begin{equation}
  \label{eq:02}
  \A =\cinf\otimes \A_F,\; \HH = L^{2}(\M,S)\otimes \HH_F,\;  D=
  \slashed{\partial}\otimes {\mathbb I}_F + \gamma^5\otimes D_F,
\end{equation} which is the product of the canonical spectral triple $\left(C^{\infty}(\M),L^{2}(\M,S),\slashed{\partial}\right)$ associated to
a oriented closed spin manifold $\M$ of (even)
dimension $m$, by a finite
dimensional spectral triple
\begin{equation}
(\A_F, \HH_F, D_F).\label{eq:37}
\end{equation}
Here $L^{2}(\M,S)$ is the space of square integrable spinors on $\M$,
and{\footnote{Strictly speaking $\ds$ is not selfadjoint but
    essentially selfadjoint. We ignore this distinction and it is
    implicitly assumed all along the paper that we work with its closure.}}
\begin{equation}
\slashed{\partial} = -i\sum_{\mu=1}^m \gamma^\mu\nabla_\mu^S\quad \text{
  with } \quad\nabla_\mu^S =\del_\mu + \omega_\mu^S 
\label{eq:35}
\end{equation}
 is the
Dirac operator with $\gamma^\mu={\gamma^\mu}^\dagger$ the selfadjoint
Dirac matrices and $\omega_\mu^S$ the spin connection. The chirality operator $\gamma^5$ 
is a grading of $L^2(\M,S)$ which commutes with $\cinf$
and anticommutes with $\slashed\partial$.
The notation is justified assuming $\M$ has dimension $4$ (what we do
from now on, for simplicity and obvious physical reasons): $\gamma^5$
is then the product of the four Dirac matrices.\setcounter{footnote}{0}{\footnote{Most of the
results presented in this paper should work in arbitrary even
dimension: for the construction of real twisted spectral triples and their twisted
fluctuation, this has actually been shown in
\cite{Landi:2015aa}; for the spectral action, this still needs to be
checked. In odd dimension there may be some subtleties due to the
grading, that need to be further investigated.}} 

The choice of the finite dimensional spectral triple \eqref{eq:37} is dictated by
the physical contents of the theory. For the SM, the algebra is
$\A_{sm}$ given in \eqref{eq:115}, whose group of unitary elements yields the gauge group
of the standard model. The finite dimensional Hilbert space $\HH_F$ is spanned by the particle
content of the theory. The standard model has 96 such degrees of freedom:
8 fermions (electron, neutrino, up and down quarks with three
colors each) for N=3 generations and  two chiralities $L$, $R$, plus
antiparticles. Therefore one takes
\begin{equation}
\mathcal{H}_{F}=
\mathcal{H}_{R}\oplus\mathcal{H}_{L}\oplus\mathcal{H}_{R}^{c}\oplus\mathbb{\mathcal{H}}_{L}^{c}=\bb C^{96}.
\label{eq:56}
\end{equation}

The finite dimensional Dirac operator $D_F = D_0 + D_R$ is a
$96\times96$ matrix where
\begin{equation}
D_0:=\left(\begin{array}{cccc}
0_{8N} & \mathcal{M}_0 & 0_{8N} & 0_{8N}\\
\mathcal{M}_0^{\dagger} & 0_{8N} & 0_{8N} & 0_{8N}\\
 0_{8N}& 0_{8N} & 0_{8N} & \bar{\mathcal{M}}_0\\
0_{8N} &0_{8N} & \mathcal{M}_0^{T} & 0_{8N}
\end{array}\right)\;\text{ and }\; 
D_R:=\left(\begin{array}{cccc}
0_{8N} & 0_{8N} & \mathcal{M}_{R} & 0_{8N}\\
0_{8N} & 0_{8N} & 0_{8N} & 0_{8N}\\
\mathcal{M}_{R} ^\dagger& 0_{8N} & 0_{8N} &0_{8N}\\
0_{8N} &0_{8N} & 0_{8N}& 0_{8N}
\end{array}\right).
\label{eq:D_F Modello Standard-1-1}
\end{equation}
The matrix $\mathcal{M}_0$ contains the Yukawa couplings of 
fermions, the Dirac mass of neutrinos, the Cabibbo matrix and the
mixing matrix for neutrinos. The matrix $\mathcal{M}_{R}$
contains the Majorana mass of neutrinos. Explicitly
\begin{equation}
\mathcal{M}_0  =  \left(\begin{array}{cc}
M_{u} & 0_{4}\\
0_{4} & M_{d}
\end{array}\right)\otimes \bb I_N\quad\quad
\mathcal{M}_{R}  =  \left(\begin{array}{cc}
M_{R} & 0_{4}\\
0_{4} & 0_{4}
\end{array}\right)\otimes \bb I_N
\end{equation}
where, for the first generation,  $M_{u}$ 
contains the Yukawa coupling of the up quark
 and the Dirac  mass of $\nu_e$, $M_{d}$ 
contains the down quark and  the electron masses, 
and $M_{R}$ the
Majorana mass of $\nu_e$. The structure is
repeated for the other two generations.

The real structure
\begin{equation}
J = \mathcal J \otimes J_F\label{eq:46}
\end{equation}
acts as the charge conjugation
operator ${\cal J}= i\gamma^0\gamma^2 cc$
on $L^2(\M,S)$, and as 
\begin{equation}
J_{F}:=\left(\begin{array}{cc}
0 & \mathbb{I}_{16N}\\
\mathbb{I}_{16N} & 0
\end{array}\right) cc
\end{equation}
on $\mathcal H_F$, where it exchanges the blocks ${\cal H}_R\oplus
{\cal H}_L$ of particles with the block ${\cal H}_R^c\oplus
{\cal H}_L^c$ of antiparticles. 
The grading is 
\begin{equation}
\label{graduation}
\Gamma = \gamma^5 \otimes \gamma_F\quad \text{ where } \quad  \gamma_{F}:=\left(\begin{array}{cccc}
\mathbb{I}_{8N}\\
 & -\mathbb{I}_{8N}\\
 &  & -\mathbb{I}_{8N}\\
 &  &  & \mathbb{I}_{8N}
\end{array}\right).\end{equation} 

 The operators $\gamma_F, J_F$ and $D_F$ are such
   that $J_F^2 =\bb I$, $J\gamma_F = - \gamma_F J_F$, $J_F D_F = D_F J_F$, meaning that the finite part of the
   spectral triple of the standard model has $KO$-dimension $6$
   \cite{Barrett:2007vf,Chamseddine:2007oz}.  Meanwhile the continuous part of the spectral triple
   has $KO$-dimension $4$, that is ${\cal J}^2=-\bb I$, ${\cal  J}\gamma^5
   = \gamma^5\cal J$ and ${\cal J}\ds = \ds{\cal J}$.
\bigskip

Gauge fields are obtained by fluctuating the operator $D$ by $\A$, that is
substituting it with the \emph{covariant Dirac operator}
\begin{equation}
  \label{eq:30}
  D_A:= D+ A + JA J^{-1}
\end{equation}
where 
\begin{equation}
  \label{eq:36}
  A = \sum_i a_i [D, b_i]\quad a_i, b_i\in \A
\end{equation}
is a selfadjoint $1$-form of
the almost commutative manifold.

As stressed in the introduction, the field $\sigma$ cannot be
generated by a fluctuation of the Majorana part
\begin{equation}
D_M:= \gamma^5\otimes D_R
\label{eq:23}
\end{equation}
of the Dirac operator, because $[D_R, a]=0$ for any $a\in
  \A_{sm}$. The obstruction has its origin in the first-order
  condition. Indeed one easily checks \cite{Devastato:2013fk} that for any $a,b\in
C^\infty(\M)\otimes\A_F$
\begin{equation}
  \label{eq:123}
  [[D_M, A], \, J b^* J^{-1}] = 0 \text{ if and only if } [D_M, A]=0.
\end{equation}
Hence the necessity to make the first-order condition more flexible
\cite{Chamseddine:2013fk}, or to enlarge the algebra one is starting
with, in order to have enough space to generate the field
$\sigma$ without violating the first-order condition. This enlargement
is made possible by mixing the internal degrees of freedom of ${\cal
  H}_F$ with the spinorial degrees of freedom of $L^2(\M, S)$. This
has been done in \cite{Devastato:2013fk} and is recalled in the next
two paragraphs. 

\subsection{Mixing of spinorial and internal degrees of freedom} 
\label{mixing}

The total Hilbert space ${\cal H}$ of the almost commutative geometry (\ref{eq:02}) is the tensor product of four
dimensional spinors by the 96-dimensional elements of $\mathcal
H_F$. Any of its element is a $\mathbb C^{384}$-vector valued
function on $\M$. From now on, we work with $N=1$ generation only, and
consider instead $384/3 = 128$ components vector. The total Hilbert space
can thus be written - at least in a local trivialization - in two ways:
\begin{equation}
\HH  = L^2(\M, S)\otimes \mathcal H_F \; \text{ or }\; \HH = L^2(\M)\otimes \mathsf H_F
\label{eq:10}
\end{equation}
where $\sf H_F\simeq \mathbb C^{128}$ takes  into account both external (i.e. spin) and internal (i.e. particle) degrees of
freedom. We label
 the basis of $\sf H_F$ with  a multi-index $s\dot s\sC\sI\alpha$
 where:
\begin{itemize}
\item[$\spinind s ,\dotspinind s$]  are
  the  four spinor indices: $\spinind{s=r,l}$ runs over the right, left
  parts  and $\dotspinind{s=\dot 0, \dot 1}$ over
  the particle, antiparticle parts of the spinors. 

\item[$\partind {C}$] indicates whether we are considering
  ``particles'' ($\partind{C=0}$) or ``antiparticles''  ($\partind{
    C=1}$).

\item[$\sI$] is a ``lepto-color'' index: $\sI = 0$ identifies leptons while $\colorind
  I=1,2,3$ are the three colors of QCD.

\item[$\flavind \alpha$] is the flavor index. It runs over the set $u_R,d_R,u_L,d_L$ when $\colorind{I=1,2,3}$, and $\nu_R,e_R,\nu_L,e_L$ when $\colorind{I=0}$. 
\end{itemize}
On this basis, an element $\Psi$ of $\HH$ has components
$\Psi^{\mathsf C \colorind I}_{\spinind s  \dotspinind s
  \flavind{\alpha}}\in L^2(\M)$. The position of the indices is arbitrary: $\Psi$ evaluated
at $x\in\M$ is a column vector, so all the
indices are row indices. An element $A$ in ${\cal B}(\cal H)$ is a
$128\times 128$ matrix whose coefficients are function of $M$, and
carries the indices
\begin{equation}
  \label{eq:97}
  A = A_{\sD s \sJ \dot s \alpha}^{\sC t \sI \dot t \beta}
\end{equation}
where $\sD, t, \sJ, \dot t, \beta$ are column indices with the same
range as $\sC, s, \sI, \dot s, \alpha$.

This choice of indices yields the chiral basis for the Euclidean
Dirac matrices:{\footnote{The multi-index $st$ after the closing parenthesis is
      to recall that the block-entries of the $\gamma$'s matrices
      are labelled by indices $s,t$ taking values in the set
      $\left\{l,r\right\}$. For instance the $l$-row, $l$-column block
      of $\gamma^5$ is $\mathbb I_2$. Similarly the entries of the
      $\sigma$'s matrices are labelled by $\dot s, \dot t$ indices
      taking value in the set $\left\{\dot 0, \dot 1\right\}$: for
      instance ${\sigma^2}^{\dot 0}_{\dot 0} ={\sigma^2}^{\dot 1}_{\dot 1} = 0$. }}
\begin{equation}
  \label{eq:3}
\gamma^\mu=\left(\begin{array}{cc}  0_2
  & {\sigma^\mu} \\{\tilde\sigma}&  0_2
\end{array}\right)_{st},\quad
\gamma^5=\gamma^1\gamma^2\gamma^3\gamma^0 =\left(\begin{array}{cc}  \bb I_2
  & 0_2\\ 0_2& - \bb I_2
\end{array}\right)_{st},
\end{equation}
where for $\mu = 0,1,2,3$ one defines
\begin{equation}
  \label{eq:1}
  \sigma^\mu =\left\{\bb I _2, -i\sigma_i, \right\},\quad
  \tilde\sigma^\mu =\left\{\bb I_2, i\sigma_i\right\}
\end{equation}
with $\sigma_{i}$, $i=1,2,3$ the Pauli matrices. Explicitly,
\begin{equation}
\nonumber
 \label{eq:43}
\sigma^0 =\bb I_2,\; \sigma^1 = - i \sigma_1 =  \left(\begin{array}{cc} 0 & -i \\ - i & 0
\end{array}\right)_{\dot s \dot t},\;
\sigma^2 = - i \sigma_2 =  \left(\begin{array}{cc} 0 & -1 \\ 1 & 0
\end{array}\right) _{\dot s \dot t},\;
\sigma^3 = - i \sigma_3 =  \left(\begin{array}{cc} -i & 0 \\ 0 & i
\end{array}\right) _{\dot s \dot t}.
\end{equation}

 The free Dirac operator $\slashed\del$ extended  to ${\cal H}$
 according to \eqref{eq:89} acts as {\footnote{We use Einstein summation on alternated up/down
    indices. For any $n$ pairs of indices $(x_1,y_1)$, $(x_2, y_2)$, ...
    $(x_n, y_n)$, we write
$\delta_{x_1x_2... x_n}^{y_1y_2... y_n}$ instead of
$\delta_{x_1}^{y_1}\delta_{x_2}^{y_2}... \delta_{x_n}^{y_n}$. For the
tensorial notation to be coherent, $\ds$ and $\gamma^\mu$ should carry lower
$s\dot s$ and upper $t \dot t$ indices. We systematically omit them to
facilitate the reading.}} 
 \begin{equation}
   \label{eq:3-bis}
 \slashed D:=\,\delta^{\sC\sI\beta}_{\sD\sJ\alpha}\slashed \del =
 -i\left(\begin{array}{cc}\delta^{\sI\beta}_{\sJ\alpha}\,\gamma^\mu\nabla_\mu^S&
     0_{64}\\0_{64}&
 \delta^{\sI\beta}_{\sJ\alpha}\,\gamma^\mu\nabla_\mu^S\end{array}\right)_{\sC\sD}.
 \end{equation}
In tensorial notation, the charge conjugation
operator is
\begin{equation}
{\cal J}= i\gamma^0\gamma^2 cc=i\left(
\begin{array}{cc} 
  {{\tilde\sigma}^2}&0_2\\ 
  0_2 & {\sigma^2}
\end{array}\right)_{s t} \, cc = -i\eta_s^t \tau_{\dot s}^{\dot t}\, cc,
\label{eq:2}
\end{equation}
while
\begin{equation}
J_{F}=\left(\begin{array}{cc}
0 & \mathbb{I}_{16}\\
\mathbb{I}_{16} & 0
\end{array}\right)_{\sC\sD} cc,
\end{equation}
hence
\begin{equation}
(J\Psi)^{\partind C\colorind I}_{s\dotspinind s
  \flavind{\alpha}} = - i \eta^t_s\,\tau^{\dot t}_{\dot s}\, \xi^\sC_\sD \, \delta^{\sI\beta}_{\sJ\alpha}\,
\bar \Psi^{\partind D \colorind J
}_{t \dotspinind t \flavind{\beta}}
\label{eq:12}
\end{equation}
where for any pair of indices $x,y \in [1,..., n]$ one defines
\begin{equation}
  \label{eq:22}
  \xi^x_y=\left(\begin{array}{cc} 0_n & \bb I_n \\ \bb I_n &
    0_n \end{array}\right),\quad \eta^x_y=\left(\begin{array}{cc} \bb
    I_n & 0_n \\ 0_n&
   - \bb I_n \end{array}\right), \quad \tau^x_y=\left(\begin{array}{cc} 0_n & -\bb I_n \\ \bb I_n &
    0_n \end{array}\right).
\end{equation}
The chirality acts as $\gamma^5 = \eta_s^t \delta_{\dot s}^{\dot t}$
on the spin indices, and as $\gamma_F = \eta^\sC_\sD\,\delta^\sI_\sJ\,\eta_\alpha^\beta $  on the internal indices:
\begin{equation}
\label{tensorJgamma}
(\Gamma\Psi)^{\partind C\colorind I}_{s\dotspinind s
  \flavind{\alpha}} = \eta_s^t \delta_{\dot s}^{\dot
t}\, \,\eta^\sC_\sD\,\delta^\sI_\sJ\,\eta_\alpha^\beta \; \Psi^{\partind D\colorind J}_{t\dotspinind t
  \flavind{\beta}}.
\end{equation}
\subsection{The grand algebra\label{se:tripleforsm}}

Under natural assumptions (irreducibility of the representation,
existence of a separating vector), a ``symplectic hypothesis'' and the requirement that the $KO$-dimension is $6$,
the most general finite algebra that satisfies the conditions for the
real structure is \cite{Chamseddine:2008uq}
\begin{equation}
\mathcal{\mathcal{A}}_F=M_{a}(\mathbb{H})\oplus M_{2a}(\mathbb{C})
\quad a\in\mathbb{N},
 \label{genericalgebra}
\end{equation}
acting on a Hilbert space of dimension $2(2a)^2$. To have a non-trivial grading on $M_{a}(\mathbb{H})$ the integer 
$a$ must be at least 2, meaning the simplest possibility is 
$
M_{2}(\mathbb{H})\oplus M_{4}(\mathbb{C}). 
\label{bigalgebra}
$
The dimension of the Hilbert space is thus $2(2\cdot2)^2 = 32$, which
is precisely the dimension of $\cal H_F$ for one generation. The grading condition $[a,\Gamma]=0$ imposes the reduction to 
the left-right algebra,
\begin{equation}
\mathcal{A}_{LR}:=\mathbb{H}_L\oplus\mathbb{H}_R\oplus M_{4}(\mathbb{C}),
 \label{repa2}
\end{equation}
and the
order one condition $[[D_F, a], \, Jb^*J^{-1}]=0$ reduces further the algebra to 
${\cal A}_{sm}$ in \eqref{eq:115}.

The case $a=3$ requires an Hilbert space of dimension
$2(2\cdot3)^2= 72$, which has no obvious physical interpretation so
far. 

For $a=4$,
the dimension is $2(2\cdot 4)^2= 128$, which turns out to be precisely
the dimension of the ``fermion doubled'' space $\mathsf H_F$. In other
terms, the mixing of the internal and the spin degrees of freedom provides
exactly the space required to represent the ``grand  algebra''
\begin{equation}
{\cal A}_G=M_4({\mathbb H})\oplus M_8({\mathbb C}).
\end{equation}
Any elements of $\A_G$
is seen as a pair of $8\times 8$ complex matrices $Q\in M_4(\bb H), M\in
 M_8(\mathbb C)$, each having a block
structure of four  $4\times 4$ matrices
\begin{equation}
    \label{eq:16-01}
  Q
= \left(
\begin{array}{cc}
Q_{1}^{1} & Q_{1}^{2}\\
Q_{2}^{1} &Q_{2}^{2}
\end{array}\right) , \quad   M
= \left(
\begin{array}{cc}
 M^{1}_{1} &  M^{2}_{1}\\
M^{1}_{2} & M^{2}_{2}
\end{array}\right)
  \end{equation}
where $Q_i^j\in M_2(\bb H)$ and $M_i^j\in M_4(\bb C)$ for any $i,j
=1,2$. By further imposing all the conditions defining a
spectral triple, one intends to find back the algebra $\A_{sm}$ of the
standard model acting suitably on $\mathcal H_F$. This imposes that
$Q$ acts on the particle subspace $\sC=0$,
 trivially on the
lepto-color index $\sI$, meaning the
complex components of each of the four $4\times 4$ matrices $Q_i^j$ are labelled by
the flavor index $\alpha$. Similarly, one asks that $M$ acts on
antiparticles $\sC=1$, trivially on the flavor
index, meaning the components of each of the four $M_i^j$ are labelled
by the  lepto-color index~$\sI$.  
Thus any element $(Q, M)\in\A_G$ acts on
$\mathsf H_F$ as
\begin{equation}
  \label{eq:139}
\delta_{\mathsf C\sI}^{0\sJ}\,Q_{i \alpha}^{j\beta} \, +\,
\delta_{\mathsf C\alpha}^{1\beta}\, M_{i\,\sI}^{j\sJ}.
\end{equation}
The representation of $\cinf\otimes\A_G$ is obtained
viewing $Q_{i \alpha}^{j\beta}$, $M_{i\,\sI}^{j\sJ}$ no longer as constants but as $L^2$
functions on $\M$. 

There is still some freedom on how  to label the blocks of
the matrices $Q$ and $M$. One simply needs indices $i,j$  that
live on the $s\dot s$ spinorial space, take two
values each and are compatible
with the order-zero condition~\eqref{eq:138}. The natural choice is to
label the blocks of either $Q$ or $M$ by the chiral index
$s=r,l$ and the other blocks by the (anti)-particle index $\dot s= \dot 0, \dot
1$ (although in principle one could also consider combinations
of them). In \cite{Devastato:2013fk} we chose to label the quaternions by the
anti-(particle) index and the complex matrices by the chiral
index,
\begin{equation}
  \label{eq:49bis}
  Q= Q_{\dot s \alpha}^{\dot t\beta}, \quad M = M_{s\sI}^{t\sJ}.
\end{equation}

\noindent The reduction of $\A_G$ to
  the algebra of the standard model is then obtained as follows
\begin{framed}
\vspace{-.5truecm}

  \begin{eqnarray}
    \label{eq:50}
\mathcal{A}_{G}&=& M_4(\mathbb{H}) \oplus M_8(\bb C) \\[.25truecm]
\nonumber 
&\Downarrow& \hspace{0truecm}\text{grading condition}\\[.25truecm] 
\nonumber
{\cal A}'_G &=& M_2(\bb H)_L\oplus  M_2(\bb H)_R\oplus M_4^l(\bb C) \oplus M_4^r(\bb C)\\[.25truecm]  
\nonumber 
&\Downarrow& \hspace{0truecm}\text{$1^\text{st}$-order for the Majorana-Dirac operator $D_M$}\\ [.25truecm] 
   \nonumber
    \A''_G &=& (\mathbb H_L \oplus \mathbb H'_L \oplus \bb C_R \oplus \bb C'_R) \oplus (\bb C^l\oplus M_3^l(\bb C) \oplus \bb C^r \oplus M^r_3(\bb C))
 \text{with }\bb C_R = \bb C^r=\bb C^l\\[.25truecm] 
\nonumber &\Downarrow&\hspace{0truecm} \text{$1^\text{st}$-order for
  the free Dirac operator $\slashed D$}\\[.25truecm] 
\nonumber
\A_{sm} &=& \bb C \oplus \bb H \oplus M_3(\bb C) \\[-1truecm] 
\vspace{-1.75truecm}
\end{eqnarray}
\end{framed}
The interest of the grand algebra is the 
possibility to generate the
field $\sigma$ thanks to a fluctuation of the Majorana mass term
$D_M$ \eqref{eq:23} which
respects the first-order condition imposed by this same Majorana mass
term. Namely, and  this has to be put in contrast
with \eqref{eq:123}, one has that  \cite{Devastato:2013fk} 
\begin{equation}
\text{ for } A\in\A''_G, \; [D_M, A] \text{ is not
  necessarily zero.}\label{eq:82}
  \end{equation}
\subsection{Unboundedness of the commutator}
\label{unboundsection}

The breaking $\A_G\to\A'_G\to\A''_G$ deals with
the finite
dimensional part of the spectral triple. The final breaking
$\A''_G\to\A_{sm}$ is driven by the free Dirac operator and requires
the product with the manifold. However, as explained in \cite{Devastato:2014fk}, there is no spectral triple
for $\cinf\otimes\A_G$ (nor with $\A'_G$ or $\A''_G$) because the commutator $[\slashed D, A]$ of any
of its element with the free Dirac operator is 
never bounded. This can be seen from eq. (5.3) in
\cite{Devastato:2013fk} and has been pointed out to us by
W. v. Suijlekom. In order to have bounded
commutators, the action of $\A_G$ on spinors has to~be~trivial. 

\begin{prop}
\label{nogo}
 Let ${\mathsf A}_F$ be a finite dimensional algebra acting on the Hilbert space
 ${\mathsf H}_F$ in~\eqref{eq:10}. For any $A\in \cinf\otimes {\mathsf
   A}_F$, the
 commutator $[\slashed D, A]$ is bounded iff
 ${\mathsf A}_F$ acts as the identity operator on
the spinor indices $s \dot s$. In particular, the biggest sub-algebra of
 $\cinf \otimes \A_G$ acting as in \eqref{eq:139} and whose commutator
 with $\slashed D$ is bounded is
 $\cinf\otimes(M_2(\bb H) \oplus M_4(\bb C))$. 
 \end{prop}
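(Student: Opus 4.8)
The plan is to analyze directly the commutator $[\slashed D, A]$ for $A \in \cinf \otimes \mathsf A_F$ using the explicit form of $\slashed D$ in \eqref{eq:3-bis}. Write $\slashed D = -i\,\delta^{\sC\sI\beta}_{\sD\sJ\alpha}\,\gamma^\mu\nabla_\mu^S$, and split $A = A^{\text{const}} + A'$, where $A^{\text{const}}$ is the part acting trivially on the $L^2(\M)$ factor and $A'$ the rest — more precisely, work pointwise on $\M$ and decompose the matrix-valued symbol of $A$. The key observation is that the derivative $\nabla_\mu^S$ is a first-order differential operator, so when one commutes $\slashed D$ past $A$ one gets two kinds of terms: (a) a "zeroth-order" algebraic term in which $\gamma^\mu$ is commuted against the matrix part of $A$ (no derivative hits $A$), and (b) a genuinely first-order term $-i\gamma^\mu (\del_\mu A)$ coming from the Leibniz rule, which is always bounded since $\del_\mu A$ is again a (smooth, compactly-controlled) matrix-valued function acting as a bounded operator on $\mathsf H_F$. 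Hence boundedness of $[\slashed D, A]$ is equivalent to boundedness of the zeroth-order piece $-i[\gamma^\mu, A]\,\nabla_\mu^S$.

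Next I would argue that $[\gamma^\mu, A]\,\nabla_\mu^S$ is bounded if and only if $[\gamma^\mu, A] = 0$ for every $\mu$. One direction is trivial. For the converse: $\nabla_\mu^S$ is an unbounded operator on $L^2(\M)$ with unbounded symbol, and $[\gamma^\mu,A]$ acts only on the finite-dimensional $\mathsf H_F$ factor (times the scalar function part); so unless the finite-dimensional coefficient $[\gamma^\mu,A]$ vanishes, the product has unbounded symbol and is therefore unbounded. Here one uses that the $\gamma^\mu$, $\mu=0,\dots,3$, together with $\gamma^5$ generate (a copy of) the full matrix algebra $M_2(\complex)$ acting on the chiral index $s$ — indeed $\{\bb I_2, -i\sigma_i\}$ from \eqref{eq:1} spans $M_2(\complex)$. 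Therefore $[\gamma^\mu, A] = 0$ for all $\mu$, combined with $[\gamma^5, A]=0$ (which is automatic once $A$ commutes with all $\gamma^\mu$, or can be imposed separately), forces the matrix part of $A$ to lie in the commutant of $M_2(\complex) \otimes \bb I$ inside $M_4(\complex)$ acting on $s\dot s$; i.e. $A$ must act as $\bb I_2$ on the $s$ index. Since the $\dot s$ index is untouched by $\slashed D$ (see \eqref{eq:3-bis}, where only the $s,t$ block structure of $\gamma^\mu$ appears and $\dot s$ is carried trivially), there is no constraint there, so $\mathsf A_F$ must act as $\bb I$ on both $s$ and $\dot s$ but may act arbitrarily on $\dot s$ — wait, more carefully: $\slashed D$ is $\delta$ on the $\dot s$ index, so $A$ is unconstrained on $\dot s$. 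This gives the first assertion: $[\slashed D, A]$ bounded $\iff$ $\mathsf A_F$ acts as the identity on the spinor indices $s\dot s$.

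For the last sentence, I would intersect this constraint with the representation \eqref{eq:139}. An element $(Q,M) \in \A_G$ with $Q = Q^{\dot t\beta}_{\dot s\alpha}$ acting on $\dot s$ and $M = M^{t\sJ}_{s\sI}$ acting on $s$ (the labelling \eqref{eq:49bis}): requiring triviality on $s$ kills nothing of $Q$ but forces $M$ to be a scalar on the $s$ block, i.e. $M \in \bb I_2 \otimes M_4(\complex)$; requiring triviality on $\dot s$ forces $Q \in \bb I_2 \otimes M_2(\bb H) = M_2(\bb H)$. Collecting, the largest subalgebra with bounded $\slashed D$-commutator is $\cinf \otimes (M_2(\bb H) \oplus M_4(\complex))$, as claimed. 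I expect the main technical point — and the only place requiring care — to be the "only if" direction of $[\gamma^\mu,A]\nabla_\mu^S$ bounded $\Rightarrow [\gamma^\mu,A]=0$: one must rule out accidental cancellations among the four terms $[\gamma^\mu,A]\nabla_\mu^S$ for different $\mu$. This is handled by noting the $\nabla_\mu^S$ for different $\mu$ have "independent" unbounded parts (their leading symbols $\xi_\mu$ are linearly independent covectors), so a bounded sum forces each coefficient $[\gamma^\mu,A]$ to vanish individually; equivalently one can test against highly oscillatory spinors concentrated in a single momentum direction, à la the usual argument that a first-order differential operator is bounded on $L^2$ only if its principal symbol vanishes.
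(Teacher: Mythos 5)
Your overall strategy is the same as the paper's: split $[\slashed D, A]$ into the derivative term $-i\gamma^\mu(\nabla^S_\mu A)$, which is bounded, and the zeroth-order term $-i[\gamma^\mu,A]\nabla^S_\mu$, and then argue that boundedness forces $[\gamma^\mu,A]=0$ for every $\mu$ (your symbol/linear-independence argument for the ``only if'' direction is in fact more explicit than the paper's one-line assertion). The genuine gap is in the algebraic step that follows. You claim that the $\gamma^\mu$ act non-trivially only on the chiral index $s$ and that ``$\slashed D$ is $\delta$ on the $\dot s$ index, so $A$ is unconstrained on $\dot s$''. This is false: by \eqref{eq:3} and \eqref{eq:1}, the blocks $\sigma^\mu,\tilde\sigma^\mu$ sitting inside $\gamma^\mu$ are, for $\mu=1,2,3$, non-trivial $2\times 2$ matrices in the $\dot s\dot t$ indices (only $\sigma^0=\bb I_2$ is trivial there) --- indeed your own remark that $\{\bb I_2,-i\sigma_i\}$ spans $M_2(\bb C)$ pertains to the $\dot s$ factor, not to $s$. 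The four Dirac matrices generate the full complex Clifford algebra of $\bb R^4$, i.e.\ all of $M_4(\bb C)$ acting on the $s\dot s$ spinor space, so their commutant consists only of scalar multiples of $\delta_{s\dot s}^{t\dot t}$; this is exactly the step the paper invokes (``the only matrices that commute with all the Dirac matrices are the multiples of the identity''), and it constrains \emph{both} $s$ and $\dot s$, as the statement of the proposition requires. Your commutant computation based on $M_2(\bb C)\otimes\bb I$ constrains only the $s$ factor, and it leaves your argument internally inconsistent: in the second paragraph you conclude that $A$ is unconstrained on $\dot s$, yet in the last paragraph you invoke ``triviality on $\dot s$'' to reduce $Q$ (labelled as in \eqref{eq:49bis}) from $M_4(\bb H)$ to $M_2(\bb H)$. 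Taken at face value, the unconstrained-$\dot s$ reading would yield $M_4(\bb H)\oplus M_4(\bb C)$, not the claimed $M_2(\bb H)\oplus M_4(\bb C)$, and it would contradict the first assertion of the proposition itself.

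The repair is short and is what the paper does: from $[\gamma^\mu,A]=0$ for all $\mu$ and triviality of the commutant of the Clifford algebra, deduce $A=\lambda\,\delta_{s\dot s}^{t\dot t}A_{\sD\sJ\alpha}^{\sC\sI\beta}$, i.e.\ $\mathsf A_F$ acts as the identity on both spinor indices; intersecting this with the representation \eqref{eq:139} (with either labelling of the spinorial blocks) then reduces $Q$ to $M_2(\bb H)$ and $M$ to $M_4(\bb C)$, giving $\cinf\otimes(M_2(\bb H)\oplus M_4(\bb C))$ as the maximal subalgebra. With that correction your proof matches the paper's; without it, the central claim of the proposition is not established.
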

 \begin{proof}
   In tensorial notation, a generic element of ${\mathsf A}_F$ is $A=A_{\sD
     s\sJ\dot s\alpha}^{\sC t\sI \dot t\beta}$.  For any such $A$,
by \eqref{eq:3-bis} and omitting  the indices $st\dot s\dot t$ for
the Dirac matrices, one gets
   \begin{equation}
     \label{eq:118}
     [\slashed D, A] =
     [\delta^{\sC\sI\beta}_{\sD\sJ\alpha}\slashed\del ,
     A_{\sD s\sJ\dot s\alpha}^{\sC t\sI \dot t\beta}] =
     -i[\delta^{\sC\sI\beta}_{\sD \sJ \alpha}\gamma^\mu,
     A_{\sD s\sJ\dot s\alpha}^{\sC t\sI \dot t\beta} ]\nabla_\mu^S - i\gamma^\mu[
    \nabla_\mu^S, A_{\sD s\sJ\dot s\alpha}^{\sC t\sI \dot t\beta}].
   \end{equation}
This is bounded iff the first term in the r.h.s. is zero. The only matrices that commute with all the Dirac matrices are the
multiple of the identity, hence $[\slashed D, A]$ is bounded iff $A= \lambda \delta_{s\dot s}^{t\dot t} A^{\sC
  \sI\beta}_{\sD\sJ\alpha}$ for some scalar $\lambda$. This means
$Q^{j\beta}_{i\alpha}=\lambda\delta_{s\dot s}^{t\dot t}
Q^{\alpha}_{\beta}\in M_2(\bb H)$ and
$M_{i\sI}^{j\sJ} =\lambda\delta_{s\dot s}^{t\dot t} M_\sI^\sJ\in M_4(\bb C)$ in \eqref{eq:139}.
\end{proof}

In other terms, in order to build a spectral triple with the
grand algebra (that is $a=4$ in \eqref{genericalgebra}), one has to consider its subalgebra given by
$a=2$, that acts without mixing spinorial and internal indices. This
of course is not interesting
from our perspective, since the aim of the grand  algebra is precisely to mix spinorial with internal degrees
of freedom. A solution is to consider instead \emph{twisted spectral
triples}. They have been introduced in \cite{Connes:1938fk} precisely
to solve the problem of the unboundedness of the commutator, which
may occur in very elementary situations such as the lift to spinors of a
conformal transformation. Using twists to make $[\slashed D, A]$  bounded has been
suggested independently to the second author by J.-C. Wallet, and to
the first author by W. v. Suijlekom, who also brought our attention to ref.\cite{Connes:1938fk}.
\bigskip

\section{Twisting the standard model}\setcounter{equation}{0}
\label{section:twisting}

A \emph{twisted spectral triple}
is a triple $(\A, {\cal H}, D)$ where $\A$ is an involutive algebra
acting on a Hilbert space $\cal H$ and $D$ a selfadjoint operator on
$\cal H$ with compact resolvent,  together with an automorphism $\rho$ of $\A$ such that
\begin{equation}
  \label{eq:84}
  [D, a]_\rho = Da -\rho(a)D
\end{equation}
is bounded for any $a\in \A$. It is graded if, in addition, there is
a selfadjoint  operator $\Gamma$ of square $\bb I$ which commutes the algebra and
anticommutes with $D$.

As far as we know, the other conditions satisfied by a spectral triple
have not been adapted to the twisted case yet. As long as the
commutator between the algebra and the Dirac operator is not involved,
 one can keep the definitions
of an ordinary spectral triple,  for instance the order-zero
condition. In the $1^{\text{st}}$-order
condition \eqref{eq:121} it is natural to
substitute $[D,a]$ with the twisted commutator $[D,
a]_\rho$. The question is whether to twist the commutator with $J b^*
J^{-1}$ as well. As explained in \cite[Prop. 3.4]{Connes:1938fk}, the set $\Omega^1_D $ of twisted $1$-forms, that is
all the operators of the form
\begin{equation}
  \label{eq:167}
 \mathbb A = \sum_i b^i [D, a_i]_\rho,
\end{equation}
is a $\A$-bimodule for the left and right  actions
\begin{equation}
  \label{eq:182}
  a\cdot\omega\cdot b := \rho(a) \omega b \quad \forall a,b\in\A,
  \omega\in \Omega^1_D.
\end{equation}
Therefore it is natural to
twist the commutator with $JbJ^{-1}$. As pointed out below
proposition \ref{lemmasigma},  this choice
is also the one which is efficient for our purposes. Furthermore we
assume that $\rho$ is a $*$-automorphism that commutes with the real
structure $J$ (more on that matter is discussed in the conclusion), which permits to define the twisted version of the
$1^\text{st}$-order condition as follows.
\begin{df}
\label{deftwistedfirstorder}
 A twisted spectral triple $({\cal A}, {\cal H}, D, \rho)$ with real
 structure $J$ satisfies the twisted $1^\text{st}$-order condition if and
 only if
 \begin{equation}
   \label{eq:112}
  [ [D, a]_\rho,\, JbJ^{-1}]_\rho =    [D, a]_\rho\, JbJ^{-1} - J
  \rho(b) J^{-1}  [D, a]_\rho =  0 \quad\quad \forall a, b\in \A.
 \end{equation}
\end{df}
\noindent When $\rho=\rho^{-1}$ (which will be the case here), requiring that
$\rho$ is a $^*$-automorphism is equivalent to the unitary condition
(3.4) of \cite{Connes:1938fk}.
 
\subsection{Representation}
\label{subsec:representation}

For reasons explained in \S~\ref{secdisc}, it
is convenient to work with the other natural
representation of the grand algebra than the one used in
\cite{Devastato:2013fk}. Namely instead of
\eqref{eq:49bis} one asks that quaternions carry the
chiral index $s$ of spinors while the complex matrices carry the (anti)-particle
index:
\begin{equation}
  \label{eq:49}
  Q= Q_{ s \alpha}^{t\beta}, \quad M = M_{\dot s\sI}^{\dot t\sJ}.
\end{equation}
 Explicitly, the representation of the grand
algebra $\A_G$ is 
  \begin{equation}
    \label{eq:16}
Q = \left(
\begin{array}{cc}
 Q^{r}_{r} &  Q^{l}_{r}\\
Q^{r}_{l} & Q^{l}_{l}
\end{array}\right)_{s t}\in M_4(\bb H),\quad
  M
= \left(
\begin{array}{cc}
M_{\dot 0}^{\dot 0} & M_{\dot 0}^{\dot 1}\\
M_{\dot 1}^{\dot 0} &M_{\dot 1}^{\dot 1}
\end{array}\right)_{\dot s \dot t} \in M_8(\bb C) , 
  \end{equation}
where for any
$s,
t\in\left\{l,r\right\}$ and $\dot s, \dot t\in\left\{\dot 0, \dot 1\right\}$ one defines
\begin{equation*}
Q_s^t =  \left(\begin{array}{cccc}
 Q_{ s a}^{t a}  & Q_{s a}^{t b}  & Q_{s a}^{t c}& Q_{s a}^{ t d}\\
  Q_{ s b}^{t a}  & Q_{ s b}^{t b}  & Q_{s b}^{t c}& Q_{s b}^{t d}\\
 Q_{s c}^{t a}  & Q_{s c}^{t b}  & Q_{s c}^{t c}& Q_{s c}^{t d}\\
 Q_{s d}^{t a}  & Q_{s d}^{t b}  & Q_{s d}^{t c}& Q_{s d}^{t d}
\end{array}\right)_{\alpha\beta}\!\!\!\!\!\!\!\!\in M_2(\bb H),\quad
  M_{\dot s}^{\dot t} = \left(\begin{array}{cccc}
 M_{\dot s 0}^{\dot t 0}  & M_{\dot s 0}^{\dot t 1}  & M_{\dot s 0}^{\dot t 2}& M_{\dot s 0}^{\dot t 3}\\
  M_{\dot s 1}^{\dot t 0}  & M_{\dot s 1}^{\dot t 1}  & M_{\dot s 1}^{\dot t 2}& M_{\dot s 1}^{\dot t 3}\\
 M_{\dot s 2}^{\dot t 0}  & M_{\dot s 2}^{\dot t 1}  & M_{\dot s 2}^{\dot t 2}& M_{\dot s 2}^{\dot t 3}\\
 M_{\dot s 3}^{\dot t 0}  & M_{\dot s 3}^{\dot t 1}  & M_{\dot s 3}^{\dot t 2}& M_{\dot s 3}^{\dot t 3}
\end{array}\right)_{\sI\sJ}\!\!\!\!\in \! M_4(\bb C).
\end{equation*}
Here we use $a,b,c,d$ to denote the value of the flavor
index $\alpha$.  On the remaining indices, $Q$ and $M$ act trivially,
that is as the
identity operator. The representation of  $A=(Q,M)\in{\mathcal A}_G$
on $\mathsf H_F$ is thus
\begin{equation}
A^{\sC t\sI \dot t \beta}_{\sD  s \sJ \dot s \alpha} = \left(
\delta^{\sC \dot t\sI}_{0  \dot s\sJ} \,Q^{ t \beta}_{ s \alpha}
+
\delta^\sC_1 M^{\dot t\sI}_{\dot s\sJ}\delta_{s\alpha}^{t\beta}\right) 
=\left(
\begin{array}{cc} \delta^{\dot t\sI}_{\dot s\sJ} \,Q^{t \beta}_{s
    \alpha} & 0_{64}\\ 0_{64} &  M^{\dot t\sI}_{\dot s\sJ}\, \delta_{s\alpha}^{t\beta}\end{array}\right)_{\sC\sD}. 
\label{repa4}
\end{equation}

One easily checks the order-zero condition \eqref{eq:138}: with
$A=(R, N)\in {\cal A}_G$, a generic element of the opposite algebra is
\begin{equation}
  \label{eq:29}
  J A J^{-1}= -J  A J=  \left(\begin{array}{cc}
-\delta_{s\alpha}^{t\beta} \, (\tau\bar
   N\tau)_{\dot s\sJ}^{\dot t\sI}& 0_{64}\\
0_{64} & \delta_{\dot s\sJ}^{\dot t\sI}\, (\eta \bar R\eta)_{s\alpha}^{ t \beta}
\end{array}\right)_{\sC\sD}
\end{equation}
where the bar denotes the complex conjugate and we used
\begin{equation}
   \label{eq:31}
 {\cal J} R {\cal J} := (\tau^2)_{\dot s}^{\dot t} \, (\eta \bar
 R\eta)_{s\alpha}^{ t \beta} = - \delta_{\dot s}^{\dot t}\, (\eta \bar R\eta)_{s\alpha}^{ t \beta} ,\; \quad
 {\cal J} N {\cal J} : = (\eta^2)_s^t\,(\tau\bar
   N\tau)_{\dot s\sJ}^{\dot t\sI}= \delta_s^t (\tau\bar
   N\tau)_{\dot s\sJ}^{\dot t\sI}.
 \end{equation}
Obviously \eqref{repa4} commutes with \eqref{eq:29}.

\begin{lemma}
The biggest
  subalgebra of $\cinf \otimes\A_G$ that satisfies the grading condition 
  \eqref{eq:136} and has bounded commutator with $\slashed D$ is the
  left-right algebra  $\A_{LR}$ given in \eqref{repa2}. 
\end{lemma}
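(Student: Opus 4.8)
The plan is to combine the two constraints separately and then intersect them. First I would impose the grading condition $[A,\Gamma]=0$. Using the tensorial expression for $\Gamma$ in \eqref{tensorJgamma}, namely $\Gamma$ acts as $\eta_s^t$ on the chiral index, as $\delta_{\dot s}^{\dot t}$ on the (anti)particle index, and as $\eta^\sC_\sD\,\delta^\sI_\sJ\,\eta_\alpha^\beta$ on the internal indices, I would write out $[A,\Gamma]=0$ for a generic $A=(Q,M)$ in the representation \eqref{repa4}. Because the quaternionic part $Q$ now carries the chiral index $s$ (this is the new representation \eqref{eq:49}), requiring $Q$ to commute with $\eta_s^t$ forces $Q$ to be block-diagonal in $s$, i.e. $Q^l_r = Q^r_l = 0$, so $Q$ reduces from $M_4(\bb H)$ to $M_2(\bb H)_L \oplus M_2(\bb H)_R$ (here the $L,R$ subscripts are the values $l,r$ of $s$). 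For the complex part $M$, which carries the (anti)particle index $\dot s$, the $\delta_{\dot s}^{\dot t}$ appearing in $\Gamma$'s spin-sector is trivial, but $\Gamma$ also acts on the flavor index $\alpha$ through $\eta_\alpha^\beta$ and $M$ acts as $\delta_{s\alpha}^{t\beta}$ there — so one has to check carefully that the flavor grading $\eta_\alpha^\beta$ is compatible, which it is since $M$ is the identity on $\alpha$; hence $M$ is unconstrained by the grading and stays $M_8(\bb C)$. So the grading condition alone gives $\A'_G = M_2(\bb H)_L \oplus M_2(\bb H)_R \oplus M_8(\bb C)$, matching the first arrow of the framed diagram \eqref{eq:50} (up to the relabeling forced by the new representation).

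Next I would impose boundedness of $[\slashed D, A]_{\phantom{\rho}}$ — the ordinary, untwisted commutator here — using Proposition \ref{nogo}. That proposition says $[\slashed D, A]$ is bounded iff $A$ acts as the identity on the spinor indices $s\dot s$. Now the crucial point: in the new representation \eqref{eq:49} the quaternions $Q$ carry the \emph{chiral} spinor index $s$, while the complex matrices $M$ carry the \emph{(anti)particle} spinor index $\dot s$. So demanding triviality on \emph{both} $s$ and $\dot s$ forces $Q$ to be scalar in $s$ (reducing $M_2(\bb H)_L\oplus M_2(\bb H)_R$, already block-diagonal after grading, down to $\bb H_L \oplus \bb H_R$ — one quaternion for each value of $s$, acting as the identity on $\dot s$ automatically since $Q$ doesn't carry $\dot s$), and forces $M$ to be scalar in $\dot s$ (reducing $M_8(\bb C)$ down to $M_4(\bb C)$, acting trivially on $s$ automatically). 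Combining, the biggest subalgebra surviving both conditions is $\bb H_L \oplus \bb H_R \oplus M_4(\bb C) = \A_{LR}$ of \eqref{repa2}. I would then note maximality: any larger subalgebra would have to contain an element nontrivial on either $s$ or $\dot s$, which by Prop. \ref{nogo} breaks boundedness, or nontrivial in the $l/r$ block-off-diagonal of $Q$, which breaks the grading.

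The main obstacle — or rather the one place where care is genuinely needed rather than bookkeeping — is checking that the grading condition really leaves $M$ untouched and really block-diagonalizes $Q$ in the \emph{chiral} index, given that in this representation $Q$ and $M$ exchange roles relative to \cite{Devastato:2013fk}. Concretely one must verify that $\eta_s^t$ (the chiral part of $\gamma^5$) is what $Q$ sees, so that $[Q,\Gamma]=0 \Leftrightarrow Q^l_r=Q^r_l=0$, and simultaneously that the internal-index action $\eta^\sC_\sD\,\delta^\sI_\sJ\,\eta_\alpha^\beta$ of $\gamma_F$ commutes with \eqref{repa4} for \emph{any} $Q,M$ — this is the statement that $\A_G$ already satisfies the grading in its internal action by construction, so the only genuine constraint comes from the spin sector. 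Once that is pinned down, intersecting with Prop. \ref{nogo} is immediate. I would present the argument as: (i) grading $\Rightarrow$ $Q$ block-diagonal in $s$ and $M$ unrestricted; (ii) Prop. \ref{nogo} $\Rightarrow$ identity on $s\dot s$, i.e. $Q$ scalar in $s$, $M$ scalar in $\dot s$; (iii) conclude $\A_{LR}$ and argue maximality.
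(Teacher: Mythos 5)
There is a genuine gap in your step (i), and it propagates to the maximality argument. You assert that the internal action $\eta^\sC_\sD\,\delta^\sI_\sJ\,\eta_\alpha^\beta$ of $\gamma_F$ commutes with the representation \eqref{repa4} for \emph{any} $(Q,M)\in\A_G$, so that the only constraint from the grading comes from the chiral factor $\eta_s^t$. This is false in the representation \eqref{eq:49}: the entries of each block $Q_s^t$ are labelled by the flavor index $\alpha$ (see \eqref{eq:16}, \eqref{eq:139}), so a generic $Q$ does \emph{not} commute with $\eta_\alpha^\beta$. The grading condition on the quaternionic sector is $[\eta_s^t\eta_\alpha^\beta,\,Q_{s\alpha}^{t\beta}]=0$, which not only suppresses the off-diagonal chiral blocks $Q^l_r,Q^r_l$ but also splits the surviving blocks along the flavor grading, $Q^r_r=\mathrm{diag}(q^r_R,q^r_L)$, $Q^l_l=\mathrm{diag}(q^l_R,q^l_L)$ as in \eqref{eq:58}--\eqref{eq:60}; the paper's intermediate algebra is therefore ${\cal B}_{LR}$ of \eqref{eq:62}, not your $M_2(\bb H)_L\oplus M_2(\bb H)_R\oplus M_8(\bb C)$. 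In particular the subscripts $L,R$ of $\bb H_L\oplus\bb H_R$ in $\A_{LR}$ \eqref{repa2} are flavor (internal) labels coming from $\eta_\alpha^\beta$, not the values $l,r$ of the spinor index $s$ as you state. Your appeal to the first arrow of \eqref{eq:50} is not a mere relabeling either: in the representation \eqref{eq:49bis} the grading bites on the flavor index of $Q$ and the chiral index of $M$, whereas in \eqref{eq:49} it bites twice on $Q$ (through $\eta_s$ \emph{and} $\eta_\alpha$) and not at all on $M$.

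Because of this, your steps (ii)--(iii) do not produce $\A_{LR}$. Once Prop.~\ref{nogo} forces triviality on the spinor indices, the two chiral blocks of your $M_2(\bb H)_L\oplus M_2(\bb H)_R$ must be \emph{equal}, leaving a single copy of $M_2(\bb H)$ acting on flavor; the phrase ``one quaternion for each value of $s$'' is inconsistent with the very triviality on $s$ you have just imposed, and without the flavor part of the grading nothing reduces $M_2(\bb H)$ to $\bb H_R\oplus\bb H_L$. Concretely, $M_2(\bb H)\oplus M_4(\bb C)$ is trivial on $s\dot s$ and block-diagonal in $s$, so it passes both tests you list in your maximality argument while strictly containing $\bb H_L\oplus\bb H_R\oplus M_4(\bb C)$ --- i.e.\ your argument as written would ``prove'' a larger answer than the lemma's. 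The repair is the paper's bookkeeping: grading alone gives ${\cal B}_{LR}=(\bb H^l_L\oplus\bb H^r_L\oplus\bb H^l_R\oplus\bb H^r_R)\oplus M_8(\bb C)$ (your treatment of $M$, which carries only $\dot s,\sI$ and hence is untouched by $\Gamma$, is correct); boundedness then imposes $Q\propto\delta_s^t$ and $M\propto\delta_{\dot s}^{\dot t}$, identifying the two chiral copies as in \eqref{eq:94} and reducing $M_8(\bb C)\to M_4(\bb C)$ as in \eqref{reduc}. Equivalently, once $Q\propto\delta_s^t$ the grading reduces to $[\eta_\alpha^\beta,q]=0$, which is exactly what yields $q=\mathrm{diag}(q_R,q_L)\in\bb H_R\oplus\bb H_L$ and hence $\A_{LR}$.
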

\begin{proof}
By \eqref{tensorJgamma}, for
the quaternion sector $[\Gamma, A]=0$ amounts to asking 
 $[\eta_s^t\eta_\alpha^\beta,Q_{s\alpha}^{t\beta}]=0$. This  imposes 
\begin{equation}
  \label{eq:58}
  Q = \left( 
    \begin{array}{cc}
      Q^r_r & 0_4 \\ 
      0_4 & Q^l_l
    \end{array}\right)_{st} 
\end{equation}
where 
\begin{equation}
   \label{eq:60}
   Q^{r}_{r}=\left(
     \begin{array}{cc}
       {q^r_R}& 0_2 \\
       0_2 & {q^r_L}
     \end{array}
  \right)_{\alpha\beta}, \; Q^{l}_{l}=\left(
     \begin{array}{cc}
       {q^l_R}& 0_2 \\
       0_2 & {q^l_L}
     \end{array}
  \right)_{\alpha\beta} \quad
\text{ with } q^r_R, q^r_L, q^l_R, q^l_L\in \bb H.
\end{equation}
For matrices, one asks $[\delta^{\dot t \sI}_{\dot s
  \sJ}, M^{\dot t \sI}_{\dot s \sJ}]=0$ which is trivially
satisfied. So the grading condition $[\Gamma,
A]=0$ imposes the reduction of $\A_G$ to
\begin{equation}
  \label{eq:62}
{\cal  B}_{LR} :=(\bb H^l_L \oplus \bb H^r_L \oplus \bb H^l_R \oplus \bb H^r_R) \oplus
M_8(\bb C). 
\end{equation}

For $A=(Q, M)\in \cinf\otimes{\cal B}_{LR}$, the boundedness of the commutator
{\footnote{\label{footnotenotations}To lighten notation, we omit the trivial indices in the
    product (hence in the commutators) of operators. From
    \eqref{eq:49} one knows that $Q$ carries the indices
    $s\alpha$ while $\gamma^\mu$ carries $s\dot s$, hence
    $[\slashed \del, Q]$ carries indices $s\dot s\alpha$ and should be
    written $[\delta_\alpha^\beta\slashed\del,  \delta_{\dot s}^{\dot t} Q]$.
As well, $[\slashed \del, M]$ carries indices $s\dot s I$ and holds for $[\delta_I^J\slashed \del, \delta_{s}^{t} M]$.}}
\begin{equation}
\label{eq:41}
   [\slashed D, A] =\left(
\begin{array}{cc}
 \delta^I_J \,[\slashed\del,  Q]&0_{64} \\ 
 0_{64} &\delta_\alpha^\beta\, [\slashed \del, M]
\end{array}
\right)_{\partind{CD}}
 \end{equation}
means that 
\begin{equation}
[\ds, Q] = -i\gamma^\mu (\nabla_\mu^S Q)  -i [\gamma^\mu, Q]\nabla_\mu^S \quad\text{ and }\quad
[\ds, M] = -i\gamma^\mu (\nabla_\mu^S M) -i[\gamma^\mu, M]\nabla^S_\mu
\label{eq:137}
\end{equation}
are
bounded, where $(\nabla_\mu^s Q):= (\del_\mu Q) + [\omega_\mu^S, Q]$ and
similarly for  $(\nabla_\mu^s M)$. This is obtained if and only if $Q$ and $M$ commute with all the Dirac
matrices, i.e. are proportional to $\delta_{s\dot s }^{t\dot t}$. For $Q$ this means 
$Q_r^r = Q_l^l$ in \eqref{eq:58}, hence the reductions
\begin{equation}
\bb H^r_R \oplus \bb H^l_R\to \bb H_R, \quad \bb H^r_L \oplus \bb H^l_L \to
\bb H_L.
\label{eq:94}
\end{equation}
For $M$, this means that all the components $M_{\dot s}^{\dot t}$ in
\eqref{eq:16} are equal, that is the reduction
\begin{equation}
\label{reduc}
M_8(\bb
C)\to M_4(\bb C).
\end{equation}
Therefore ${\cal B}_{LR}$ is reduced to $\A_{LR}$, acting diagonally on spinors.
\end{proof}
\noindent This lemma is nothing but a restatement of  Prop. \ref{nogo} in
the peculiar representation \eqref{repa4} and taking into account the
grading condition. Nevertheless, it is useful to have it explicitly, in
order to understand how to get rid of the unboundedness of the commutator.
It is also worth stressing the difference with the representation
\eqref{eq:49bis}, for which the grading breaks both matrices and
quaternions and reduces $\A_G$ to $\A'_G$. Here only quaternions are
broken by the grading.

To cure the unboundedness of the  commutator,  the idea we propose 
is the following: impose the reduction
\eqref{reduc} by hand, and deal with the unboundedness of $[\ds, Q]$
thanks to a twist.  This is a ``middle term
solution'': imposing by hand both reductions \eqref{reduc} and \eqref{eq:94}
is not interesting from the grand algebra point of view, since it brings
us back to an almost commutative geometry where spinorial and internal
indices are not mixed; solving both the unboundedness of $[\ds, Q]$ and
$[\ds, M]$ by a twist yields some complications discussed in \S\ref{secdisc}. The
remarkable point is that this middle term solution is sufficient to obtain the
$\sigma$-field by a fluctuation that respects the twisted first-order
condition of definition \ref{deftwistedfirstorder}.

\subsection{The twist and the first-order condition
  for the free Dirac operator}
\label{subsec:twistedfirtsfree}

Imposing \eqref{reduc} on the grand algebra $\A_G$ reduced
by the grading to ${\cal B}_{LR}$ yields
\begin{equation}
  \label{eq:87}
  {\cal B}' := (\bb H^l_L \oplus \bb H^r_L \oplus \bb H^l_R \oplus \bb H^r_R) \oplus
M_4(\bb C).
\end{equation}
An element
$A=(Q,M)$ of $\cal B'$ is given by \eqref{repa4} where $Q$ is as in \eqref{eq:58} while 
$M$ in \eqref{eq:49} is proportional to $\delta_{\dot s}^{\dot t}$:
\begin{equation}
  \label{eq:92}
  M= \delta_{\dot s}^{\dot t} M^\sI_\sJ \in M_4(\bb C).
\end{equation}
The algebra $\cal B'$ contains the algebra of
the standard model $\A_{sm}$, and still has a part (the quaternion)
that acts in a non-trivial way on the spin degrees of freedom. In this sense $\cal B'$ is still from the
grand algebra side, even if it is ``not so grand''.

\medskip Let $\rho$ be the automorphism of $(\bb H^l_L \oplus \bb H^r_L \oplus \bb H^l_R \oplus \bb H^r_R)$  that
exchanges $Q^r_r$ and $Q^l_l$ in \eqref{eq:58},  that is the exchange
\begin{equation}
\bb H^r_R \leftrightarrow \bb H^l_R,\quad \bb H^r_L
\leftrightarrow\bb H^l_L.\label{eq:12402}
\end{equation}
This means in components
 \begin{equation}
 \label{eq:5800}
 \rho\left(
   \left( 
     \begin{array}{cc}
       Q^r_r & 0_4 \\ 
       0_4 & Q^l_l
     \end{array}\right)_{st}\right) 
 = \left( 
   \begin{array}{cc}
     Q^l_l & 0_4 \\ 
     0_4 & Q^r_r
   \end{array}
 \right)_{st}.
 \end{equation}

\begin{lemma} 
\label{lemautomorfismo} Denote by the same letter the extension of $\rho$ to
  $\cinf\otimes (\bb H^l_L \oplus \bb H^r_L \oplus \bb H^l_R \oplus
  \bb H^r_R)$.
For any $\mu$ one has
\begin{equation}
  \label{eq:130}
  \gamma^\mu Q   = \rho(Q) \gamma^\mu,\quad \gamma^\mu\rho(Q)= Q\gamma^\mu.
\end{equation}
Thus
\begin{equation}
 [\ds,  Q]_\rho = -i\gamma^\mu (\nabla^S_\mu Q).
\label{eq:78}
 \end{equation}
\end{lemma}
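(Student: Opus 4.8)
The plan is to establish the two conjugation relations in \eqref{eq:130} by a direct block computation in the chiral index $st$, and then to deduce \eqref{eq:78} by a Leibniz manipulation in which the twist exactly compensates the obstructing first-order term $-i[\gamma^\mu,Q]\nabla_\mu^S$ of the ordinary commutator \eqref{eq:137}.

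First I would record the block structures that matter. By \eqref{eq:58} the element $Q$ is block-diagonal in the chiral index, with the two diagonal blocks $Q^r_r,Q^l_l\in M_2(\bb H)$, while by \eqref{eq:3} the Dirac matrix $\gamma^\mu$ is block-\emph{off}-diagonal in $st$, with entries $\sigma^\mu$ and $\tilde\sigma^\mu$. The crucial observation is that $Q^r_r$ and $Q^l_l$ act only on the flavour index $\alpha$ (they are the $M_2(\bb H)$ matrices in $\alpha\beta$ displayed in \eqref{eq:16}), whereas $\sigma^\mu$ and $\tilde\sigma^\mu$ act only on the particle index $\dot s$; these are disjoint tensor factors of $\mathsf H_F$, so $Q^r_r$ and $Q^l_l$ commute with every $\sigma^\mu$ and every $\tilde\sigma^\mu$, the case $\mu=0$ ($\sigma^0=\tilde\sigma^0=\bb I_2$) included. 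Hence no case distinction on $\mu$ is required.

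Multiplying out the $2\times 2$ blocks in $st$, and using that $\rho$ interchanges $Q^r_r$ and $Q^l_l$ by \eqref{eq:5800}, gives
\begin{equation*}
\gamma^\mu Q=\left(\begin{array}{cc}0_2 & \sigma^\mu Q^l_l\\ \tilde\sigma^\mu Q^r_r & 0_2\end{array}\right)_{st},
\qquad
\rho(Q)\,\gamma^\mu=\left(\begin{array}{cc}0_2 & Q^l_l\,\sigma^\mu\\ Q^r_r\,\tilde\sigma^\mu & 0_2\end{array}\right)_{st},
\end{equation*}
and the two matrices coincide by the commutation just noted; this is the first identity $\gamma^\mu Q=\rho(Q)\gamma^\mu$. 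Since $\rho(Q)$ is again of the form \eqref{eq:58} and $\rho^2=\mathrm{id}$, applying this identity to $\rho(Q)$ in place of $Q$ yields $\gamma^\mu\rho(Q)=\rho(\rho(Q))\gamma^\mu=Q\gamma^\mu$, the second identity in \eqref{eq:130}. Both relations are pointwise on $\M$, so they extend verbatim to $Q\in\cinf\otimes(\bb H^l_L\oplus\bb H^r_L\oplus\bb H^l_R\oplus\bb H^r_R)$.

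Finally, I would write $\ds=-i\gamma^\mu\nabla_\mu^S$ (trivial indices suppressed throughout, as in \eqref{eq:137}) and use the Leibniz rule $\nabla_\mu^S Q=(\nabla_\mu^S Q)+Q\,\nabla_\mu^S$, where $(\nabla_\mu^S Q)=\del_\mu Q+[\omega_\mu^S,Q]$ is a bounded multiplication operator. Then
\begin{equation*}
\ds\,Q=-i\gamma^\mu(\nabla_\mu^S Q)-i\gamma^\mu Q\,\nabla_\mu^S
      =-i\gamma^\mu(\nabla_\mu^S Q)+\rho(Q)\,\ds ,
\end{equation*}
where the last step inserts $\gamma^\mu Q=\rho(Q)\gamma^\mu$; subtracting $\rho(Q)\ds$ gives $[\ds,Q]_\rho=\ds Q-\rho(Q)\ds=-i\gamma^\mu(\nabla_\mu^S Q)$, which is \eqref{eq:78}. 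I do not expect a genuine obstacle here: the only delicate point is the index bookkeeping — tracking precisely which of the factors $s$, $\dot s$, $\alpha$, $\sC$, $\sI$ each operator acts on — and it is exactly the choice of representation \eqref{eq:49} (quaternions carrying the chiral index $s$, rather than the particle index as in \eqref{eq:49bis}) that makes the chiral flip $\rho$ the automorphism turning $[\ds,Q]$ into a bounded twisted commutator.
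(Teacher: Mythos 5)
Your proposal is correct and follows essentially the same route as the paper: the block computation in the chiral index $st$, using that $Q^r_r,Q^l_l$ act only on the flavour index and hence commute with $\sigma^\mu,\tilde\sigma^\mu$, plus $\rho^2=\mathrm{id}$ for the second identity, and the twisted Leibniz manipulation for \eqref{eq:78}. The only cosmetic difference is that the paper writes the twisted commutator directly as $[\ds,Q]_\rho=-i\gamma^\mu(\nabla_\mu^SQ)-i[\gamma^\mu,Q]_\rho\nabla_\mu^S$ and invokes \eqref{eq:130} to kill the second term, whereas you expand $\ds Q$ and insert $\gamma^\mu Q=\rho(Q)\gamma^\mu$ before subtracting — the same argument.
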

\begin{proof}
 Writing explicitly the $\delta$'s, one gets
 \begin{align}
\gamma^\mu Q  &= \left(\delta_\alpha^\beta\left( \begin{array}{cc} 0_2 & \sigma^\mu \\
    \tilde\sigma^\mu & 0_2 \end{array}\right)_{st} \right) \left(\left( 
     \begin{array}{cc}
       Q^r_r & 0_4 \\ 
       0_4 & Q^l_l
     \end{array}\right)_{st} \delta_{\dot s}^{\dot t}\right)
=  \left( 
     \begin{array}{cc}
       0_8&\sigma^\mu Q_l^l \\ 
       \bar\sigma^\mu Q^r_r & 0_8
     \end{array}\right)_{st} \\ &=\left(\left( 
     \begin{array}{cc}
       Q^l_l & 0_4 \\ 
       0_4 & Q^r_r
     \end{array}\right)_{st} \delta_{\dot s}^{\dot t}\right)\left(\delta_\alpha^\beta\left( \begin{array}{cc} 0_2 & \sigma^\mu \\
    \tilde\sigma^\mu & 0_2 \end{array}\right)_{st} \right) = \rho(Q)\gamma^\mu.
\end{align}
The second part of \eqref{eq:130} follows because $\rho^2=\mathbb{I}$.
Eq. \eqref{eq:78} comes from 
\begin{equation*}
  \label{eq:79}
  [\ds, Q]_\rho =  -i\gamma^\mu(\nabla_\mu^S Q) - i[\gamma^\mu, Q]_\rho\nabla_\mu^S,
\end{equation*}
where the second term is zero by \eqref{eq:130}.\end{proof}

We still denote by the same
letter  the extension of $\rho$ to  $\cinf\otimes\cal B'$:
\begin{equation}
  \label{eq:91}
  \rho((Q, M)) := ((\rho(Q), M).
\end{equation}
\begin{prop}
\label{twisted-spec-triple}
  $(\cinf\otimes{\cal B}', \HH, \slashed D, \rho)$ together with the grading
  $\Gamma$ in \eqref{graduation} and the real structure $J$ in
  \eqref{eq:46} is a graded twisted spectral
  triple which satisfies the twisted first-order condition of
  definition \ref{deftwistedfirstorder}. 
\end{prop}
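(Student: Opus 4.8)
The plan is to check, one by one, the four assertions packed into the statement: first, that $\rho$ extends to a $*$-automorphism of $\cinf\otimes{\cal B}'$ commuting with the real structure $J$; second, that $(\cinf\otimes{\cal B}',\HH,\Ds,\rho)$ is a twisted spectral triple, i.e.\ $\Ds$ is selfadjoint with compact resolvent and $[\Ds,a]_\rho$ is bounded for every $a$; third, that $\Gamma$ grades it; and fourth, that the twisted $1^{\text{st}}$-order condition \eqref{eq:112} holds. The first three reduce to facts already at hand; the last one carries the real content.

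For the automorphism, the quickest route is to observe that $\rho$ is \emph{inner}, being implemented by $\gamma^0$: since $\gamma^0$ exchanges the two chiral blocks and is trivial on the matrix sector, Lemma~\ref{lemautomorfismo} (with $\mu=0$) gives $\gamma^0(Q,M)(\gamma^0)^{-1}=(\rho(Q),M)$ for $(Q,M)\in{\cal B}'$. As $\gamma^0$ is selfadjoint and unitary, $\rho$ is a $*$-automorphism; and since $\gamma^0$ anticommutes with ${\cal J}$ while commuting with $J_F$, a one-line sign computation shows $\mathrm{Ad}_{\gamma^0}$ commutes with $\mathrm{Ad}_J$, which is the compatibility with $J$ demanded by Def.~\ref{deftwistedfirstorder}. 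For boundedness I would note that $[\Ds,(Q,M)]_\rho$ is block-diagonal in the particle/antiparticle index $\sC$, because $\Ds$ and the representation \eqref{repa4} are: in the particle block it equals $-i\gamma^\mu(\nabla_\mu^S Q)$ by \eqref{eq:78}, the potentially unbounded term $[\gamma^\mu,Q]\nabla_\mu^S$ being annihilated by the twist thanks to \eqref{eq:130}; in the antiparticle block $\rho$ acts trivially, so $[\Ds,M]_\rho=[\Ds,M]=-i\gamma^\mu(\partial_\mu M)$, because the reduction \eqref{eq:92} makes $M$ act as the identity on the spinor indices $s\dot s$ and hence commute with every $\gamma^\mu$ and with the spin connection. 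Both are zero-order multiplication operators by smooth sections, so bounded on the closed manifold $\M$, on which $\Ds$ also inherits a compact resolvent from $\ds$. Finally $\Gamma=\gamma^5\otimes\gamma_F$ is the standard grading: selfadjoint, $\Gamma^2=\bb I$, it anticommutes with $\Ds$ because $\{\gamma^5,\ds\}=0$, and $[\Gamma,a]=0$ on $\cinf\otimes{\cal B}'$ because ${\cal B}'\subset{\cal B}_{LR}$ and the grading condition already holds on ${\cal B}_{LR}$ (see \eqref{eq:62}).

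The heart of the proof is the twisted $1^{\text{st}}$-order condition. Since $\Ds$, every $a=(Q,M)$ and every reflected element $JbJ^{-1}$ (computed in \eqref{eq:29}) are block-diagonal in $\sC$, I would argue separately in the two sectors. In the particle sector $\sC=0$: $[\Ds,a]_\rho=-i\gamma^\mu(\nabla_\mu^S Q)$ carries only the indices $s\dot s\alpha$ and is trivial on the lepto-color index $\sI$, whereas $JbJ^{-1}$ — using $\tau^2=-\bb I$ and the reduction \eqref{eq:92} — collapses to $\delta^{\dot t}_{\dot s}\bar N^\sI_\sJ$, which acts \emph{only} on $\sI$ and is moreover insensitive to $b\mapsto\rho(b)$, as $\rho$ fixes the matrix part $N$; disjointness of the index supports forces the twisted commutator to vanish. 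In the antiparticle sector $\sC=1$: $[\Ds,a]_\rho=-i(\partial_\mu M)\gamma^\mu$ carries the indices $s\dot s\sI$ while $JbJ^{-1}=\eta\bar R\eta=\bar R$ carries $s\alpha$; everything decouples except on the spinor index $s$, where the twisted commutator reduces to $-i(\partial_\mu M)\big(\gamma^\mu\bar R-\overline{\rho(R)}\,\gamma^\mu\big)$, which vanishes because $\bar R$ is again an element of ${\cal B}'$ and Lemma~\ref{lemautomorfismo} gives $\gamma^\mu\bar R=\rho(\bar R)\gamma^\mu=\overline{\rho(R)}\,\gamma^\mu$. This establishes \eqref{eq:112}.

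The only genuinely delicate point — and the sole place where care is needed — is the index bookkeeping: one must track, in each $\sC$-sector, exactly which of the indices $s,\dot s,\sI,\alpha$ each of $[\Ds,a]_\rho$ and $JbJ^{-1}$ acts on, and check that they either sit on complementary indices or, on the $s$-index in the antiparticle sector, are linked precisely by the intertwining relation \eqref{eq:130}. This is where the representation \eqref{repa4}, together with imposing \eqref{eq:92} by hand, does all the work: it makes the Majorana-type reflection of the quaternions land exactly on the spinor index on which $\gamma^\mu$ is twist-covariant, and the reflection of the complex matrices land on $\sI$, away from the spinors. With the representation \eqref{eq:49bis} of \cite{Devastato:2013fk} the indices interlace differently, and the verification has to be redone; this is carried out in \S\ref{subsec:inv}.
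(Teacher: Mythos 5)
Your proof is correct and follows essentially the same route as the paper's: decompose in the particle/antiparticle index, get boundedness from Lemma~\ref{lemautomorfismo} and from $M$ acting trivially on spinor indices, and verify the twisted first-order condition by index bookkeeping — your treatment of the antiparticle block via $\gamma^\mu\bar R=\rho(\bar R)\gamma^\mu$ is just a repackaging of the vanishing block commutators in \eqref{eq:120bis}, while the particle block argument (disjoint supports, $\rho$ fixing $N$) is identical. Your extra verifications (that $\rho=\mathrm{Ad}_{\gamma^0}$ on ${\cal B}'$ and hence is a $*$-automorphism compatible with $J$, compact resolvent, the grading) are correct additions that the paper leaves implicit or assumes.
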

\begin{proof}
Let $A=(Q,M)\in\cinf\otimes {\cal B}'$. The twisted version of \eqref{eq:41} is
\begin{equation}
\label{eq:410}
   [\slashed D, A]_\rho =\left(
\begin{array}{cc}
 \delta^I_J \,[\slashed\del,  Q]_\rho&0_{64} \\ 
 0_{64} &\delta_\alpha^\beta\, [\slashed \del, M]
\end{array}
\right)_{\partind{CD}}. 
 \end{equation}
From \eqref{eq:92} and \eqref{repa4} $M$ commutes with $\gamma^\mu$,
so that the second equation in \eqref{eq:137} reduces to
\begin{equation}
\label{eq:410bis}
[\slashed \del, M]= -i\gamma^\mu (\nabla^S_\mu M),
  \end{equation}
which is a bounded operator. By lemma \ref{lemautomorfismo}, $[\slashed\del,  Q]_\rho = -i\gamma^\mu (\nabla_\mu^S Q)$
is bounded as well. Hence $(\cinf\otimes \cal B)', {\cal H}, \slashed
D, \rho$) together with $\Gamma$ form a graded twisted spectral
triple.  

We now examine the twisted first-order condition \eqref{eq:84}. Let
$B=(R,N)\in C^\infty(\M)\otimes {\cal B}'$. A generic element of the algebra opposite to $C^\infty(\M)\otimes{\cal B}'$ is
\begin{equation}
  \label{eq:299}
  J B J^{-1}= -J  B J=  \left(\begin{array}{cc}
\delta_{s\alpha}^{t\beta} \bar N& 0_{64}\\
0_{64} & \delta_{\dot s \sI}^{\dot t \sJ} \bar R
\end{array}\right)_{\sC\sD}
\end{equation}
where we used \eqref{eq:29} and noticed that for $R$ as in
\eqref{eq:58} and $N$ as in \eqref{eq:92} one has
\begin{equation}
   \label{eq:31-Bis}
 (\eta \bar R\eta)_{s\alpha}^{ t\beta}  = \bar R_{s\alpha}^{ t\beta} ,\; \quad
 (\tau\bar N\tau)_{\dot s\sJ}^{\dot t\sI}= - \bar N_{\dot s\sJ}^{\dot t\sI}.
 \end{equation}
As well, one has
\begin{equation}
  \label{eq:290}
  J \rho(B) J^{-1}= -J  \rho(B)  J=  \left(\begin{array}{cc}
\delta_{s\alpha}^{t\beta} \bar N& 0_{64}\\
0_{64} & \delta_{\dot s \sI}^{\dot t \sJ} \rho(\bar R)
\end{array}\right)_{\sC\sD}.
\end{equation}
Thus $[\slashed D, A]_\rho JBJ^{-1} - J\rho(B)J^{-1}   [\slashed D,
A]_\rho$ is a diagonal matrix with components
\begin{equation}
  \label{eq:119}
  [\delta^I_J \,[\slashed\del,  Q]_\rho,\, \delta_{s\alpha}^{t\beta} \bar
  N],\quad\quad  \delta^\beta_\alpha \,[\slashed\del,  M]\, \delta^{\dot t
    \sI}_{\dot s \sJ}\bar R - \delta^{\dot t
    \sI}_{\dot s \sJ}\rho(\bar R) \,\delta^\beta_\alpha \,[\slashed\del,  M].
\end{equation}
The first term vanishes because the only non-trivial index carries by
$\bar N$ is $\sI\sJ$. The
second term is (omitting the deltas and a global $-i$ factor) 
\begin{align}
\nonumber
&\left(
  \begin{array}{cc} 
    0_8 & \sigma^\mu(\del_\mu M)\\ 
    {\bar \sigma}^\mu(\del_\mu M) & 0_8
  \end{array}
\right)_{st} 
\left(
  \begin{array}{cc} 
    \bar R_r^r & 0_8\\ 
    0_8& \bar R_l^l
  \end{array}
\right)_{st} 
-
\left(
  \begin{array}{cc} 
    {\bar R}_l^l & 0_8\\ 
    0_8& {\bar R}_r^r
\end{array}
\right)_{st} 
\left(
  \begin{array}{cc} 
    0_8 & \sigma^\mu(\del_\mu M)\\ 
    {\bar \sigma}^\mu(\del_\mu M) & 0_8
  \end{array}
\right)_{st}
\\
&  \label{eq:120bis}
=\left(
  \begin{array}{cc} 
  0_8 & [\sigma^\mu(\del_\mu M), \bar{R}_l^l] \\ 
\left[{\bar \sigma^\mu(\del_\mu M)}, \bar{R}_r^r\right] & 0_8 \end{array}\right)_{st} 
\end{align}
which vanishes because $R$'s only non-trivial index is $\alpha\beta$
while $\left[{\bar \sigma^\mu(\del_\mu M)}, \bar{R}_r^r\right]$ is proportional to~$\delta_\alpha^\beta$.
\end{proof}

\subsection{Twisted first-order condition for the Majorana-Dirac operator}
\label{subsec:twistedfirstmaj}

We individuate a subalgebra $\cal B$ of $\cal B'$ such that a twisted
fluctuation of the \emph{Majorana-Dirac operator} $D_M$ in
\eqref{eq:23} by ${\cal B}$ satisfies the twisted first-order
condition. Since we are working with one generation of fermions only,
in \eqref{eq:D_F  Modello Standard-1-1}
the Majorana mass matrix ${\cal M}_R$ is $ \Xi_{\alpha}^{\beta} k_R$, where 
 \begin{equation}
  \label{eq:80}
  \Xi =\left(\begin{array}{cc}
1 &0 \\ 0 &0_3 \end{array}\right)
\end{equation}
denotes the projection on the first component. Therefore
\begin{equation}
\label{majoranadiracoperator}
D_M = \gamma^5 D_R = \eta_s^t\,\delta_{\dot s}^{\dot t} \Xi_{\sJ\alpha}^{\sI\beta} \left(\begin{array}{cc} 0 & k_R \\ \bar k_R& 0\end{array}\right)_{\sC\sD}.
\end{equation}
In this equation the product
$\gamma^5 D_R$ is intended with the convention of the footnote
p.\pageref{footnotenotations}, namely this is the tensorial notation
${\gamma^5}_{s\dot s}^{t\dot t} {D_R}_{\sJ\alpha\sC}^{\beta\sI\sD}$ in which we omit the indices.
In practice, this amounts to omit the tensor product symbol in
$\gamma^5\otimes D_R$, which
makes sense because of our choice of viewing the total Hilbert space
no longer as the tensor product of spinors by $\cal H_F$. These
distinctions may
seem pedantic here, but they will be important later on, when writing the
product $\gamma^\mu X_\mu$ for a vector field $X_\mu$ that 
no longer commutes with the Dirac matrices:  $\gamma^\mu X_\mu$ will
hold for ${\gamma^\mu}_{s\dot s}^{t\dot t}\, {X_\mu}_{\sJ\alpha\sC}^{\beta\sI\sD}$, while $\gamma^\mu\otimes X_\mu$
will no longer make sense.

\begin{prop}
\label{lemmasigma}
A subalgebra of ${\cal B}'$ which satisfies the twisted first-order
condition
\begin{equation}
[[D_M, A]_\rho, JBJ^{-1}]_\rho = 0 
\label{eq:93}
\end{equation}
is 
\begin{equation}
  \label{eq:96}
  {\cal B}:= \bb H^l_L \oplus \bb
  H^r_L\oplus \bb C^l_R \oplus \bb C_R^r \oplus  M_3(\bb C).
\end{equation}
\end{prop}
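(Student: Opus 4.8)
The plan is to compute the twisted double commutator $[[D_M,A]_\rho,\,JBJ^{-1}]_\rho$ for generic $A=(Q,M)$ and $B=(R,N)$ in $\cinf\otimes{\cal B}'$, to read off the constraints that make it vanish, and to check that they amount exactly to the reduction ${\cal B}'\to{\cal B}$; the converse — that ${\cal B}$ in \eqref{eq:96} really does satisfy \eqref{eq:93} — then follows by re-running the same computation with the reductions already in force. First I would compute $[D_M,A]_\rho=D_MA-\rho(A)D_M$. Since $D_M$ in \eqref{majoranadiracoperator} is off-diagonal in the particle index $\sC$ whereas $A$ and $\rho(A)$ are block-diagonal in $\sC$ (the quaternion $Q$ sitting on the $\sC=0$ block, the matrix $M$ on the $\sC=1$ block, cf.\ \eqref{repa4}), the twisted commutator is again off-diagonal in $\sC$, with blocks
$$P:=k_R\bigl(\Delta M-\rho(Q)\Delta\bigr),\qquad \tilde P:=\bar k_R\bigl(\Delta Q-M\Delta\bigr),$$
where $\Delta:=\eta_s^t\,\delta_{\dot s}^{\dot t}\,\Xi^{\sI\beta}_{\sJ\alpha}$ is the ``internal-and-spin'' part of the Majorana operator. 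One already sees here the asymmetry typical of twists: the upper block carries $\rho(Q)$ while the lower one carries $Q$.

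Next I would simplify $P$ and $\tilde P$. The key is that $\Delta$ factorizes as the chiral operator $\eta_s^t$ (equal to $+1$ on the $s=r$ block and $-1$ on the $s=l$ block) times $\delta_{\dot s}^{\dot t}$ times the projector $\Xi$ onto the single ``$\nu_R$'' line (lepto-colour index $\sI=0$ and the first flavour, cf.\ \eqref{eq:80}), while $M=\delta_{\dot s}^{\dot t}M^\sI_\sJ$ is trivial on the chiral and flavour indices and $Q=\mathrm{diag}(Q^r_r,Q^l_l)_{st}$ is trivial on the lepto-colour and $\dot s$ indices. Hence $\Delta M$ merely selects the $\sI=0$ row of $M$, $\rho(Q)\Delta$ selects the action of $Q^l_l$ (resp.\ $Q^r_r$) on the $\nu_R$ line within the $s=r$ (resp.\ $s=l$) block, and $\Delta Q$ selects that of $Q^r_r$ (resp.\ $Q^l_l$); thus $P$ and $\tilde P$ become operators supported on the $\nu_R$ line, each the sum of a chirally-right and a chirally-left piece whose coefficients are the relevant entries of $M$ and of the sub-quaternions $q^s_R$ of \eqref{eq:60}, with the swap $r\leftrightarrow l$ between the two pieces being exactly the action of $\rho$.

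Then I would impose the twisted first-order condition. Writing $JBJ^{-1}$ and $J\rho(B)J^{-1}$ as in \eqref{eq:299}--\eqref{eq:290} — both block-diagonal in $\sC$, with $\bar N$ acting on the $\sC=0$ block through the lepto-colour index and $\bar R$, resp.\ $\rho(\bar R)$, acting on the $\sC=1$ block through the chiral and flavour indices — one finds that $[[D_M,A]_\rho,JBJ^{-1}]_\rho$ is again off-diagonal in $\sC$, so that \eqref{eq:93} is equivalent to the pair of identities
$$P\,\bar R=\bar N\,P\,,\qquad \tilde P\,\bar N=\rho(\bar R)\,\tilde P\,.$$
Inserting the simplified $P,\tilde P$ and demanding these for all $A=(Q,M)$ and $B=(R,N)$ in a common subalgebra ${\cal D}$ of $\cinf\otimes{\cal B}'$, the reduction unfolds step by step: matching the lepto-colour index forces $M$ and $N$ to commute with the projector onto $\sI=0$, i.e.\ $M_4(\bb C)\to\bb C\oplus M_3(\bb C)$; matching the flavour index forces $Q^r_r,Q^l_l$ — equivalently $q^r_R,q^l_R$ — to commute with the projector onto the $\nu_R$ line inside the right doublet, i.e.\ $\bb H^l_R\to\bb C^l_R$ and $\bb H^r_R\to\bb C^r_R$; and matching the remaining $\nu_R$-components then forces the lepton component of $M$ to coincide with one of these residual $\bb C_R$'s — precisely the identification ``$\bb C_R=\bb C^l=\bb C^r$'' of the untwisted reduction \eqref{eq:50}. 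The quaternions $\bb H^l_L,\bb H^r_L$ survive untouched, since $\Xi$ annihilates the left doublet. This is exactly ${\cal B}$ of \eqref{eq:96}, and running the computation backwards with these reductions in place — where now $\Delta M=m\,\Delta$ and $\rho(Q)\Delta$, $\Delta Q$, $\bar N$, $\bar R$, $\rho(\bar R)$ all act on the $\nu_R$ line as the matching scalars — confirms that \eqref{eq:93} then holds. In contrast with the untwisted case \eqref{eq:123}, this does \emph{not} require $[D_M,A]_\rho=0$: the residual blocks are proportional to the difference of the two $\bb C_R$ parameters, in agreement with \eqref{eq:82}.

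The step I expect to be the main obstacle is the index bookkeeping in the middle part: one must keep track of which of the pairs ($s$ chiral, $\dot s$, $\sC$, $\sI$ lepto-colour, $\alpha$ flavour) each of $Q,M,\Xi,\rho,J$ acts on nontrivially, and above all keep straight that $\rho$ swaps \emph{only} the two chiral blocks of the quaternion part, so that $\rho(\bar R)$ differs from $\bar R$ exactly by that swap; it is this left/right asymmetry of the two twisted commutators that makes the identities $P\bar R=\bar NP$ and $\tilde P\bar N=\rho(\bar R)\tilde P$ simultaneously solvable and that selects the mild identification of the residual complex factors rather than collapsing $\bb C^l_R$ and $\bb C^r_R$ into one. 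A minor but necessary point is to check that ${\cal B}$ is genuinely a $*$-subalgebra of $\cinf\otimes{\cal B}'$ — it is, the constraint identifying the lepton component of $M$ with $\bb C^l_R$ being stable under products and adjoints — and, if the \emph{largest} such subalgebra is wanted, to verify that relaxing any of the above reductions already breaks one of the two identities.
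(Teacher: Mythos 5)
Your proposal is correct and takes essentially the same route as the paper: there too one writes $[D_M,A]_\rho$ as the off-diagonal (in $\sC$) blocks $k_R(\sD_M M-\rho(Q)\sD_M)$ and $\bar k_R(\sD_M Q-M\sD_M)$, imposes exactly your two identities against $JBJ^{-1}$ and $J\rho(B)J^{-1}$, and reduces them to the scalar constraints $(c^l_R-m)(\bar d^r_R-\bar n)=0$ and $(c^r_R-m)(\bar d^l_R-\bar n)=0$, solved by identifying $\bb C^r_R$ with the lepton entry of $M$, so that the double commutator vanishes even though $[D_M,A]_\rho\propto c^r_R-c^l_R\neq 0$. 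The only caveats: your claim that the condition \emph{forces} $M_4(\bb C)\to\bb C\oplus M_3(\bb C)$ and $\bb H_R\to\bb C_R$ goes beyond what the proposition asserts (it only says $\cal B$ \emph{is} such a subalgebra, and indeed the choice is not unique, since $\bb C^l_R$ could be identified instead) and is not actually established by your sketch --- the paper simply takes these reductions as an ansatz $\tilde{\cal B}$ and verifies sufficiency, which is the part of your argument that carries the proof --- and, after the reduction, the entries do not all act ``as matching scalars'' on the $\nu_R$ line: only $c^r_R=m$ and $d^r_R=n$ are identified, the vanishing coming from one factor in each of the two products, in line with \eqref{eq:105} rather than with the threefold identification $\bb C_R=\bb C^r=\bb C^l$ of \eqref{eq:50}.
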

\begin{proof}
Consider first the subalgebra
\begin{equation}
\tilde{\cal B}:= (\bb H^l_L \oplus \bb H^r_L \oplus \bb C^l_R \oplus \bb C^r_R)\oplus
(M_3(\bb C) \oplus \bb C)
\label{eq:71}
\end{equation}
of $\cal B'$ obtained by asking that $q^l_R, q^r_R$ in
\eqref{eq:60} are diagonal quaternions, namely
\begin{equation}
  \label{eq:103}
  q^l_R=\left(\begin{array}{cc} c^l_R & 0 \\ 0 &\bar
      c^l_R\end{array}\right), \; q^r_R=\left(\begin{array}{cc} c^r_R & 0 \\ 0 &\bar
      c^r_R\end{array}\right)\,\text{ with }
  c^l_R, c^r_R\in\bb C;
\end{equation}
while $M$ in \eqref{eq:92} is of the form
\begin{equation}
  \label{eq:101}
M = \delta_{\dot s}^{\dot t} \left(\begin{array}{cc}
     m & 0 \\ 0 & \bf M \end{array}\right)_{\sI\sJ} \text{ with }
m\in \bb C, {\bf M}\in M_3(\bb C).
\end{equation}
This means that while $Q$ carries non-trivial indices $\dot s,
  \alpha$, the action of $M$ is non-trivial only in the $\sI$ index. 
Define similarly  $B=(R,N)\in\tilde{\cal B}$ with components
  $d^l_R, d^r_R\in\bb C$, $n\in \bb C$, ${\bf N}\in M_3(\bb C)$. 
For any $A, B\in\cal \tilde B$, one has
from~\eqref{majoranadiracoperator}  where we write
    \begin{equation}
\sD_M := \eta_s^t\,\delta_{\dot s}^{\dot t}\,
  \Xi_{\sJ\alpha}^{\sI\beta},
\label{eq:190}
  \end{equation}
and \eqref{repa4} (omitting the deltas)
\begin{equation}
   \label{eq:95}
   [D_M, A]_\rho = \left(\begin{array}{cc}
  0_{64}& k_R(\sD_M M -\rho(Q)\sD_M) \\  \bar k_R(\sD_M Q-M\sD_M) &0_{64} 
 \end{array}\right)_{\sC\sD}.
 \end{equation}
By \eqref{eq:299}, \eqref{eq:290} one obtains
{\small{\begin{equation*}
  \label{eq:99}
    [[D_M, A]_\rho, JBJ^{-1}]_\rho = \!\left(\!\!\!\begin{array}{cc}
  0_{64}&\!\!\!\!\!\!\!\!\!\! k_R\left( (\sD_M M \!-\!\rho(Q)\sD_M) \bar R \!-\!\bar N (\sD_M M
  -\rho(Q)\sD_M) \right)\\[5pt]
 \bar k_R\left((\sD_M Q \!-\! M\sD_M)  \bar N\!\!-\!\! \rho(\bar R) (\sD_M Q
   \!-\! M\sD_M) \right)&\!\!\!\!\!\!\! 0_{64}
 \end{array}\!\!\!\right)_{\sC\sD}.
\end{equation*}}}
The terms entering the upper-right components of this matrix
are (omitting a global $k_R$ factor)
\begin{align}
  \label{eq:102-1}
\bar N {\sf D_M} M & =( \bar N \Xi M)_{\dot s \sJ}^{\dot t \sI} (\eta\Xi)_{s\alpha}^{t\beta}
= \left(\begin{array}{cc}  \bar{\sf n}\sf m  & 0_4 \\ 0_4 &\bar{\sf n}\sf m 
    \end{array}\right)_{\dot s \dot t}\otimes\left(\begin{array}{cc} \Xi & 0_4 \\0_4 &-\Xi\end{array}\right)_{st},\\[5pt]
\bar N\rho(Q) \sf D_M &= (\bar N\Xi)_{\dot  s\sJ}^{\dot
  t\sI}\,(\rho(Q)\eta\Xi)_{s\alpha}^{t\beta} = \left(\begin{array}{cc}  \bar{\sf n}& 0_4 \\ 0_4 &\bar{\sf n} \end{array}\right)_{\dot s \dot t}\otimes\left(\begin{array}{cc}
  {\sf c}^l_R& 0_4 \\0_4 &-\, {\sf c}^r_R\end{array}\right)_{st},\\[5pt] 
 {\sf D_M} M \bar R &=  (\Xi M)_ {\dot s\sJ}^{\dot t\sI}\,(\eta\Xi\bar R)_{s\alpha}^{t\beta} = \left(\begin{array}{cc} \sf m& 0_4 \\0_4 & \sf m
    \end{array}\right)_{\dot s \dot t}\otimes\left(\begin{array}{cc}
     \bar{\sf d}^r_R& 0_4 \\0_4 & -\bar{\sf d}^l_R\end{array}\right)_{st},\\[5pt]
\rho(Q){\sf D}_M \bar R &=(\Xi\delta)_{\dot s\sJ}^{\dot t\sI}  \,(\rho(Q)\eta\Xi R)_{s\alpha}^{t\beta} = \left(\begin{array}{cc} \Xi& 0_4 \\0_4 & \Xi    \end{array}\right)_{\dot s \dot t}\otimes\left(\begin{array}{cc}   {\sf c}^l_R\,\bar{\sf d}^r_R& 0_4 \\0_4 & -{\sf c}^r_R\,\bar{\sf d}^l_R\end{array}\right)_{st},
\end{align}
where we defined
 \begin{equation}
   \label{eq:104}
   {\sf m} :=\left(
\begin{array}{cc}
 m & 0 \\ 
0& 0_3
\end{array}\right)_{\sI\sJ},\quad {\sf c}^r_R =
\left(\begin{array}{cc} c^r_R & 0 \\ 0 &
    0_3\end{array}\right)_{\alpha\beta},\quad
{\sf c}^l_R =
\left(\begin{array}{cc} c^l_R & 0 \\ 0 & 0_3\end{array}\right)_{\alpha\beta}
 \end{equation}
and similarly for ${\sf d}^r_R$, ${\sf d}^l_R$ and $\bf n$. Collecting the various
terms, one finds that the upper-right component of $[[D_M, A]_\rho, JBJ^{-1}]_\rho$ vanishes if and only if
\begin{equation}
  \label{eq:105}
  (c^l_R -m ) (\bar d^r_R - \bar n) = 0,\quad (c^r_R-m)(\bar d^l_R - \bar n) =  0.
\end{equation}

Similarly, for the lower-left component of $[[D_M, A]_\rho, JBJ^{-1}]_\rho$ one has
\begin{align}
  \label{eq:102bis}
 \rho(\bar R) M \sf D_M & =(\Xi M)_{\dot s \sJ}^{\dot t \sI}
 (\rho(\bar R) \eta\Xi)_{s\alpha}^{t\beta} = \left(\begin{array}{cc}  \sf m  & 0_4 \\ 0_4 &\sf m 
    \end{array}\right)_{\dot s \dot t}\otimes\left(\begin{array}{cc}
      {\sf \bar d}^l_R& 0_4 \\0_4 &-{\sf \bar d}^r_R\end{array}\right)_{st},\\[7.5pt]
 \rho(\bar R){\sf D}_M Q&= (\Xi\delta)_{\dot  s\sJ}^{\dot t\sI}\,(\rho(\bar R)\,\eta\Xi \, Q)_{s\alpha}^{t\beta} = \left(\begin{array}{cc}  \Xi& 0_4 \\ 0_4 &\Xi\end{array}\right)_{\dot s \dot t}\otimes\left(\begin{array}{cc}
  \,{\sf c}^r_R \bar{\sf  d}^l_R& 0_4 \\0_4 &-{\sf c}^l_R{\sf \bar d}^r_R\end{array}\right)_{st},\\[7.5pt] 
M{\sf D}_M \bar N &=  (M\Xi\bar N)_ {\dot s\sJ}^{\dot t\sI}\,(\eta\Xi)_{s\alpha}^{t\beta} = \left(\begin{array}{cc} \sf \bar n \sf m& 0_4 \\0_4 & \sf \bar n\sf m
    \end{array}\right)_{\dot s \dot t}\otimes\left(\begin{array}{cc}
      \,\Xi& 0_4 \\0_4 & -\Xi\end{array}\right)_{st},\\[7.5pt]
{\sf D}_M Q\bar N&=(\Xi \bar N)_{\dot s\sJ}^{\dot t\sI}  \,
(\eta \Xi Q)_{s\alpha}^{t\beta} = \left(\begin{array}{cc} \bar{\sf
      n}& 0_4 \\0_4 & \bar{\sf n}  \end{array}\right)_{\dot s \dot
  t}\otimes\left(\begin{array}{cc}{\sf c}^r_R& 0_4 \\0_4 &
    -{\sf c}^l_R\end{array}\right)_{st},
\end{align}
yielding the same condition
\eqref{eq:105}. Hence the twisted first-order condition is satisfied
as~soon~as
\begin{equation}
c^r_R =m,\; d^r_R=n,
\label{eq:8}
\end{equation}
which amounts to
identify $\bb C^r_R$ with $\bb C$. Hence the reduction of $\cal B'$ to
$\cal B$ as defined in \eqref{eq:96}. 
\end{proof}

\bigskip 
One could identify $\bb C^l_R$ with $\bb C$, instead of $\bb
C_R^r$, without changing the result. 
As discussed before definition \ref{deftwistedfirstorder}, one might
also consider a first-order condition where only the commutator with
$D$ is twisted, that is
\begin{equation}
  \label{eq:124}
  [[D_M, A]_\rho, JBJ^{-1}] = 0.
\end{equation}
This is not pertinent in our case however, for this amounts to
permuting ${\bar R}^l_l$ with ${\bar R}_r^r$ in - and only in - the
second term in \eqref{eq:120bis}, which then no longer vanishes as soon
as $R^r_r\neq R_l^l$. 
\bigskip

Proposition \ref{lemmasigma}
deals only with the finite dimensional part of the spectral
  triple. However \eqref{eq:93} is still satisfied with $A, B\in
  C^\infty(\M)\otimes \cal B$ (though, strictly speaking, one
 can no longer talk of 
  ``twisted first-order condition for  $D_M$'', for on
$L^2(\M)\otimes \bb C^{128}$ the operator $D_M$  does not have
a compact resolvent). Proposition \ref{twisted-spec-triple}  is true
for the subalgebra $C^\infty(\M)\otimes\cal B$. Therefore the twisted
first-order condition \eqref{eq:112} with $\cinf\otimes\cal B$ is true for $\Ds + D_M$
since it is true for $\Ds$ and $D_M$ independently. This proves the
first statement of theorem \ref{theo1}.

\section{Twisted-covariant Dirac operators}\setcounter{equation}{0}
\label{section:twistedoperators}

The twisted spectral triple
\begin{equation}
  (\cinf\otimes{\cal B},\,  L^2(\M) \otimes {\mathbb C}^{128},\, \Ds + D_M;\, \rho)
\label{eq:189}
  \end{equation}
of theorem \ref{theo1}
  solves the problem of the non-boundedness of
  the commutators $[\Ds, A]$ raised by the non-trivial action of the
  grand algebra on spinors. But to be of
  interest, this spectral triple should preserve
 the property the grand algebra has been invented for,  that is generating the field $\sigma$
  by a fluctuation of $D_M$, or a twisted version of it. As shown in
  this section this is
  indeed the case, because although
  $\cal B$ is not so grand (it is smaller than $\A_G$), it is neither too small ($\cinf\otimes\cal
  B$ still has non  trivial action on spinors).

\subsection{Twisted fluctuation}
In analogy with gauge fluctuation of almost commutative geometries
described in \S\ref{spectraltripleSM}, we call
\emph{twisted fluctuation of $D$ by $C^\infty(\M)\otimes\cal B$} the substitution of $D = \Ds
+ D_M$ with
   \begin{equation}
    \label{eq:143}
    D_{\bb A} = D + \bb A + J\, \bb A\, J^{-1}
  \end{equation}
where $\bb A$ is twisted $1$-form
\begin{equation}
  \label{eq:125}
  \bb A = B^i [D, A_i]_\rho \quad A_i, B^i\in \cinf\otimes\cal B.
\end{equation}
We do not require $\bb A$ to be selfadjoint, we only ask that
$D_{\bb A}$ is selfadjoint and call it \emph{twisted-covariant Dirac operator}. It is the sum 
 $D_{\bb A} = D_X + D_\sigma$
of the twisted-covariant free Dirac operator
\begin{equation}
  \label{eq:26}
  D_X:= \Ds + \slashed {\mathbb A} + J \slashed{\mathbb A} J^{-1} \quad\quad \slashed{\mathbb A} :=
  B^i[\Ds, A_i]_\rho
\end{equation}
with the twisted-covariant Majorana-Dirac operator
\begin{equation}
  \label{eq:27}
  D_\sigma := D_M + \bb A_M + J \bb A_M J^{-1} \quad\quad \bb A_M:= B^i [D_M, A_i]_\rho.
\end{equation}
In this section, we compute explicitly $D_X$ and $D_\sigma$, and show
that they are parametrized by a vector field $X_\mu$ and a scalar
field $\sig$. 

In the following, 
$A_i=(Q_i,M_i)$ and $B^i=(R^i, N^i$)  are arbitrary elements of $\cinf\otimes\cal
B$, where $i$ a summation index and 
\begin{equation}
Q_i = \left(\begin{array}{cc} Q_{ri}^{r}& 0_4 \\ 0_4 &
    Q_{li}^{l}\end{array}\right)_{st}\,,\quad  M_i = \delta_{\dot s}^{\dot t} \left(\begin{array}{cc} c_{i}^r & 0
      \\ 0 & {\bf \sM}_i\end{array}\right)_{\sI\sJ} 
\end{equation}
with{\footnote{In all this section, the components of the matrices are
  functions on $\M$. To lighten notation we write $M_3(\bb C)$ instead
  of $\cinf\otimes M_3(\bb C)$. The same is true for the various
  copies of $\mathbb H$ and $\mathbb C$.}} ${\bf \sM}_i \in M_3(\bb C)$ and
\begin{equation} 
   Q_{ri}^{r}=\left(
     \begin{array}{cc}
       q^r_{Ri}& 0_2 \\
       0_2 & q^r_{Li}
     \end{array}
  \right)_{\alpha\beta}, \; Q^{l}_{li}=\left(
     \begin{array}{cc}
       q^l_{Ri}& 0_2 \\
       0_2 & q^l_{Li}
     \end{array}
  \right)_{\alpha\beta} 
\label{eq:6}
\end{equation}
with $q_{Li}^l\in \bb H_L^l$, $q_{Li}^r\in \bb H_L^r $ and
\begin{equation}
q_{Ri}^r =\text{diag}\,(c_{i}^r,\bar c_{i}^r), \quad q_{Ri}^l =\text
{diag}\,(c_{i}^l,\bar c_{i}^l) \quad\text{ with }\quad c_i^r\in \mathbb C_R^r,\;\,
c_i^l \in \mathbb C_R^l.
\label{eq:9}
\end{equation}
The components $R^i, N^i$ of $B^i$ are defined similarly, with 
\begin{equation}
d^{ri}\in \mathbb C_R^r,\; d^{li}\in\mathbb C_R^l,\quad
 r_L^{ri}\in\mathbb H_{L}^r , \;  r_L^{ri}\in\mathbb
 H_{L}^l\quad\text{and}\quad
{\bf{\sN}}_i\in M_3(\mathbb C).
\label{eq:11}
\end{equation}
\subsection{Twisted-covariant free Dirac operator $D_X$}
\label{potenziale}
 The twisted fluctuations \eqref{eq:26} of the free Dirac operator
 $\Ds$ in \eqref{eq:3-bis} by 
  $C^\infty(\M)\otimes\cal B$ are parametrized by
 a vector field. 
\begin{prop}
\label{propdx}
 One has
  \begin{equation}
  \label{eq:132}
   D_X = \slashed D + \slashed{\mathbb  X} 
\end{equation}
with
  \begin{equation}
    \label{eq:192}
\slashed{\mathbb X}  :=  -i\gamma^\mu {\mathbb X}_\mu\;,
 \quad\quad
{\mathbb X}_\mu :=\left(
 \begin{array}{cc}
 X_\mu &0_{64} \\ 
  0_{64} & -
    \bar X_\mu
 \end{array}
\right)_{\sC\sD},
  \end{equation}
where we define the  bounded-operator valued vector field {\footnote{To lighten notations we
  omit the parenthesis around $(\del_\mu Q_i)$ and $(\del_\mu \bar M_i)$:
  the latter are bounded operators and act as matrices, not as
  differential operators.}} 
\begin{equation}
  \label{eq:127}
  X_\mu:=\delta^I_J\,\rho(R^i)\,\nabla^S_\mu Q_i - \delta_\alpha^\beta
  \,\bar N^i \nabla^S_\mu \bar  M_i\end{equation}
which commutes with $\gamma^5$ and twisted-commutes with $\gamma^\nu$,
that is for all $\mu,\nu$ one has
\begin{equation}
   \label{eq:76}
 \gamma^\mu X_\nu = \rho(X_\nu)\gamma^\mu,\quad
 \gamma^\mu\rho(X_\nu)=X_\nu\gamma^\mu.
 \end{equation}
\end{prop}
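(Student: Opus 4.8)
The plan is to compute the twisted fluctuation $\slashed{\mathbb X} = B^i[\slashed D, A_i]_\rho$ directly, block by block in the particle/antiparticle grading $\sC\sD$, using the explicit representation \eqref{repa4} of $\cal B$ and the identities of Lemma \ref{lemautomorfismo}. First I would write $[\slashed D, A_i]_\rho$ using \eqref{eq:410}: it is block-diagonal in $\sC\sD$ with upper block $\delta^I_J[\slashed\del, Q_i]_\rho$ and lower block $\delta_\alpha^\beta[\slashed\del, M_i]_\rho$. By Lemma \ref{lemautomorfismo}, $[\slashed\del, Q_i]_\rho = -i\gamma^\mu(\nabla^S_\mu Q_i)$; since $M_i \propto \delta_{\dot s}^{\dot t}$ commutes with $\gamma^\mu$, the twist is trivial on that block and $[\slashed\del, M_i]_\rho = [\slashed\del, M_i] = -i\gamma^\mu(\nabla^S_\mu M_i)$. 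Left-multiplying by $B^i = (R^i, N^i)$ — whose representation is again block-diagonal with blocks $R^i$ (acting on the $\alpha\beta$ index, in the particle block) and $N^i$ (acting on the $\sI\sJ$ index, in the antiparticle block) per \eqref{repa4} — gives $\slashed{\mathbb A} = -i\gamma^\mu\,\mathrm{diag}\big(\delta^I_J R^i(\nabla^S_\mu Q_i),\ \delta_\alpha^\beta N^i(\nabla^S_\mu M_i)\big)_{\sC\sD}$.

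Here is the one subtlety to be careful about: because $R^i$ does not commute with $\gamma^\mu$ in general (it carries the chiral index $s$), one cannot naively pull $\gamma^\mu$ to the left past $R^i$. Rather, using the first identity of Lemma \ref{lemautomorfismo} in the form $R^i\gamma^\mu = \gamma^\mu\rho(R^i)$ (valid for quaternions diagonal in the chiral index, which is the structure of $R^i$ in $\cal B \subset \cal B'$), one rewrites $R^i\gamma^\mu(\nabla^S_\mu Q_i)$ so that the single Dirac matrix sits on the far left, at the cost of replacing $R^i$ by $\rho(R^i)$. This is exactly what produces $\rho(R^i)$ rather than $R^i$ in the definition \eqref{eq:127} of $X_\mu$, and it is the step most prone to a sign or ordering slip, so I would do it explicitly in components with the block form \eqref{eq:16}. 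The antiparticle block needs no such manipulation since $N^i$ commutes with $\gamma^\mu$; it contributes $-i\gamma^\mu\,\delta_\alpha^\beta N^i(\nabla^S_\mu M_i)$.

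Next I would add $J\slashed{\mathbb A}J^{-1}$. Using the order-zero computation \eqref{eq:29}, conjugation by $J$ sends the particle block $\delta^I_J R^i(\nabla^S_\mu Q_i)$ of $\slashed{\mathbb A}$ into (minus the complex conjugate, acting trivially on $\alpha\beta$) in the antiparticle block, and the antiparticle block $N^i(\nabla^S_\mu M_i)$ into (minus the complex conjugate) in the particle block; one must also check that $J$ intertwines $\gamma^\mu$ correctly — on the continuous part ${\cal J}\slashed\del = \slashed\del{\cal J}$, so the Dirac-matrix structure is preserved and the overall factor $-i\gamma^\mu$ survives, with $X_\mu$ in one block and $-\bar X_\mu$ in the other. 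Collecting the $\slashed{\mathbb A}$ and $J\slashed{\mathbb A}J^{-1}$ contributions in each $\sC\sD$ block gives precisely $\slashed{\mathbb X} = -i\gamma^\mu\,\mathrm{diag}(X_\mu,\ -\bar X_\mu)_{\sC\sD}$ with $X_\mu$ as in \eqref{eq:127}, where the two terms of $X_\mu$ come respectively from $\slashed{\mathbb A}$ (particle block) and from $J\slashed{\mathbb A}J^{-1}$ (image of the antiparticle block). The relative minus sign between the two terms of $X_\mu$ in \eqref{eq:127} is the sign from \eqref{eq:29}; I would track it carefully.

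Finally, the twisted-commutation relations \eqref{eq:76} follow from Lemma \ref{lemautomorfismo} applied component-wise: $X_\mu$ is built from a quaternionic part diagonal in the chiral index (the $\rho(R^i)\nabla^S_\mu Q_i$ term, for which $\gamma^\mu(\,\cdot\,) = \rho(\,\cdot\,)\gamma^\mu$ by \eqref{eq:130}) and from a part proportional to $\delta_\alpha^\beta$ times something built out of $M_i$'s that is itself $\propto\delta_{\dot s}^{\dot t}$ and hence commutes with all $\gamma^\mu$ (so the twist acts as the identity on it). Thus $\gamma^\mu X_\nu = \rho(X_\nu)\gamma^\mu$, and applying $\rho$ again (with $\rho^2 = \mathbb I$) gives the second identity; commutation of $X_\mu$ with $\gamma^5$ is immediate since both the quaternionic diagonal part and the $\delta_\alpha^\beta M$-part are block-diagonal in the chiral index $s$, on which $\gamma^5 = \eta_s^t$ acts diagonally. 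The main obstacle is purely bookkeeping: keeping the chiral index $s$, the particle index $\dot s$, the color index $\sI$ and the flavor index $\alpha$ straight through the block manipulations, and not losing the sign and the $\rho$ that the non-commutation of $R^i$ with $\gamma^\mu$ forces upon us.
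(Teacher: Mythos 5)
Your proposal is correct and follows essentially the same route as the paper's proof: compute $\slashed{\mathbb A}=B^i[\slashed D,A_i]_\rho$ blockwise from \eqref{eq:410}--\eqref{eq:410bis}, use Lemma \ref{lemautomorfismo} to move $\gamma^\mu$ past $R^i$ at the cost of the twist (whence $\rho(R^i)$ in \eqref{eq:127}), conjugate by $J$ using the (anti)commutation properties of $\mathcal J$ and the block swap of $J_F$, and deduce \eqref{eq:76} and the commutation with $\gamma^5$ from the chiral block structure, exactly as in the paper. The only loose point is attributing the relative minus sign to \eqref{eq:29} (for elements of $\mathcal B'$ the signs there cancel, cf. \eqref{eq:299}); in the paper it emerges from conjugating the one-form itself, compare \eqref{eq:126} with \eqref{eq:410ter}, but since you plan to track this explicitly it is immaterial.
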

\begin{proof}
Given $A_i=(Q_i, M_i)$ and $B^i=(R^i, N^i)$ in ${\cal B}$, one gets
from (\ref{eq:410}), (\ref{eq:410bis}) and  (\ref{eq:130})
\begin{equation}
\label{eq:410ter}
 \slashed{\bb A}  =-i B^i [\slashed D, A_i]_\rho =-i\left(
\begin{array}{cc}
 \delta^I_J \, \gamma^\mu \rho(R^i)\nabla^S_\mu Q_i&0_{64} \\ 
 0_{64} &\delta_\alpha^\beta\, \gamma^\mu N^i\nabla^S_\mu M_i
\end{array}
\right)_{\partind{CD}}
 \end{equation}
where we used that $N^i$ commutes with $\gamma^\mu$ and $
R^i\gamma^\mu  = \gamma^\mu \rho(R^i)$ (lemma \ref{lemautomorfismo}).    
 By \eqref{eq:12} one gets
\begin{equation}
   \label{eq:126}
    J{\slashed{\bb A}} J^{-1} = -  J{\slashed{\bb A}} J  = i\left(
\begin{array}{cc}
\delta_\alpha^\beta\, \gamma^\mu\bar N^i\nabla^S_\mu \bar M_i & 0_{64}\\
  0_{64} &\delta^I_J \gamma^\mu\rho(\bar R^i)\nabla^S_\mu \bar Q_i
\end{array}
\right)_{\partind{CD}}
 \end{equation}
where we used that $\cal J$
anti-commutes with the $\gamma$'s matrices and commutes with $\nabla_\mu^S${\footnote{$
     \left\{\cal J, \gamma^\mu\right\} = i (\gamma^0 \gamma^2
     {\bar \gamma}^\mu + \gamma^\mu \gamma^0 \gamma^2)cc = 0  
 $
because $\bar{\gamma}^\mu = -\gamma^\mu$ for $\mu =1,3$,
$\bar{\gamma}^\mu =\gamma^\mu$ for $\mu=0,2$. That $\cal J$ commutes
with the spin covariant derivative $\nabla_\mu^S$ is a classical
result, see e.g. \cite[Prop. 4.18]{Walterlivre}.}} so that, inserting ${\cal
J}^2 = -{\bb I}$ before $\nabla_\mu^S$, one obtains
\begin{align}
   \label{eq:141}
   {\cal J}\,(\gamma^\mu N^i \nabla^S_\mu M_i)\,{\cal J} &=\gamma^\mu ({\cal J}
   N^i {\cal J})\, \nabla^S_\mu({\cal J} M_i{\cal J}) = \gamma^\mu\bar
   N^i \nabla^S_\mu \bar M_i,\\
   {\cal J}(\gamma^\mu\rho(R^i)\del_\mu Q_i) {\cal J} &=
   \gamma^\mu ({\cal J}\rho(R^i) {\cal J})\nabla^S_\mu({\cal J} Q_i{\cal J}) =  \gamma^\mu\rho(\bar R^i)\nabla^S_\mu \bar Q_i.
 \end{align}
In both equations above the last term comes from \eqref{eq:31},
noticing that $\rho(R_i)$ and $Q_i$ are now diagonal in the $st$ index
and so commute with $\eta$, while $N_i, M_i$ are proportional to
$\delta_{\dot s}^{\dot t}$, hence commute with $\tau$.
Summing up \eqref{eq:410ter} and \eqref{eq:126}, one obtains 
\begin{equation}
  \label{eq:187}
  \slashed{\mathbb A} + J\slashed{\mathbb A} J^{-1}= - i \gamma^\mu \mathbb X_\mu
\end{equation}
with $\mathbb X_\mu$ as in \eqref{eq:127}. 

$X_\mu$ commuting with
$\gamma^5$ is a consequence of the breaking of $\A_G$ by the
grading condition and can be checked explicitly using \eqref{eq:6} and
\eqref{repa4}. Eq. \eqref{eq:76} follows by direct calculation, writing explicitly
 $X_\mu$ in the $st$ indices
\begin{equation}
  \label{eq:147}
  X_\mu = \delta^{\sI \dot t}_{\sJ\dot s} \left(\begin{array}{cc} 
R^{il}_{\; l}\nabla^r_\mu Q^{\;r}_{ir} & 0_4 \\ 
0_4 &R^{ir}_{\; r}\nabla^l_\mu Q^{\;l}_{il}\end{array}\right)_{st} - \delta_{\alpha s
\dot s}^{\beta t \dot t}\bar N ^i \del_\mu \bar M_i =:  \delta^{\dot t}_{\dot s} \left(\begin{array}{cc}X_\mu^r &
0_{32}\\ 0_{32} &X_\mu^l \end{array}\right)_{st},
\end{equation}
where $\nabla_\mu^{r,l}:= \del_\mu + \omega_\mu^{r,l}$ are defined by the explicit form of the spin connection
\begin{equation}
\omega_\mu =-4
\,\Gamma_{\mu a}^b \gamma^a\gamma_b= \left(\begin{array}{cc}
-4
\,\Gamma_{\mu a}^b\sigma^a \tilde\sigma_b =: \omega^r & 0_2 \\ 0_2
&-4
\,\Gamma_{\mu a}^b\tilde\sigma^a\sigma_b=: \omega^l \end{array}\right)_{st}.
\label{eq:74}
\end{equation}
with $\Gamma_{\mu a}^b$ the Christoffel symbols in the orthonormal
basis.
\end{proof}
   
\begin{lemma} 
\label{lemma-selfadjoint}
$D_X$ is selfadjoint, and called twisted-covariant free Dirac
operator,  if and only if for
  any $\mu= 0, 1,2,3$ one has
  \begin{equation}
 \rho( X_\mu) = - X_\mu^\dagger.
\label{eq:150}
  \end{equation}
\end{lemma}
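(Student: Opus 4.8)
The plan is to reduce the self-adjointness of $D_X=\slashed D+\slashed{\mathbb X}$ to an algebraic identity on the bounded perturbation $\slashed{\mathbb X}$ alone, and then to evaluate the adjoint of $\slashed{\mathbb X}$ with the help of the twisted commutation relations \eqref{eq:76}. First I would note that since $\slashed D$ is (essentially) self-adjoint on its natural domain and $\slashed{\mathbb X}=-i\gamma^\mu{\mathbb X}_\mu$ is a \emph{bounded} operator — by Proposition~\ref{propdx} the components $X_\mu$ in \eqref{eq:127} are bounded matrix-valued multiplication operators — the operator $D_X$ is self-adjoint on $\mathrm{Dom}(\slashed D)$ if and only if $\slashed{\mathbb X}$ is self-adjoint. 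Moreover $\slashed{\mathbb X}$ is block-diagonal in the particle/antiparticle index $\sC\sD$, with blocks $-i\gamma^\mu X_\mu$ on the particle subspace and $i\gamma^\mu\bar X_\mu$ on the antiparticle subspace, and these two blocks are conjugate under the real structure $J$; hence it suffices to analyse self-adjointness of the particle block $\slashed{\mathbb X}_0:=-i\gamma^\mu X_\mu$.

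Next I would compute $\slashed{\mathbb X}_0^\dagger$. The Euclidean Dirac matrices $\gamma^\mu$ are self-adjoint, and $X_\mu$ acts as a bounded matrix-valued function (not as a differential operator), so $\slashed{\mathbb X}_0^\dagger=i\,X_\mu^\dagger\gamma^\mu$. Because $X_\mu^\dagger$ has exactly the same structure in the spinorial indices as $X_\mu$ (block-diagonal in $st$, proportional to $\delta_{\dot s}^{\dot t}$), it obeys the same twisted commutation rule $X_\mu^\dagger\gamma^\mu=\gamma^\mu\rho(X_\mu^\dagger)$ coming from \eqref{eq:76}; equivalently one may start from $\slashed{\mathbb X}_0=-i\rho(X_\mu)\gamma^\mu$ and use that $\rho$ is a $*$-automorphism, so $\rho(X_\mu)^\dagger=\rho(X_\mu^\dagger)$. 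Either way one arrives at $\slashed{\mathbb X}_0^\dagger=i\,\gamma^\mu\rho(X_\mu^\dagger)$, hence $\slashed{\mathbb X}_0^\dagger-\slashed{\mathbb X}_0=i\,\gamma^\mu\bigl(\rho(X_\mu^\dagger)+X_\mu\bigr)$. Since the four matrices $\gamma^\mu$ are linearly independent and act on the $s\dot s$ tensor factor, which is disjoint from the factor ($\alpha$, resp. $\sI$) on which $\rho(X_\mu^\dagger)+X_\mu$ acts, this expression vanishes if and only if $\rho(X_\mu^\dagger)+X_\mu=0$ for every $\mu$. Applying the involutive $*$-automorphism $\rho$ to this identity rewrites it as $X_\mu^\dagger+\rho(X_\mu)=0$, which is \eqref{eq:150}; reading the chain of equivalences in both directions yields the ``if and only if''.

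I expect the only delicate points to be bookkeeping ones. First, one must keep track of what the juxtaposition $\gamma^\mu{\mathbb X}_\mu$ means in the tensorial convention of \S\ref{mixing}, so that $(\gamma^\mu{\mathbb X}_\mu)^\dagger={\mathbb X}_\mu^\dagger\gamma^\mu$ with no spurious signs. Second, the passage from the summed identity $\gamma^\mu(\rho(X_\mu^\dagger)+X_\mu)=0$ to the componentwise identities rests on linear independence of the $\gamma^\mu$ on the spinorial tensor factor and should be stated explicitly. Finally, to recover the \emph{same} condition \eqref{eq:150} from the antiparticle block one uses that complex conjugation commutes with $\rho$ (which merely permutes the $r,l$ blocks) and that $\bar\gamma^\mu=\pm\gamma^\mu$, the two signs cancelling; this is automatic from the $J$-covariance of $\slashed{\mathbb X}$, but is worth checking directly.
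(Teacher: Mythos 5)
Your proposal follows essentially the same route as the paper's proof: reduce selfadjointness of $D_X$ to that of the bounded term $\slashed{\mathbb X}$, compute $(\gamma^\mu X_\mu)^\dagger=\gamma^\mu\rho(X_\mu^\dagger)$ from the spinorial structure of $X_\mu$, and extract the condition componentwise to get \eqref{eq:150}; your explicit treatment of the antiparticle block via $J$, which the paper leaves implicit, is a correct bonus. The one justification you should repair is the extraction step, i.e.\ passing from the summed identity $\gamma^\mu\left(\rho(X_\mu^\dagger)+X_\mu\right)=0$ to the four componentwise identities. Your stated reason --- that $\rho(X_\mu^\dagger)+X_\mu$ acts on a tensor factor disjoint from the spinorial indices $s\dot s$ --- is not accurate: by \eqref{eq:147} the field $X_\mu$ acts nontrivially on the chiral index $s$ (it is block diagonal with, in general, $X_\mu^r\neq X_\mu^l$; were it trivial on spinor indices no twist would be needed in the first place). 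What is true, and what makes the extraction valid, is that $X_\mu$ (hence $\rho(X_\mu^\dagger)+X_\mu$) is diagonal in the $st$ index and proportional to $\delta_{\dot s}^{\dot t}$, while $\gamma^\mu$ is off-diagonal in $st$; the identity therefore splits into $\sigma^\mu M^l_\mu=0$ and $\tilde\sigma^\mu M^r_\mu=0$ with coefficients commuting with the $\sigma$'s, and each component is recovered, as in the paper, by multiplying with $\tilde\sigma^\lambda$ and taking the partial trace over the dotted indices, using $\mathrm{Tr}\,(\tilde\sigma^\nu\sigma^\mu)=2\delta^\mu_\nu$. With that substitution (or an equivalent linear-independence argument for the $\sigma^\mu$ with coefficients in the commutant of the dotted spin factor), your chain of equivalences coincides with the paper's proof and is correct.
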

\begin{proof}
By \eqref{eq:147},
\begin{equation}
  \label{eq:149}
 \gamma^\mu X_\mu = 
  \left(\begin{array}{cc} 0_{32}  & \sigma^\mu X_\mu^l \\
\tilde \sigma^\mu X_\mu^r & 0_{32}\end{array}\right)_{st} ,\quad 
 (\gamma^\mu X_\mu )^\dagger=  
  \left(\begin{array}{cc} 0_{32}  & \sigma^\mu (X_\mu^r)^\dagger \\
\tilde \sigma^\mu (X_\mu^l)^\dagger& 0_{32}\end{array}\right)_{st} = \gamma^\mu\rho(X_\mu^\dagger),
\end{equation}
where we used that $X_\mu$ commutes with the $\sigma$'s matrices and
$(\sigma^\mu)^\dagger = \tilde \sigma^\mu$. Therefore $\gamma^\mu X_\mu$ is selfadjoint iff 
\begin{equation}
  \label{eq:151}
  \sigma^\mu (X_\mu^r)^\dagger = \sigma^\mu X_\mu^l.
\end{equation}
Since $\text{Tr}\, \bar\sigma^\nu\sigma^\mu = 2\delta^\mu_\nu$ and both $X_\mu^r$ and $X_\mu^l$ are proportional to
$\delta_{\dot s}^{\dot t}$,  the partial trace  on the $\dot s\dot t$
indices of the above equation, where both side have been multiplied by
$\bar\sigma^\lambda$, yields 
$ (X_\mu^r)^\dagger = X_\mu^l$ for any $\mu$, that is
\begin{equation}
  X_\mu^\dagger = \rho(X_\mu).
\end{equation}
The lemma is obtained noticing that by Kato-Rellich theorem $D_X$ is
selfadjoint if and only if $i\gamma^\mu X^\mu$ is selfadjoint, that is
$\gamma^\mu X^\mu$ is
anti-selfadjoint. 
\end{proof}

\subsection{Twisted-covariant Majorana-Dirac operator $D_{\bf
    \sigma}$}

Twisted fluctuations of the Majorana-Dirac operator
$D_M$ are parametrized by a scalar field
$\sig$. To show that, we begin by a short calculation in tensorial notations.
\label{soussectionsigma}
\begin{lemma}
\label{lemmaoneformsingle}
  For  $A=(Q,M)\in {\cal B}$ with components $c^r, c^l\in\mathbb
  C$ as in \eqref{eq:9}, one has
  \begin{equation}
    \label{eq:34}
    [D_M, A]_\rho = \left(\begin{array}{cc}
0_2 & k_R(c^r - c^l) {\cal S} \\ \bar k_R(c^r - c^l)
{\cal S}'&0_2\end{array}\right)_{\sC\sD}\delta_{\dot s}^{\dot t}\,\Xi_{\alpha\sI}^{\beta\sJ}
  \end{equation}
where \begin{equation}
  \label{eq:107}
  {\cal S}= \left(\begin{array}{cc} 1 & 0
      \\ 0&0\end{array}\right)_{st},\quad
  {{\cal S}'} = \left(\begin{array}{cc} 0&0 \\
      0& 1\end{array}\right)_{st}. 
\end{equation}
\end{lemma}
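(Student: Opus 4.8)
The starting point is the identity \eqref{eq:95} for $[D_M,A]_\rho$: it was derived in the proof of Proposition~\ref{lemmasigma} using only \eqref{majoranadiracoperator} and the representation \eqref{repa4}, so it holds for any $A=(Q,M)$ in ${\cal B}'$, in particular for $A\in{\cal B}\subset\tilde{\cal B}\subset{\cal B}'$, giving
\begin{equation*}
[D_M, A]_\rho = \left(\begin{array}{cc} 0_{64} & k_R(\sD_M M - \rho(Q)\sD_M) \\ \bar k_R(\sD_M Q - M\sD_M) & 0_{64}\end{array}\right)_{\sC\sD},
\end{equation*}
with $\sD_M = \eta_s^t\,\delta_{\dot s}^{\dot t}\,\Xi_{\sJ\alpha}^{\sI\beta}$ as in \eqref{eq:190}. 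The whole content of the lemma is then to evaluate the two off-diagonal blocks when $Q,M$ have the form imposed by ${\cal B}$: $Q=\mathrm{diag}(Q^r_r,Q^l_l)_{st}$ with $q^r_R=\mathrm{diag}(c^r,\bar c^r)$ and $q^l_R=\mathrm{diag}(c^l,\bar c^l)$ in \eqref{eq:60}, and $M=\delta_{\dot s}^{\dot t}\,\mathrm{diag}(m,\mathbf M)_{\sI\sJ}$ with $m=c^r$ by the identification \eqref{eq:8} that defines ${\cal B}$.

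Next I would run the projections produced by $\Xi$. Since $\sD_M$ sandwiches the flavour index $\alpha$ and the lepto-colour index $\sI$ between the rank-one projector $\Xi$ onto the first component, only the $\alpha=0$ entries of the diagonal quaternions $q^r_R,q^l_R$ survive in $Q$ and in $\rho(Q)$ — these are exactly $c^r$ and $c^l$ — and only the $\sI=0$ entry $m=c^r$ of $M$ survives, the block $\mathbf M\in M_3(\bb C)$ being annihilated; hence $\sD_M M = M\sD_M = c^r\,\sD_M$. For the $Q$-terms I would use $\eta_s^t={\cal S}-{\cal S}'$ in the chiral index together with the definition \eqref{eq:5800} of $\rho$ as the exchange of the $(r,r)$ and $(l,l)$ chiral blocks of $Q$: this swaps which scalar is attached to ${\cal S}$ versus ${\cal S}'$, yielding $\sD_M Q = (c^r{\cal S}-c^l{\cal S}')\,\delta_{\dot s}^{\dot t}\Xi_{\sJ\alpha}^{\sI\beta}$ and $\rho(Q)\sD_M = (c^l{\cal S}-c^r{\cal S}')\,\delta_{\dot s}^{\dot t}\Xi_{\sJ\alpha}^{\sI\beta}$.

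Subtracting gives $\sD_M M-\rho(Q)\sD_M = (c^r-c^l){\cal S}\,\delta_{\dot s}^{\dot t}\Xi_{\sJ\alpha}^{\sI\beta}$ and $\sD_M Q-M\sD_M = (c^r-c^l){\cal S}'\,\delta_{\dot s}^{\dot t}\Xi_{\sJ\alpha}^{\sI\beta}$, and inserting these into the matrix above reproduces exactly \eqref{eq:34}. The computation is essentially bookkeeping; the only delicate point is the chiral ordering — checking that with the conventions \eqref{eq:3}, \eqref{eq:58} and \eqref{eq:5800} one genuinely obtains ${\cal S}$ (not ${\cal S}'$) in the upper-right block and ${\cal S}'$ in the lower-left — together with the preliminary remark that \eqref{eq:95} depends only on $A\in{\cal B}'$, so it transfers to ${\cal B}$ verbatim.
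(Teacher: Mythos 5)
Your proposal is correct and follows essentially the same route as the paper: it starts from \eqref{eq:95}, evaluates the two off-diagonal blocks componentwise using the projection $\Xi$, the block form \eqref{eq:58}--\eqref{eq:60} of $Q$ and the exchange \eqref{eq:5800} defining $\rho$, and uses the identification $m=c^r$ of \eqref{eq:8}. The only (immaterial) difference is that the paper keeps $m$ generic in the intermediate expressions and imposes $m=c^r_R$ at the very end, whereas you impose it from the start, which lets you shortcut via $\sD_M M = M\sD_M = c^r\sD_M$.
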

\begin{proof}
Computing explicitly
\eqref{eq:95} with notations \eqref{eq:104} and omitting $k_R$ and
$\bar k_R$ yields
{\small \begin{align}
  \label{eq:102ter}
{\sf D_M} M\! -\! \rho(Q)\sf D_M  & =(\Xi M )_{\dot s \sJ}^{\dot t \sI} (\eta\Xi)_{s\alpha}^{t\beta}  - (\Xi\delta)_{\dot  s\sJ}^{\dot
   t\sI}\,(\rho(Q) \eta\Xi)_{s\alpha}^{t\beta}\\
\nonumber
&= \left(\begin{array}{cc}  
    \sf m  & 0_4 \\ 0_4 &\sf m 
    \end{array}\right)_{\dot s \dot t}\otimes
  \left(\begin{array}{cc}
     \Xi_\sJ^\sI & 0_4 \\0_4& -\Xi_\sJ^\sI
   \end{array}\right)_{st} - 
 \left(\begin{array}{cc}  \Xi_\alpha^\beta& 0_4 \\ 0_4 &\Xi_\alpha^\beta\end{array}\right)_{\dot
   s \dot t}\otimes
 \left(\begin{array}{cc}
     {\sf c}^l_R& 0_4 \\0_4 &-{\sf c}^r_R\end{array}\right)_{st} \\
\nonumber &=\left(\begin{array}{cc} 
    \left(\begin{array}{cc}
         (m- c^l_R)\,\Xi_{\alpha\sI}^{\beta\sJ}& 0 \\ 0
     & (m-c^l_R)\,\Xi_{\alpha\sI}^{\beta\sJ}
   \end{array}\right)_{\dot s\dot t} & 0_{32}\\
0_{32}&  \left(\begin{array}{cc}
         - (m- c^r_R)\,\Xi_{\alpha\sI}^{\beta\sJ}& 0 \\ 0
     & - (m-c^r_R)\,\Xi_{\alpha\sI}^{\beta\sJ}
   \end{array}\right)_{\dot s\dot t} \end{array}\right)_{s t}\\[5pt]
\label{eq:102quar}
{\sf D_M}^\dagger Q \!-\! M{\sf D}_\nu^\dagger&=  (\Xi\delta)_{\dot s\sJ}^{\dot t\sI}  \,(\eta\Xi Q)_{s\alpha}^{t\beta}  -
(\Xi M)_ {\dot s\sJ}^{\dot t\sI}(\eta\Xi)_{s\alpha}^{t\beta}\\
\nonumber &=
 \left(\begin{array}{cc} \Xi_\alpha^\beta& 0_4 \\0_4 & \Xi_\alpha^\beta  \end{array}\right)_{\dot s \dot t}\otimes\left(\begin{array}{cc}
     \bar{\sf c}^r_R& 0_4 \\0_4 &-\bar k_R {\sf c}^l_R\end{array}\right)_{st} - \left(\begin{array}{cc} \sf m& 0_4 \\0_4 & \sf m
    \end{array}\right)_{\dot s \dot t}\otimes\left(\begin{array}{cc}
  \bar  \Xi_\sJ^\sI& 0_4 \\0_4 &-\bar \Xi_\sJ^\sI\end{array}\right)_{st} \\[5pt]
\nonumber &= \left(\begin{array}{cc} 
    \left(\begin{array}{cc}
       \bar (c^r_R -m)\,\Xi_{\alpha\sI}^{\beta\sJ}& 0 \\ 0
     & \bar R (c^r_R - m)\,\Xi_{\alpha\sI}^{\beta\sJ}
   \end{array}\right)_{\dot s\dot t} & 0_{32}\\
0_{32}&  \left(\begin{array}{cc}
        -\bar (c^l_R-m)\,\Xi_{\alpha\sI}^{\beta\sJ}& 0 \\ 0
     & -\bar (c^l_R-m)\,\Xi_{\alpha\sI}^{\beta\sJ}
   \end{array}\right)_{\dot s\dot t} \end{array}\right)_{s t}.
\end{align}}
Identifying $c_R^r$ with $m$ following \eqref{eq:8} yields the result, where we
drop the
index $R$ to match notation \eqref{eq:9}.
\end{proof}

\begin{prop}
\label{propsigma} The selfadjoint twisted fluctuation \eqref{eq:27} of the Majorana-Dirac operator
$D_M = \gamma^5D_R$ by $\cinf\otimes\cal B$, called twisted-covariant
Majorana-Dirac operator,  is 
\begin{equation}
  \label{eq:14}
  D_\sigma = {\boldsymbol \sigma}\gamma^5 D_R
\end{equation}
where
\begin{equation}
{\boldsymbol \sigma} = (\mathbb I +
\gamma^5\phi)
\label{eq:17}
\end{equation}
 with $\phi$ a real scalar field.
\end{prop}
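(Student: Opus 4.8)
The plan is to feed the single-generator formula of Lemma~\ref{lemmaoneformsingle} into the definition \eqref{eq:27} of $\bb A_M$, compute the conjugate term $J\bb A_M J^{-1}$, add $D_M$, and identify the result.

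First I would evaluate $\bb A_M=\sum_i B^i[D_M,A_i]_\rho$. For each $i$, Lemma~\ref{lemmaoneformsingle} gives $[D_M,A_i]_\rho$ as, up to the common factor $\delta^{\dot t}_{\dot s}\Xi^{\beta\sJ}_{\alpha\sI}$, the particle/antiparticle off-diagonal operator with corners $k_R(c^r_i-c^l_i)\,{\cal S}$ and $\bar k_R(c^r_i-c^l_i)\,{\cal S}'$. Left-multiplying by $B^i=(R^i,N^i)$ in the representation \eqref{repa4}: on the particle sector the quaternion $R^i$ meets ${\cal S}$ (which retains only its $R^r_r$ corner) and then $\Xi$ (which projects onto the one-dimensional $\nu_R$ line), so that only the scalar $d^{ri}$ survives; on the antiparticle sector the $M_4(\bb C)$-matrix $N^i$ meets $\Xi$ and ${\cal S}'$, and the defining identification $c^r_R=m$, $d^r_R=n$ of $\cal B$ in \eqref{eq:8} makes the surviving component again equal to $d^{ri}$. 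Hence the whole one-form collapses to a single complex scalar field $\phi:=\sum_i d^{ri}(c^r_i-c^l_i)$, with $\bb A_M$ equal to $\delta^{\dot t}_{\dot s}\Xi$ times the off-diagonal operator with corners $k_R\phi\,{\cal S}$ and $\bar k_R\phi\,{\cal S}'$.

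Next I would compute $J\bb A_M J^{-1}$ from \eqref{eq:12}, \eqref{eq:46}: conjugation by $J$ exchanges the particle and antiparticle sectors, complex-conjugates the scalar coefficients ($\phi\to\bar\phi$, $k_R\to\bar k_R$) and --- since ${\cal S}$, ${\cal S}'$, $\Xi$, $\delta^{\dot t}_{\dot s}$ are real and commute with the unitary part of $\cal J$, exactly as in \eqref{eq:141} --- leaves these matrices fixed. Adding $D_M$, whose off-diagonal corners carry $\gamma^5$ (spin part $\eta={\cal S}-{\cal S}'$), and using ${\cal S}+{\cal S}'=\delta^t_s$, the particle corner of $D_\sigma=D_M+\bb A_M+J\bb A_M J^{-1}$ becomes $k_R\big[(1+\phi)\,{\cal S}-(1-\bar\phi)\,{\cal S}'\big]\delta^{\dot t}_{\dot s}\Xi$ and the antiparticle corner its adjoint. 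Factoring $\gamma^5 D_R$ back out, this is $(\mathbb I+\gamma^5\phi)\gamma^5 D_R$ precisely when $\phi$ acts identically on both sectors, i.e.\ $\phi=\bar\phi$; conversely, in the spirit of Lemma~\ref{lemma-selfadjoint}, selfadjointness of $D_\sigma$ together with the requirement that $\boldsymbol\sigma$ be a genuine scalar field forces $\phi$ real. This yields \eqref{eq:14}--\eqref{eq:17}.

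The main obstacle is the first step. One has to verify that left multiplication by the \emph{entire} $B^i$ --- a quaternion on particles, an $M_4(\bb C)$-matrix on antiparticles --- really collapses, on the tiny $\nu_R$-subspace singled out by $D_M$, to one and the same complex number, so that no component of the $\bb H^l_L,\bb H^r_L$, of ${\bf M}_i$, or of ${\bf N}_i$ leaks into $D_\sigma$. This is exactly where the shape of $\cal B$ (Prop.~\ref{lemmasigma}) and the identifications \eqref{eq:8} are indispensable; granted that, the remaining bookkeeping through the real structure in the second step is routine, though sign-sensitive, and is the place where the real-versus-complex dichotomy --- hence the reality of the Higgs-like scalar $\phi$ --- originates.
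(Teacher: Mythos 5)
Your proposal is correct and follows essentially the same route as the paper's proof: it feeds Lemma~\ref{lemmaoneformsingle} into \eqref{eq:27}, uses the identification \eqref{eq:8} defining $\cal B$ to collapse $\mathbb{A}_M$ to the single scalar $\phi=d^{ri}(c^r_i-c^l_i)$ as in \eqref{eq:106}, conjugates by $J$ to obtain \eqref{eq:108}, and then adds $D_M$ and factors out $\gamma^5 D_R$. The only (harmless) difference is in how the reality of $\phi$ is justified: the paper invokes selfadjointness of \eqref{eq:109} alone, while you invoke selfadjointness together with the requirement that $\boldsymbol\sigma$ have the scalar form \eqref{eq:17}, which is an equally acceptable (indeed slightly more cautious) phrasing of the same step.
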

\begin{proof}
Let $B^i=(R^i, N^i)$ as in \eqref{eq:11}.  From lemma \ref{lemmaoneformsingle} one gets
  \begin{equation}
    \label{eq:106}
  {\mathbb A}_M=  B^i [ D_M, A_i]_\rho = \, \phi\, \left(\begin{array}{cc}
      0_{2} & k_R\, {\cal S}\\ 
        \bar k_R \, {\cal S}' & 0_{2}\end{array}\right)_{\sC\sD} \delta_{\dot s}^{\dot t}\;\Xi_{\sI\alpha}^{\sJ\beta}
\end{equation}
where
\begin{equation}
  \label{eq:13}
  \phi := d^{ir}(c^r_i-c^l_i). 
\end{equation}
One has ${\cal  J}({\cal S}\delta_{\dot s}^{\dot t}){\cal  J}=
    -{\cal S}\delta_{\dot s}^{\dot t}$ and ${\cal  J}({\cal S'}\delta_{\dot s}^{\dot t}){\cal  J}= -{\cal S}'\delta_{\dot s}^{\dot t}$.
Hence
\begin{equation}
  \label{eq:108}
  J \mathbb A_M  J^{-1} =-  J \mathbb A_M  J =
  \bar \phi\,\left(\begin{array}{cc} 0_{2} & k_R {\cal S}'\\ \bar k_R {\cal S} & 0_{2}\end{array}\right)_{\sC\sD}\delta_{\dot s}^{\dot t}\;\Xi_{\sI\alpha}^{\sJ\beta}
\end{equation}
so that
\begin{equation}
  \label{eq:109}
 D_M +  \mathbb A_M + J \mathbb A_M J^{-1} =
  \left(\begin{array}{cc} 0_{2} & k_R\,(\eta_s^t +\phi\,{\cal S} + \bar\phi\,{\cal S}')\\
      \bar k_R\,(\eta_s^t +\phi\,{\cal S}' +\bar\phi {\cal S})&0_2\end{array}\right)_{\sC\sD}\delta_{\dot s}^{\dot t}\;\Xi_{\sI\alpha}^{\sJ\beta}.
\end{equation}
 It is selfadjoint if and only if $\phi=\bar \phi$.
Then 
\begin{align}
  \label{eq:4}
  D_\sigma &:=  D_M +  \mathbb A_M + J \mathbb A_M J^{-1}
  =\left(\begin{array}{cc}
0_4 & k_R(\gamma^5 + \phi \mathbb{I}_4)\\
\bar k_R(\gamma^5 + \phi \mathbb{I}_4)&
0_4\end{array}\right)_{\sC\sD}\;\Xi_{\sI\alpha}^{\sJ\beta},\\
& \; = (\gamma^5 + \phi \mathbb{I}) D_R.
\end{align}
Factorizing by $\gamma^5$, one gets the result.
\end{proof}
\bigskip

Propositions \ref{propdx} and \ref{propsigma} prove the second
statement of theorem \ref{theo1}. The field $\boldsymbol \sigma$ in \eqref{eq:17} is slightly different from the one obtained in
\cite{Devastato:2013fk} by a non-twisted fluctuation of
$D_M$ by $\A_{sm}\otimes~C^\infty(\M)$, namely 
\begin{equation}
\sigma= (1+\phi)\mathbb I.
\label{eq:25}
\end{equation}
We comment on that in the conclusion.

\section{Breaking of the grand symmetry to the standard model}\setcounter{equation}{0}
\label{section:breaking}

We prove the third and fourth point of theorem \ref{theo1} by computing the
spectral action for the twisted-covariant
Dirac operator
\begin{equation}
D_{\bb A} = D_X + D_\sig,
\label{eq:198}
\end{equation}
where $D_X$ and $D_\sig$
have been obtained by twisted fluctuation of $\Ds$ and $D_M$ in
\eqref{eq:132} and \eqref{eq:14}. More precisely, we show that the
potential part of this action is minimum when the Dirac operator $\slashed D +
D_M$ of
the twisted spectral triple is fluctuated by a subalgebra of $ C^\infty(\M)\otimes {\cal B} $
which is invariant under the automorphism $\rho$. The maximal
such sub-algebra is precisely  the algebra $\cinf \otimes \A_{sm}$ of the standard
  model. Indeed  by \eqref{eq:91} an element $(Q,M)$ of $\cal B$ is invariant by
the automorphism $\rho$ if and only if
\begin{equation}
\rho(Q) = Q,
\label{eq:129}
\end{equation}
which means $\bb H_R^r = \bb H_R^l$ and $\bb C_L^r = \bb C_L^l$,
that is $(Q, M)\in \A_{sm}$.

We begin by some recalls on the spectral action, then we establish
the generalized Lichnerowicz formula for $D_{\bb A}$ and finally we
study the potential for the vector field, the scalar field, and their interaction.

\subsection{Spectral action}
\label{subsec:spectral}

A striking application of noncommutative geometry to physics is to give a gravitational
interpretation of the standard model \cite{Connes:1996fu}. By this,
one intends that the bosonic part of the SM
Lagrangian is deduced from an action which is purely geometric, that
is which depends only on the spectrum of the covariant Dirac
operator $D_A$ \eqref{eq:30} of the almost commutative geometry
of the standard model. The most obvious way to define such an action consists in
counting the eigenvalues lower than a given energy
scale $\Lambda$. This is the spectral action \cite{Chamseddine:1996kx} 
\begin{equation}
S=\Tr f\left(\frac{D_A^2}{\Lambda^2}\right) 
\label{eq:spectral_action}
\end{equation}
where $f$ is a positive cutoff function, usually the (smoothened)
characteristic function on the interval $[0,1]$.
It has an asymptotic expansion in power series~of~$\Lambda$, 
\begin{equation}
\label{asympexp}
\sum_{n\geq0} f_{4-n}\, \Lambda^{4-n}\, a_n(D_A^2/\Lambda^2)
\end{equation}
where the $f_n$ are the momenta of $f$
and  the $a_n$ the Seeley-de Witt coefficients
which are nonzero only for $n$ even.
To compute these coefficients, one usually starts with $D_A^2$
  written as an elliptic operator of Laplacian type, 
\begin{equation}
\label{laplacian}
D_A^2=-(g^{\mu\nu}\,\nabla_\mu^S\,\nabla_\nu^S +\alpha^\mu\,\nabla^S_\mu+\beta),
\end{equation}
and introduces the covariant derivative
\begin{equation}
  \label{eq:155}
  \nabla_\mu := \nabla^S_\mu +\omega_\mu 
\end{equation}
associated with the connection $1$-form
\begin{equation}
  \label{eq:156}
  \omega_\mu:=\frac12 g_{\mu\nu}\left(\alpha^\nu+g^{\sigma\rho}
  \Gamma^\nu_{\sigma\rho}\right). 
\end{equation}
This yields the generalized Lichnerowicz formula
\begin{equation}
  \label{eq:157}
  D_A^2 = -\nabla_\mu \nabla^\mu -E
\end{equation}
where
\begin{align}
\label{laplawithoutdilaton} E:=\beta-g^{\mu\nu}\left(\nabla^S_\mu\, \omega_\nu+\omega_\mu\omega_\nu-\Gamma^\rho_{\mu\nu}\omega_\rho\right).
\end{align}
The coefficients
$a_n$ are then computed by usual technics of heat kernel. The first ones are \cite{Gilkey1984,Vassilevich:2003fk}
\begin{eqnarray}
a_0&=&\frac 1{16\pi^2}\int\dd x^4 \sqrt{g}
\Tr(Id), \hfill\\ 
a_2&=&\frac 1{16\pi^2}\int\dd x^4 \sqrt{g}
\Tr\left(-\frac R6+E\right)\nonumber \hfill\\
a_4&=&\frac{1}{16\pi^2}\frac{1}{360}\int\dd x^4 \sqrt{g} \Tr(-12\nabla^\mu\nabla_\mu R +5R^2-2R_{\mu\nu}R^{\mu\nu}\\
& &+2R_{\mu\nu\sigma\rho}R^{\mu\nu\sigma\rho}-60RE+180E^2+60\nabla^\mu\nabla_\mu
E+30\Omega_{\mu\nu}\Omega^{\mu\nu}) \label{spectralcoeff}
\end{eqnarray}
where $R_{\mu\nu}$ is the Ricci tensor, $-R$ the scalar curvature and
$\Omega_{\mu\nu}$ the curvature of the connection $\omega_\mu$.
Applied to the spectral triple \eqref{eq:02} of the standard model,
fluctuated according to \eqref{eq:30}, the expansion
(\ref{asympexp}) yields the bosonic part of 
Lagrangian of the standard model - including the Higgs - minimally coupled with
gravity~\cite[Sect.~4.1]{Chamseddine:2007oz}. For the fermionic action
and how it is related to the spectral action see \cite{Adrianov:2011fk}, \cite{Lizzi:2010vn} and for a complete and pedagogical treatment of
the subject, see the recent book \cite{Walterlivre}. 

Here we compute the asymptotic expansion \eqref{asympexp} for the twisted covariant Dirac operator $D_{\mathbb
  A}$ \eqref{eq:198}.
For simplicity we restrict to the flat
case $g^{\mu\nu}=\delta^{\mu\nu}$, so that \eqref{eq:155}, \eqref{eq:156} and \eqref{laplawithoutdilaton} reduce to
\begin{equation}
  \label{eq:-38}
\nabla_{\mu}=\partial_{\mu}+\omega_{\mu}\,,\,\quad\omega_{\mu}=\frac{1}{2}g_{\mu\nu}\alpha^{\nu}\,,\,\quad
E=\beta-g^{\mu\nu}\left(\partial_{\mu}\omega_{\nu}+\omega_{\mu}\omega_{\nu}\right),
\end{equation}
that is
\begin{equation}
  \label{eq:38}
 \nabla_\mu = \del_\mu + \frac 12 \alpha_\mu, \quad  E = \beta - \frac 14 \alpha \cdot \alpha - \frac 12
\del_\mu \alpha^\mu
\end{equation}
where $\alpha\cdot\alpha := g_{\mu\nu}\alpha^\mu \alpha^\nu$ denotes
the inner product defined by the Riemannian metric. Furthermore, in all this section, we consider fluctuations such that $D_X$
and $D_\sig$ are selfadjoint, meaning $X_\mu$ satisfies lemma
\ref{lemma-selfadjoint} and $\phi$ is a real field.

\subsection{Lichnerowicz formula for the twisted-covariant Dirac operator}
\label{section:squaretwisteddirac}

We define
\begin{equation}
\slashed X :=-i\gamma^\mu X_\mu, \quad\slashed \rho(X) := -i\gamma^\mu
\rho(X_\mu).
\label{eq:5}
\end{equation}
These are selfadjoint operators since by \eqref{eq:76} and lemma \ref{lemma-selfadjoint} one has
\begin{equation}
  \label{eq:75}
  \slashed X^\dagger = i \,X_\mu^\dagger \gamma^\mu = -i
  \rho(X_\mu)\gamma^\mu = -i \gamma^\mu X_\mu =  \slashed X,
\end{equation}
and similarly for $\slashed\rho(X)$. 
The same is true for
\begin{equation}
  \label{eq:85}
  \slashed{\bar X} := -i\gamma^\mu\bar X_\mu,\quad   \slashed\rho({\bar X}) := -i\gamma^\mu\rho(\bar X_\mu).
\end{equation}
Similar equations hold for the field $\sig$, by extending the automorphism $\rho$ to $\cal B(\cal H)$
as the conjugate action of the unitary operator that exchanges the
indices $l$ and $r$ in the basis of $\sf H$. Doing so, one gets
$\rho(\gamma^5) = - \gamma^5$, that is
\begin{equation}
  \label{eq:66}
  \rho(\sig) = \mathbb I - \gamma^5 \phi.
\end{equation} 
Thus $\sig$ twisted-commutes with $\gamma^\mu$ - as $X_\mu$ in
\eqref{eq:76} - for the anti-commutativity of $\gamma^\mu$ and $\gamma^5$ yields
\begin{equation}
\label{eq:81}
\gamma^\mu \sig = \rho(\sig)\gamma^\mu,\quad\quad \gamma^{\mu}\rho(\sig) = \sig\gamma^\mu.
\end{equation}

The standard model algebra $\A_{sm}$ is the subalgebra of $\cal B$
invariant under the twist. To measure how far the grand symmetry
is from the SM, we introduce as physical degrees of freedom
the fields
\begin{equation}
  \label{eq:171}
  \Delta(X)_\mu:= X_\mu -\rho(X_\mu), \quad \Delta(\sig) := (\sig - \rho(\sig))D_R.
\end{equation}
Both are selfadjoint, $\Delta(X)_\mu$ by  lemma
\ref{lemma-selfadjoint}, $\Delta(\sig)$ because $\sig$ and $D_R$ are
selfadjoint and commute. Moreover, by \eqref{eq:76} and \eqref{eq:81} one has
\begin{equation}
  \label{eq:146}
  \left\{\gamma^\mu, \Delta(X)_\nu\right\}
  =\left\{\gamma^\mu,\Delta(\sig)\right\} = 0, 
\end{equation} 
while  $\gamma^5$ commuting with $X_\mu$ and $\sig$ guarantee that
\begin{equation}
\left[\gamma^5, \Delta(X)_\nu\right] =\left[\gamma^5,
    \Delta(\sig)\right] = 0. 
\end{equation}

We write
\begin{equation}
  \label{eq:166}
  \rho(\mathbb X_\mu) := \left(
\begin{array}{cc} 
  \slashed\rho(X) & 0_{64} \\
  0_{64}&-\slashed\rho(\bar X)
\end{array}\right)_{\sC\sD},\quad\quad
  \Delta(\XX)_\mu := \XX_\mu - \rho(\XX_\mu),
\end{equation}
and in agreement with  \eqref{eq:132} and \eqref{eq:192} written as
\begin{equation}
  \label{eq:1999}
  \slashed\XX = -i \gamma^\mu {\mathbb X}_\mu = \left(
    \begin{array}{cc}
      \slashed X & 0_{64} \\
      0_{64}&-\bar{\slashed X}\end{array}\right)_{\sC\sD},
\end{equation}
we also define the selfadjoint operators
\begin{equation}
  \label{eq:199}
  \slashed\rho(\XX) :=-i\gamma^\mu \rho(\XX_\mu), \quad \slashed\Delta(\XX) := \slashed\XX - \slashed\rho(\XX).
\end{equation}
Finally,  we let
\begin{equation}
  \label{eq:177}
  D_\mu := \del_\mu + \text{ad}\; \XX_\mu
\end{equation}
denote the covariant derivative associated with the connection $\XX_\mu$.

\begin{prop} The square of the twisted-covariant Dirac operator
\eqref{eq:198} is
  \begin{equation}
    \label{eq:47}
    D_{\bb A}^2 = -\left(\gamma^\mu\gamma^\nu\del_\mu\del_\nu + (\alpha_X^\mu+\alpha^\mu_{\sig}) \del_\mu
+\beta_X +\beta_{X\sig}+ \beta_{\sig}\right)
  \end{equation}
where
 \begin{eqnarray}\label{eq:alpha}
\alpha^{\mu}_X & := i\left\{\slashed{\mathbb X}, \gamma^\mu\right\} ,
\quad 
\beta_X = i\gamma^\mu (\del_\mu\slashed{\mathbb X}) -  \slashed{\mathbb X}\slashed{\mathbb X},
\end{eqnarray} 
while
\begin{equation}
  \label{eq:48}
\alpha^\mu_{\sig}:=  i
   \gamma^{\mu}\gamma^{5} \Delta(\sig),\quad\quad   \beta_{\sig} := -\sig^2 D_R^2,
\end{equation}
and
\begin{equation}
  \label{eq:153}
\beta_{X\sig}:= i\gamma^\mu\gamma^5\left(D_\mu (\sig D_R) + \Delta(\sig)\,\XX_\mu\right).
\end{equation}
\end{prop}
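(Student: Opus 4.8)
The plan is to expand $D_{\bb A}^2 = (D_X + D_\sig)^2 = D_X^2 + D_\sig^2 + \{D_X, D_\sig\}$ and to match the three groups of terms in \eqref{eq:47} with these three pieces: $D_X^2$ produces the Laplacian-type term together with $\alpha^\mu_X$ and $\beta_X$; $D_\sig^2$ produces $\beta_\sig$; and the anticommutator $\{D_X, D_\sig\}$ produces $\alpha^\mu_\sig$ together with $\beta_{X\sig}$. Throughout I would use the flat gauge of \S\ref{subsec:spectral}, so that $\slashed D = -i\gamma^\mu\del_\mu$, every $\nabla^S_\mu$ reduces to $\del_\mu$, the $\gamma^\mu$ are $x$-independent and anticommute with $\gamma^5$, and $D_R$ commutes with all the $\gamma^\mu$ and with $\gamma^5$ (in the mixed notation it acts trivially on the spinor indices $s\dot s$). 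I also use that only selfadjoint fluctuations are considered, so $\phi$ is real and $X_\mu$ obeys Lemma~\ref{lemma-selfadjoint}.

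For $D_X^2$ I would write $D_X = -i\gamma^\mu(\del_\mu + \XX_\mu)$ and square: commuting $\del_\mu$ to the right (the $\XX_\mu$ being $x$-dependent) and moving $\XX_\nu$ past $\gamma^\mu$ with the twisted relation $\gamma^\mu\XX_\nu = \rho(\XX_\nu)\gamma^\mu$ of \eqref{eq:76}, the double-derivative term is $-\gamma^\mu\gamma^\nu\del_\mu\del_\nu$, the single-derivative coefficient collapses to $-\alpha^\mu_X$, and the remaining zeroth-order terms assemble into $-\beta_X$, with $\alpha^\mu_X$, $\beta_X$ as in \eqref{eq:alpha}. The computation of $D_\sig^2$ is immediate: since $\phi$ is scalar, $\sig = \mathbb I + \gamma^5\phi$ commutes with $\gamma^5$ and with $D_R$, so $D_\sig^2 = \sig\gamma^5 D_R\,\sig\gamma^5 D_R = \sig^2(\gamma^5)^2 D_R^2 = \sig^2 D_R^2 = -\beta_\sig$.

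The heart of the proof is the anticommutator, which I would split as $\{D_X, D_\sig\} = \{\slashed D, \sig\gamma^5 D_R\} + \{\slashed{\mathbb X}, \sig\gamma^5 D_R\}$. In the first bracket, letting $\del_\mu$ act on $\sig$ produces a zeroth-order piece $-i\gamma^\mu(\del_\mu\sig)\gamma^5 D_R = -i\gamma^\mu\gamma^5\del_\mu(\sig D_R)$ (using $\del_\mu\sig = \gamma^5\,\del_\mu\phi$ and $(\gamma^5)^2 = \mathbb I$) and a first-order piece $i(\sig\gamma^\mu - \gamma^\mu\sig)\gamma^5 D_R\del_\mu$; by the twist \eqref{eq:81} one has $\sig\gamma^\mu - \gamma^\mu\sig = -\gamma^\mu(\sig - \rho(\sig))$ with $\rho(\sig) = \mathbb I - \gamma^5\phi$ from \eqref{eq:66}, and since $\sig - \rho(\sig)$ commutes with $\gamma^5$ this first-order piece equals $-i\gamma^\mu\gamma^5(\sig - \rho(\sig))D_R\del_\mu = -\alpha^\mu_\sig\del_\mu$, the sole source of the $\alpha_\sig$ term. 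In the second bracket $\slashed{\mathbb X} = -i\gamma^\nu\XX_\nu$ carries no derivative; pushing $\gamma^\nu$ across $\sig\gamma^5 D_R$ with \eqref{eq:81} and using that $\XX_\nu$ commutes with $\gamma^5$ and with $\phi$, one obtains $\{\slashed{\mathbb X}, \sig\gamma^5 D_R\} = i\gamma^\mu\gamma^5(\rho(\sig)D_R\XX_\mu - \XX_\mu\sig D_R)$. The purely algebraic identity $[\XX_\mu, \sig D_R] + (\sig - \rho(\sig))D_R\,\XX_\mu = \XX_\mu\sig D_R - \rho(\sig)D_R\,\XX_\mu$, combined with $\del_\mu(\sig D_R) + [\XX_\mu, \sig D_R] = D_\mu(\sig D_R)$ for the covariant derivative \eqref{eq:177}, then rewrites the total zeroth-order content of $\{D_X, D_\sig\}$ as $-i\gamma^\mu\gamma^5(D_\mu(\sig D_R) + \Delta(\sig)\XX_\mu) = -\beta_{X\sig}$ with $\Delta(\sig)$ as in \eqref{eq:171}. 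Collecting the three pieces yields \eqref{eq:47}.

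The main obstacle is the cross term $\{\slashed{\mathbb X}, D_\sig\}$: $\XX_\mu$ is block-diagonal in the particle/antiparticle index $\sC\sD$ while $D_R$ is block-off-diagonal, so the two do not commute and the bare $\del_\mu$ must be promoted to the connection-covariant $D_\mu = \del_\mu + \text{ad}\,\XX_\mu$; moreover one must carefully track whether $\sig$ or $\rho(\sig)$ sits on the left or the right of each factor, which is exactly what produces the non-symmetric combination $\Delta(\sig)\XX_\mu$ (rather than a symmetrised one) in $\beta_{X\sig}$. Everything else is routine manipulation with the Clifford relations and with \eqref{eq:76}, \eqref{eq:81} and Lemma~\ref{lemma-selfadjoint}; no new idea enters.
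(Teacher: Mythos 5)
Your proposal is correct and follows essentially the same route as the paper: the decomposition $D_{\bb A}^2 = D_X^2 + D_\sig^2 + \{D_X,D_\sig\}$, the split of the anticommutator into $\{\slashed D, D_\sig\}$ and $\{\slashed\XX, D_\sig\}$, and the use of the twisted relations \eqref{eq:76}, \eqref{eq:81} to produce $\Delta(\sig)$ and the covariant derivative $D_\mu$ are exactly the paper's steps. The algebraic identity you isolate for the cross term, $\XX_\mu\sig D_R - \rho(\sig)D_R\XX_\mu = [\XX_\mu,\sig D_R] + \Delta(\sig)\XX_\mu$, is precisely the paper's rewriting of the twisted commutator $[\XX_\mu,\sig D_R]_\rho$.
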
\begin{proof}
One has
$  D_{\bb A}^2 =D_X^2 + D_\sigma^2 +  \left\{D_X, D_\sigma\right\}.$
 By \eqref{eq:132}, the first term is
\begin{align}
D_X^{2}&=-\gamma^\mu(\del_\mu + \XX_\mu)\, \gamma^\nu(\del_\nu + \XX_\nu) \\
 &= -\gamma^\mu\gamma^\nu \del_\mu\del_\nu -i\left\{\slashed
   \XX,\gamma^\mu\right\}\del_\mu -i \gamma^\mu (\del_\mu \slashed\XX)
 +\slashed \XX\slashed \XX\\
&= -\left(\gamma^\mu\gamma^\nu\del_\mu\del_\nu
   +\alpha^{\mu}_X\partial_{\mu}+\beta_X\right).
\label{eq:161}
\end{align}
By propositions \ref{propdx} one has
\begin{align}
  \label{eq:193}
  \left\{D_X, D_\sig\right\} = -i\left\{\gamma^\mu \del_\mu,
    D_\sig\right\} -i \left\{\gamma^\mu \XX_\mu,
    D_\sig\right\}. 
\end{align}
From proposition \ref{propsigma}, using \eqref{eq:81} and $\left\{\gamma^5,
  \gamma^\mu\right\} = [\gamma^5, \sig] = 0$, one gets
\begin{align}
  \label{eq:194}
  \left\{\gamma^\mu \del_\mu, D_\sig\right\} =   \left\{\gamma^\mu\del_\mu, \gamma^5\sig D_R\right\}
&=\gamma^\mu\gamma^5 \del_\mu \sig  D_R  - \gamma^\mu\gamma^5\rho(\sig) D_R \,\del_\mu,\\
& =\gamma^\mu\gamma^5 \left(\del_\mu\sig D_R\right)+ \gamma^\mu \gamma^5\Delta(\sig) D_R\,\del_\mu.
\label{eq:194-ter}\end{align}
Similarly, using that $\gamma^5$ commutes with $X_\mu$, hence with
$\mathbb X_\mu$, one has
\begin{align}
  \label{eq:195}
  \left\{\gamma^\mu \XX_\mu,
    D_\sig\right\} &=  \left\{\gamma^\mu \XX_\mu,
    \sig\gamma^5 D_R\right\}
 = \gamma^\mu \XX_\mu
    \sig\gamma^5 D_R + \sig\gamma^5 D_R\gamma^\mu
    \XX_\mu,\\
&= \gamma^\mu\gamma^5 [\XX_\mu, \sig  D_R]_\rho
= \gamma^\mu\gamma^5\left( [\XX_\mu, \sig  D_R] +
  \Delta(\sig)\,\XX_\mu\right).
\label{eq:195-ter}
\end{align}
Summing \eqref{eq:195-ter} and \eqref{eq:194-ter}, and using the
definition \eqref{eq:177} of $D_\mu$, one rewrites \eqref{eq:193} as 
$\left\{D_X, D_\sig\right\} = -(\alpha^\mu_{\sig}\del_\mu +
\beta_{X\sig})$.
Finally from \eqref{eq:14} one has $D_\sig^2 =-\beta_\sig$.
\end{proof}
Remarkably the contributions $\alpha_{\sig}^\mu$  of the
anti-commutator of $D_X$ and $D_\sig$ to the order one part of
$D_{\mathbb A}^2$ depends on $\sig$ only, and not on $X$. The
same is true for $\beta_\sig$. The contributions
$\alpha_X^\mu$ and $\beta_X$ of $D_X$ depend on $X$ only, and not on
$\sig$. Thus in the Lichnerowicz formula for
$D_{\mathbb A}^2$, that is
\begin{equation}
  \label{eq:173}
  D_{\mathbb A}^2 = -\nabla_\mu\nabla^\mu -E
\end{equation}
with
\begin{equation}
  \label{eq:174}
  \nabla_\mu = \del_\mu + \frac 12g_{\mu\nu} (\alpha_X^\nu+\alpha_{\sig}^\nu),
\end{equation}
the bounded endormorphism $E$ is the sum 
 \begin{equation}
   \label{eq:70}
   E = E_X + E_\sig + E_{X\sig}
 \end{equation}
 of three terms:
 \begin{equation}
 E_X := \beta_X - \frac 14 \alpha_X \cdot \alpha_X - \frac 12 \del_\mu \alpha_X^\mu,
 \label{eq:73}
 \end{equation}
which depends only on $X$,
\begin{equation}
\label{eq:73-1}
E_\sig := \beta_\sig - \frac 14\alpha_{\sig}\cdot \alpha_{\sig} -
\frac 12 \del_\mu \alpha_{\sig},
  \end{equation}
that depends only on $\sig$,  and an interaction term
\begin{equation}
  \label{eq:72}
  E_{X\sig}:= \beta_{X\sig} -  \frac 14\left(\alpha_X\cdot
    \alpha_{\sig} + \alpha_{\sig}\cdot \alpha_X\right). 
\end{equation}

\subsection{Deviation from the non-twisted case}
\label{deviation}

We write the endomorphisms $E_X, E_\sig$ and
$E_{X\sig}$ that appear in the Lichnerowicz formula \eqref{eq:173} of the
twisted-covariant Dirac operator $D_{\mathbb A}$ in
terms of the physical degrees of freedom $\Delta(\sig)$,
$\Delta(X)_\mu$ defined in \eqref{eq:171}. This will permit to measure how far the twisted
spectral triple \eqref{eq:189} is from the spectral
triple of the SM, basing our measure on the spectral
action. Let us start with a technical lemma.\begin{lemma} 
\label{lemmabreaking}
One has
  \begin{align}
\label{alphaxalphax}
&\alpha_X\cdot\alpha_X = 2\left\{\slashed\rho(\XX), \slashed\XX\right\} - 4 \slashed \XX \slashed\XX  
+ 4\XX\cdot\rho(\XX),\\
\label{alphaxalphasig}
&\alpha_X\cdot\alpha_{\sig}+  \alpha_{\sig}\cdot\alpha_{X} = -2i\gamma^\mu \gamma^5\left\{2\XX_\mu  -\Delta(\XX)_\mu, \Delta(\sig) \right\}.
\end{align}
\end{lemma}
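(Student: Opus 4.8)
The plan is to prove both identities by direct expansion, using only the twisted commutation relations \eqref{eq:76} and \eqref{eq:81}, the (anti)commutations \eqref{eq:146} together with $[\gamma^5,\Delta(\sig)]=0$ and the fact that $\gamma^5$ commutes with $X_\mu$, the definitions \eqref{eq:171} of $\Delta(X)_\mu$ and $\Delta(\sig)$, and the Clifford relation $\{\gamma^\mu,\gamma^\nu\}=2g^{\mu\nu}$ with its flat four-dimensional contractions $g_{\mu\nu}\gamma^\mu\gamma^\nu=4$ and $g_{\mu\nu}\gamma^\mu\gamma^\lambda\gamma^\nu=-2\gamma^\lambda$.

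First I would put $\alpha^\mu_X=i\{\slashed{\mathbb X},\gamma^\mu\}$ into a workable shape. Writing $\slashed{\mathbb X}=-i\gamma^\nu\XX_\nu$ and using that \eqref{eq:76} gives simultaneously $\XX_\nu\gamma^\mu=\gamma^\mu\rho(\XX_\nu)$ and $\rho(\XX_\nu)\gamma^\mu=\gamma^\mu\XX_\nu$, one carries a $\gamma^\mu$ through $\slashed{\mathbb X}$ and applies the Clifford relation to obtain
\[
\alpha^\mu_X=\gamma^\nu\gamma^\mu\rho(\XX_\nu)+\gamma^\mu\gamma^\nu\XX_\nu=2\rho(\XX^\mu)+\gamma^\mu\gamma^\nu\Delta(\XX)_\nu ,
\]
equivalently $\alpha^\mu_X=2(\XX^\mu+\rho(\XX^\mu))-i\{\gamma^\mu,\slashed\rho(\XX)\}$. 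Along the way I would record the two auxiliary contractions $g_{\mu\nu}\gamma^\mu\slashed{\mathbb X}\gamma^\nu=-2\slashed\rho(\XX)$ and $g_{\mu\nu}\gamma^\mu\slashed\Delta(\XX)\gamma^\nu=2\slashed\Delta(\XX)$, both immediate from $g_{\mu\nu}\gamma^\mu\gamma^\lambda\gamma^\nu=-2\gamma^\lambda$ (for the latter one also using $\{\gamma^\mu,\Delta(\XX)_\nu\}=0$).

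For \eqref{alphaxalphax} I would insert the first form into $\alpha_X\cdot\alpha_X=g_{\mu\nu}\alpha^\mu_X\alpha^\nu_X$. This yields a term $4\,\rho(\XX)\cdot\rho(\XX)$ from the $\rho(\XX)\rho(\XX)$ piece; two cross contributions which, after pushing $\rho(\XX_\mu)$ past a single Dirac matrix via \eqref{eq:76} and reordering $\gamma^\mu\gamma^\nu\Delta(\XX)_\nu$, become $-2\slashed{\mathbb X}\slashed\Delta(\XX)$, $4\,\Delta(\XX)\cdot\rho(\XX)$ and $-2\slashed\Delta(\XX)\slashed\rho(\XX)$; and a last term $-2\slashed\Delta(\XX)\slashed\Delta(\XX)$ coming from the second auxiliary contraction above. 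Substituting $\slashed{\mathbb X}=\slashed\rho(\XX)+\slashed\Delta(\XX)$ and using the pointwise identity $\rho(\XX_\mu)+\Delta(\XX)_\mu-\XX_\mu=0$ to merge $4\,\rho(\XX)\cdot\rho(\XX)+4\,\Delta(\XX)\cdot\rho(\XX)$ into $4\,\XX\cdot\rho(\XX)$, the five contributions collapse exactly to $2\{\slashed\rho(\XX),\slashed{\mathbb X}\}-4\slashed{\mathbb X}\slashed{\mathbb X}+4\,\XX\cdot\rho(\XX)$.

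For \eqref{alphaxalphasig} I would expand $g_{\mu\nu}(\alpha^\mu_X\alpha^\nu_\sig+\alpha^\mu_\sig\alpha^\nu_X)$ with $\alpha^\nu_\sig=i\gamma^\nu\gamma^5\Delta(\sig)$, again using the first form of $\alpha^\mu_X$. Because $\gamma^5\Delta(\sig)$ commutes with $\gamma^5$ and anticommutes with every $\gamma^\mu$, and each $\XX_\nu,\rho(\XX_\nu)$ commutes with $\gamma^5$ while twisted-commuting with the $\gamma$'s, every monomial can be brought to the form $\gamma^\mu\gamma^5\cdot(\text{field})$; the two contraction patterns $g_{\mu\nu}\gamma^\mu\gamma^\nu=4$ and $g_{\mu\nu}\gamma^\mu\gamma^\lambda\gamma^\nu=-2\gamma^\lambda$ then merge the $\XX_\mu$- and $\rho(\XX_\mu)$-terms into the $\rho$-invariant combination $2\XX_\mu-\Delta(\XX)_\mu=\XX_\mu+\rho(\XX_\mu)$, while summing the two orderings $\alpha^\mu_X\alpha^\nu_\sig$ and $\alpha^\mu_\sig\alpha^\nu_X$ restores the anticommutator, giving $-2i\gamma^\mu\gamma^5\{2\XX_\mu-\Delta(\XX)_\mu,\Delta(\sig)\}$.

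The only real difficulty is organizational rather than conceptual: in the twisted setting $X_\mu$ neither commutes nor anticommutes with the Dirac matrices, so the contractions constantly interchange $\XX_\mu,\rho(\XX_\mu),\slashed{\mathbb X},\slashed\rho(\XX),\slashed\Delta(\XX)$, and one has to keep scrupulous track of which of these sits where, and in which order, in each of the many terms. What makes the bookkeeping terminate is precisely the trio of elementary identities $\slashed{\mathbb X}=\slashed\rho(\XX)+\slashed\Delta(\XX)$, $\XX_\mu=\rho(\XX_\mu)+\Delta(\XX)_\mu$, and the observation that $\XX_\mu+\rho(\XX_\mu)$ is the $\rho$-fixed combination through which both $\slashed{\mathbb X}$ and $\Delta(\sig)$ (anti)commute without twist.
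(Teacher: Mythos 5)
Your treatment of \eqref{alphaxalphax} is correct and is essentially a repackaging of the paper's own computation: the paper contracts $g_{\mu\nu}\{\slashed\XX,\gamma^\mu\}\{\slashed\XX,\gamma^\nu\}$ directly with the help of \eqref{eq:40}, whereas you first rewrite $\alpha^\mu_X=2\rho(\XX^\mu)+\gamma^\mu\gamma^\nu\Delta(\XX)_\nu$ and do the bookkeeping in the variables $\rho(\XX)$, $\Delta(\XX)$. The intermediate terms you list ($4\,\rho(\XX)\cdot\rho(\XX)$, $-2\slashed\XX\slashed\Delta(\XX)$, $4\,\Delta(\XX)\cdot\rho(\XX)$, $-2\slashed\Delta(\XX)\slashed\rho(\XX)$, $-2\slashed\Delta(\XX)\slashed\Delta(\XX)$) are correct and do collapse to the stated right-hand side, so for the first identity the proposal is fine.

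For \eqref{alphaxalphasig} there is a genuine gap. A first, sign-relevant slip: $\gamma^5\Delta(\sig)$ \emph{commutes} with every $\gamma^\mu$ (both factors anticommute with $\gamma^\mu$), it does not anticommute. More seriously, the decisive step --- that ``the two contraction patterns merge the $\XX_\mu$- and $\rho(\XX_\mu)$-terms into $2\XX_\mu-\Delta(\XX)_\mu=\XX_\mu+\rho(\XX_\mu)$'' with overall factor $-2i$ --- is only asserted, and it is exactly the step that does not come out of your setup. If one actually inserts $\alpha^\mu_X=2\rho(\XX^\mu)+\gamma^\mu\gamma^\nu\Delta(\XX)_\nu$ and $\alpha^\nu_\sig=i\gamma^\nu\gamma^5\Delta(\sig)$, and uses $\rho(\XX_\lambda)\gamma^\mu=\gamma^\mu\XX_\lambda$, $\{\gamma^\mu,\Delta(\XX)_\nu\}=\{\gamma^\mu,\Delta(\sig)\}=0$ together with $g_{\mu\nu}\gamma^\mu\gamma^\nu=4$ and $g_{\mu\nu}\gamma^\mu\gamma^\lambda\gamma^\nu=-2\gamma^\lambda$, the $4$-contraction attaches to the untwisted factor and the $-2$-contraction to the twisted one, so the coefficients come out as $+4$ on $\XX_\mu$ and $-2$ on $\rho(\XX_\mu)$; the expansion then gives
\[
\alpha_X\cdot\alpha_{\sig}+\alpha_{\sig}\cdot\alpha_X
=2i\gamma^\mu\gamma^5\left\{2\XX_\mu-\rho(\XX_\mu),\,\Delta(\sig)\right\}
=2i\gamma^\mu\gamma^5\left\{\XX_\mu+\Delta(\XX)_\mu,\,\Delta(\sig)\right\},
\]
which differs from the stated right-hand side $-2i\gamma^\mu\gamma^5\{2\XX_\mu-\Delta(\XX)_\mu,\Delta(\sig)\}$ by $6i\gamma^\mu\gamma^5\{\XX_\mu,\Delta(\sig)\}$. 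The paper's proof reaches the stated formula through the contraction \eqref{eq:201} for $g_{\mu\nu}\{\slashed\XX,\gamma^\mu\}\gamma^\nu$, and a direct check of that contraction also produces the asymmetric combination $4\slashed\XX-2\slashed\rho(\XX)$; so to prove \eqref{alphaxalphasig} as stated you would have to exhibit explicitly the mechanism that turns the $(+4,-2)$ coefficients into the symmetric combination $\XX_\mu+\rho(\XX_\mu)$ with overall $-2i$ (or identify and resolve the sign discrepancy in this contraction). Nothing in your sketch does this: the one computation on which the second identity hinges is waved through, and when made explicit along the route you propose it yields a different expression. As written, the proposal does not establish \eqref{alphaxalphasig}.
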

\begin{proof} One has
  \begin{align}
  \label{eq:53}
\alpha^\mu_X\cdot \alpha^\nu_X  &= -
  (\slashed\XX \gamma^\mu + \gamma^\mu\slashed\XX)   (\slashed\XX
  \gamma^\nu + \gamma^\nu\slashed\XX) \\
\label{eq:53bis}
&=-\left(
   \slashed \XX \gamma^\mu \slashed\XX\gamma^\nu+\slashed\XX \gamma^\mu  \gamma^\nu\slashed\XX+   \gamma^\mu\slashed\XX\slashed\XX
  \gamma^\nu +   \gamma^\mu\slashed\XX\gamma^\nu\slashed\XX\right).
\end{align}
The contraction of the second term with the metric is easily computed using $g_{\mu\nu}\gamma^\mu\gamma^\nu = 4 \mathbb I$: 
\begin{equation}
  \label{eq:203}
 g_{\mu\nu} \, \slashed \XX \gamma^\mu  \gamma^\nu\slashed \XX = 4\slashed
\XX\slashed \XX. 
\end{equation}
For the remaining terms, \eqref{eq:76} written as
\begin{equation}
\XX_\mu \gamma^\nu = \gamma^\nu \rho(\XX_\mu)
\label{eq:154}
\end{equation}
together with
$\gamma^\mu\gamma^\nu = 2g^{\mu\nu}{\mathbb I} - \gamma^\nu\gamma^\mu$ yields
\begin{align}
  \label{eq:40}
  \slashed \XX \gamma^\mu = -2i\rho(\XX^\mu) -\gamma^\mu \slashed\rho(\XX),
 \quad\gamma^\mu \slashed \XX = -2i\XX^\mu- \slashed\rho(\XX)\gamma^\mu.
 \end{align}
Therefore
\begin{align}
  \label{eq:67}\nonumber
g_{\mu\nu}  \,\slashed \XX \gamma^\mu \slashed \XX\gamma^\nu &=-
g_{\mu\nu} \, \slashed \XX \gamma^\mu  \left(  2i\rho(\XX^\nu) + \gamma^\nu
\slashed\rho(\XX) \right) =   - 2\slashed \XX\, \slashed\rho(\XX);\\[5pt] 
g_{\mu\nu}\gamma^\mu\slashed \XX \gamma^\nu\slashed \XX &=
-g_{\mu\nu}\left(2i\XX^\mu +
  \slashed\rho(\XX)\gamma^\mu\right) \gamma^\nu\slashed \XX = - 2 \slashed\rho(\XX)\slashed \XX;
\\[5pt]
\nonumber
g_{\mu\nu}\gamma^\mu\slashed \XX\slashed \XX
  \gamma^\nu  &= g_{\mu\nu}\left( 2i\XX^\mu+
    \slashed\rho(\XX)\gamma^\mu\right)\left(2i\rho(\XX^\nu) + \gamma^\nu
    \slashed\rho(\XX)\right),\\
\nonumber
&=-4 \XX\cdot \rho(\XX) - 2 \slashed\rho(\XX)\slashed\rho(\XX) -  2
\slashed\rho(\XX)\slashed\rho(\XX) + 4\slashed\rho(\XX)
\slashed \rho(\XX) =  - 4 \XX\cdot \rho(\XX).
\end{align}
Hence the contraction of \eqref{eq:53bis} by $g_{\mu\nu}$ yields \eqref{alphaxalphax}.

To obtain \eqref{alphaxalphasig}, one starts with \eqref{eq:alpha} and
\eqref{eq:48} together with \eqref{eq:146}. This gives
\begin{align}
  \label{eq:88}
 \alpha^{\mu}_{X} \alpha^{\nu}_{X\sig} +\alpha^{\mu}_{X\sig}
    \alpha^{\nu}_X &:= -\left\{\slashed \XX, \gamma^{\mu}\right\}
          \gamma^\nu\gamma^5\Delta(\sig) -
          \gamma^\nu\gamma^5\Delta(\sig)
          \left\{\slashed{\XX},\gamma^{\mu}\right\},\\
  \label{eq:88-bis}
&= -\left\{\slashed \XX, \gamma^{\mu}\right\}
          \gamma^\nu\gamma^5\Delta(\sig)  -
          \gamma^5\Delta(\sig) 
          \gamma^\nu\left\{\slashed{\XX},\gamma^{\mu}\right\}.
\end{align}
By \eqref{eq:40} one has  
\begin{align}
  \label{eq:201}
   g_{\mu\nu} \left\{\slashed\XX, \gamma^{\mu}\right\} \gamma^\nu&=
     - g_{\mu\nu} \left(2i\rho(\XX^\mu)  +\gamma^\mu
     \slashed\rho(\XX) + 2i\XX^\mu +
 \slashed\rho(\XX)\gamma^\mu\right)\gamma^\nu,\\ &=
     2\slashed\XX  - 4\slashed\XX + 2\slashed\rho(\XX) -
 4\slashed\rho(\XX) = -4 \slashed\XX
  + 2\slashed{\Delta}(\XX),
\end{align}
and similarly $g_{\mu\nu} \,\gamma^\mu\left\{\slashed{\XX}, \gamma^{\nu}\right\}
   =   -4 \slashed\XX
  + 2\slashed{\Delta}(\XX)$. 
Therefore \eqref{eq:88-bis} gives
  \begin{align}
\label{eq:86}
    \alpha_{X}\cdot \alpha_{\sig} +\alpha_{\sig}\cdot\alpha_X =
    2\left\{2\slashed\XX  - \slashed \Delta(\XX), \gamma^5\Delta(\sig)
    \right\} &=
   2 \left\{-i\gamma^\mu(2\XX_\mu  - \Delta(\XX)_\mu),
     \gamma^5\Delta(\sig) \right\} ,\\ \nonumber
&= -2i\gamma^\mu \gamma^5\left\{2\XX_\mu  -\Delta(\XX)_\mu, \Delta(\sig) \right\}
 \end{align}
where in the last line we use that  $\gamma^\mu$ anticommutes with both
$\gamma^5$ and $\Delta(\sig)$, while $\gamma^5$ commutes with both
$\XX_\mu$ and $\Delta(\XX)_\mu$.
 \end{proof}

\begin{prop}
\label{Lichnedetail} One has
\begin{align}
  \label{eq:178}
  E_X &= \frac 12 \gamma^\mu\gamma^\nu\left(\mathbb{F}_{\mu\nu}
    +D_\nu\,\Delta(\XX_\mu) +
    \Delta(\XX)_\mu\Delta(\XX)_\nu\right),\\
\label{eq:178-1}
E_{X\sig}& = i\gamma^\mu\gamma^5\left( D_\mu (\sig D_R) - \frac 12[\XX_\mu, \Delta(\sig)]  +  \frac 12\left\{3\XX_\mu - \Delta(\XX)_\mu, \Delta(\sig) 
  \right\}\right)\\
\label{eq:178-2}
E_{\sig} &= \Delta(\sig)^2- \sig^2 D_R^2 - \frac
i2\gamma^\mu\gamma^5\del_\mu\Delta(\sig).
\end{align}
where
\begin{equation}
  \label{eq:176}
  {\mathbb F}_{\mu\nu}:= (\del_\mu \XX_\nu) - (\del_\nu \XX_\mu)
  +[\XX_\mu, \XX_\nu]  
\end{equation}
is the field strength of $\XX_\mu$
\end{prop}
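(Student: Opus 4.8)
The plan is to obtain the three formulas by feeding the expressions for $\alpha_X^\mu,\beta_X,\alpha_\sig^\mu,\beta_\sig,\beta_{X\sig}$ of the preceding proposition, together with the contracted identities \eqref{alphaxalphax}--\eqref{alphaxalphasig} of Lemma \ref{lemmabreaking}, into the definitions \eqref{eq:73}, \eqref{eq:73-1} and \eqref{eq:72}, and then rewriting the outcome in terms of the physical fields $\Delta(\XX)_\mu$, $\Delta(\sig)$ and the field strength \eqref{eq:176}. The only algebraic inputs needed are the twist relations \eqref{eq:76}, \eqref{eq:81}, \eqref{eq:146}, \eqref{eq:40}, the relation $\rho(\XX_\mu)=\XX_\mu-\Delta(\XX)_\mu$, and the Clifford identities $g_{\mu\nu}\gamma^\mu\gamma^\nu=4$ and $\gamma^\mu\gamma^\lambda\gamma_\mu=-2\gamma^\lambda$.

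Two of the three formulas require almost no work. For $E_\sig$: since $\Delta(\sig)$ anticommutes with $\gamma^\mu$ by \eqref{eq:146} and commutes with $\gamma^5$, one gets $\gamma^\mu\gamma^5\Delta(\sig)\,\gamma^\nu\gamma^5\Delta(\sig)=\gamma^\mu\gamma^\nu\Delta(\sig)^2$, hence $\alpha_\sig\cdot\alpha_\sig=-4\Delta(\sig)^2$ and $-\tfrac14\alpha_\sig\cdot\alpha_\sig=\Delta(\sig)^2$; the term $-\tfrac12\del_\mu\alpha_\sig$ of \eqref{eq:73-1} equals $-\tfrac i2\gamma^\mu\gamma^5\del_\mu\Delta(\sig)$ because $\gamma^\mu,\gamma^5$ are constant, and adding $\beta_\sig=-\sig^2D_R^2$ gives \eqref{eq:178-2}. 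For $E_{X\sig}$ one substitutes $\beta_{X\sig}=i\gamma^\mu\gamma^5(D_\mu(\sig D_R)+\Delta(\sig)\XX_\mu)$ and \eqref{alphaxalphasig}; what remains is the purely algebraic rearrangement $\Delta(\sig)\XX_\mu+\tfrac12\{2\XX_\mu-\Delta(\XX)_\mu,\Delta(\sig)\}=-\tfrac12[\XX_\mu,\Delta(\sig)]+\tfrac12\{3\XX_\mu-\Delta(\XX)_\mu,\Delta(\sig)\}$, which yields \eqref{eq:178-1}.

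The substance is $E_X=\beta_X-\tfrac14\alpha_X\cdot\alpha_X-\tfrac12\del_\mu\alpha_X^\mu$. First, inserting \eqref{alphaxalphax} cancels the operator $\slashed\XX\slashed\XX$ present in $\beta_X=i\gamma^\mu(\del_\mu\slashed\XX)-\slashed\XX\slashed\XX$, leaving $i\gamma^\mu(\del_\mu\slashed\XX)-\tfrac12\{\slashed\rho(\XX),\slashed\XX\}-\XX\cdot\rho(\XX)$. Next I would simplify $\alpha_X^\mu=i\{\slashed\XX,\gamma^\mu\}$ via \eqref{eq:40} into a form in which all $\gamma$'s stand to the left of the fields (schematically $2\rho(\XX^\mu)+\gamma^\mu\gamma^\nu\Delta(\XX)_\nu$, up to sign), so that $-\tfrac12\del_\mu\alpha_X^\mu$ contributes $\gamma$-matrices times $\del_\mu$ of the $\XX$- and $\Delta(\XX)$-terms. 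Collecting all contributions and using $\gamma^\mu\gamma^\nu=g^{\mu\nu}+\tfrac12[\gamma^\mu,\gamma^\nu]$ to separate symmetric and antisymmetric parts, the first-derivative terms combine into $(\del_\mu\XX_\nu)-(\del_\nu\XX_\mu)$ together with a leftover $\del_\nu\Delta(\XX)_\mu$, while $\{\slashed\rho(\XX),\slashed\XX\}$ and $\XX\cdot\rho(\XX)$ are re-expanded through $\rho(\XX_\mu)=\XX_\mu-\Delta(\XX)_\mu$ to produce $[\XX_\mu,\XX_\nu]$, the commutator $[\XX_\nu,\Delta(\XX)_\mu]$ that promotes $\del_\nu\Delta(\XX)_\mu$ to the covariant derivative $D_\nu\Delta(\XX)_\mu$ of \eqref{eq:177}, and the residual bilinear $\Delta(\XX)_\mu\Delta(\XX)_\nu$. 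Recognising $\mathbb F_{\mu\nu}$ of \eqref{eq:176} then yields \eqref{eq:178}.

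The main obstacle is precisely the bookkeeping in this last computation: because $\gamma^\mu\gamma^\nu$ is not symmetric one cannot symmetrise freely, so every term must first be brought to a canonical ordering ($\gamma$'s on the left, fields on the right) by repeated use of \eqref{eq:40}, and one has to verify that the $\Delta(\XX)$-corrections generated each time a $\gamma$ is commuted past a field assemble exactly into $D_\nu\Delta(\XX)_\mu+\Delta(\XX)_\mu\Delta(\XX)_\nu$ without leaving any spurious $\slashed\rho(\XX)$-dependence. A convenient final consistency check is that setting $\Delta(\XX)_\mu=0$ — i.e. restricting to the twist-invariant subalgebra $\cinf\otimes\A_{sm}$ — must collapse $E_X$ to the ordinary Yang--Mills curvature term $\tfrac12\gamma^\mu\gamma^\nu\mathbb F_{\mu\nu}$ of the untwisted spectral action, and similarly $E_\sig,E_{X\sig}$ to their standard-model counterparts.
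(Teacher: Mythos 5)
Your proposal is correct and follows essentially the same route as the paper: substitute $\alpha_X^\mu,\beta_X,\alpha_\sig^\mu,\beta_\sig,\beta_{X\sig}$ and the two contractions of Lemma \ref{lemmabreaking} into \eqref{eq:73}, \eqref{eq:73-1}, \eqref{eq:72}, then re-expand through $\rho(\XX_\mu)=\XX_\mu-\Delta(\XX)_\mu$; your treatments of $E_\sig$ and of $E_{X\sig}$ (including the rearrangement $\Delta(\sig)\XX_\mu=\tfrac12\{\XX_\mu,\Delta(\sig)\}-\tfrac12[\XX_\mu,\Delta(\sig)]$) coincide with the paper's. The only, cosmetic, difference is in $E_X$: the paper first merges the two derivative contributions into $\tfrac i2[\gamma^\mu,(\del_\mu\slashed\XX)]$ before expanding, whereas you differentiate the canonically ordered $\alpha_X^\mu=2\rho(\XX^\mu)+\gamma^\mu\gamma^\nu\Delta(\XX)_\nu$, and both bookkeepings assemble to the same term $\tfrac12\gamma^\mu\gamma^\nu\left(\del_\mu\XX_\nu-\del_\nu\XX_\mu+\del_\nu\Delta(\XX)_\mu\right)$, so your plan carries through to \eqref{eq:178}.
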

\begin{proof}
By \eqref{eq:73}, \eqref{eq:alpha} and lemma \ref{lemmabreaking} one gets
\begin{equation}
  \label{eq:169}
  E_X =\frac i2\left[\gamma^\mu, (\del_\mu \slashed \XX)\right]  - 
    \frac 12 \left\{\slashed \rho(\XX), \slashed \XX\right\} - \XX\cdot \rho(\XX).
\end{equation}
One further computes, writing $\Delta_\mu$ for $\Delta(\XX)_\mu$,
\begin{align}
\nonumber
   -\frac 12 \left\{\slashed \rho(\XX), \slashed \XX\right\}- \XX\cdot
   \rho(\XX) &= \frac 12\left(\gamma^\mu\rho\left(\,\XX_\mu \right)\gamma^\nu \XX_\nu + \gamma^\mu \XX_\mu
   \gamma^\nu\rho\left(\XX_\nu \right) - \left(\gamma^\mu\gamma^\nu +
   \gamma^\nu\gamma^\mu \right) \XX_\mu\rho\left(\XX_\nu \right)\,\right),\\
\nonumber &=  \frac 12\gamma^\mu\gamma^\nu\left(\,\XX_\mu \Delta_\nu +
  \rho(\XX_\mu)\rho(\XX_\nu)  - \XX_\nu\rho(\XX_\mu)\, \right),\\
\nonumber &=  \frac 12\gamma^\mu\gamma^\nu\left(\XX_\mu \Delta_\nu +
  (\XX_\mu - \Delta_\mu)(\XX_\nu -\Delta_\nu)  - \XX_\nu(\XX_\mu-\Delta_\mu)
\right),\\
\label{eq:179}&=  \frac 12\gamma^\mu\gamma^\nu\left([\XX_\mu, \XX_\nu] +
  \Delta_\mu\Delta_\nu  + [\XX_\nu,\Delta_\mu]
\right).
\end{align}
As well,
\begin{align}
\nonumber
  \frac i2 \left[\gamma^\mu, (\del_\mu \slashed \XX)\right]  &= \gamma^\mu (\del_\mu \gamma^\nu
   \XX_\nu) - (\del_\mu \gamma^\nu \XX_\nu) \gamma^\mu,\\ 
\label{eq:180}
&=
   \gamma^\mu\gamma^\nu\left( \del_\mu X_\nu - \del_\nu
     \rho(X_\mu)\right)=   \gamma^\mu\gamma^\nu\left( \del_\mu X_\nu -  \del_\nu X_\mu +
  \del_\nu \Delta_\mu\right).
  \end{align} 
The sum of \eqref{eq:180} and \eqref{eq:179} gives \eqref{eq:178}.
\smallskip

From \eqref{eq:72}, \eqref{eq:48} and
lemma \ref{lemmabreaking} one obtains
\begin{align}
  \label{eq:42}
  E_{X\sig} &= i\gamma^\mu\gamma^5\left(D_\mu (\sig D_R) +
    \Delta(\sig)\,\XX_\mu\right) + i\gamma^\mu
  \gamma^5\left\{\XX_\mu  -\frac 12\Delta(\XX)_\mu, \Delta(\sig) 
  \right\}.
\end{align}
Eq. \eqref{eq:178-1} then follows writing 
\begin{align}
  \label{eq:33}
   \Delta(\sig) \XX_\mu = \frac 12\left\{ \XX_\mu , \Delta(\sig)\right\} -\frac12 \left[
     \XX_\mu, \Delta(\sig) \right]
\end{align}

To prove \eqref{eq:178-2} one uses $\left\{\Delta(\sig), \gamma^\nu\right\} = \left[\Delta(\sig), \gamma^5\right] = 0$ to compute
 \begin{align}
  \label{eq:15}
 g_{\mu\nu}\alpha^\mu_{\sig}\alpha^{\nu}_{\sig}= -
  g_{\mu\nu}\gamma^\mu\gamma^5\Delta(\sig)
  \gamma^\nu\gamma^5\Delta(\sig)=  -
  g_{\mu\nu}\gamma^\mu\gamma^\nu\Delta^2(\sig)   = -4\Delta^2(\sig).
\end{align}
Thus 
$\beta_\sig - \frac 14\alpha_{\sig}\cdot \alpha_{\sig} = \Delta^2(\sig)- \sig^2 D_R^2
$
and \eqref{eq:178-2} follows from \eqref{eq:73-1} and \eqref{eq:48}.
\end{proof}
\medskip

In order to interpret proposition \ref{Lichnedetail}, it is instructive
  to confront with the
  non-twisted case.  When the finite dimensional algebra $\A_F$ of an almost-commutative geometry acts trivially on
spinors, the full covariant Dirac operator is
\begin{equation}
  \label{eq:185}
  D_A = D_Y + \gamma^5\otimes D_R,
\end{equation}
where
$D_Y := -i\gamma^\mu \nabla^Y_\mu\label{eq:163}
$
is the covariant Dirac operator of a $U(\A_F)$-bundle over the spin
bundle of $\M$, associated with the covariant derivative
$\nabla_\mu^Y:= \nabla_\mu^S + {\mathbb Y}_\mu$
defined by a connection one-form $\mathbb{Y}_\mu
= \delta_{s\dot s}^{t\dot t}\, Y_{\sJ\alpha\sC}^{I\beta\sD}$ whose action on spinor indices is trivial. One gets
\begin{equation}
  \label{eq:191}
   D_A^2 = D_Y ^ 2 + D_R^2 + \left\{\slashed D_Y, \gamma^5\otimes D_R\right\}.
\end{equation}
Because 
\begin{equation}
[\gamma^\mu, \mathbb{Y}_\nu]= 0,
\label{eq:162}
\end{equation}
one has
    $D_Y^2 = - \gamma^\mu \gamma^\nu \nabla^Y_\mu \nabla^Y_\nu$.
Using 
$\gamma^\mu\gamma^\nu = g^{\mu\nu}\mathbb{I}  + \frac 12
\gamma^\mu\gamma^\nu - \frac 12 \gamma^\nu\gamma^\mu$, the square of
$D_Y$ is rewritten
as the sum of the Laplacian and the field strength 
$F_{\mu\nu} =\left[{\mathbb Y}_\mu, \mathbb{Y}_\nu\right]$, namely 
\begin{align}
  \label{eq:158}
 D_Y^2 = 
  -g^{\mu\nu} \nabla_\mu^Y \nabla_\nu^Y - \frac 12\gamma^\mu \gamma^\nu F_{\mu\nu}.
\end{align}
The second term in \eqref{eq:191} is $D_R^2=
|k_R|^2\mathbb{I}$ and the third is  $-i\gamma^\mu\gamma^5
[\mathbb{Y}_\mu,D_R]$. Therefore the Lichnerowicz formula for the
covariant \emph{non-twisted} Dirac operator is
\begin{equation}
  \label{eq:186}
  D_A^2= - g^{\mu\nu}\nabla_\mu^Y\nabla_\nu^Y - \frac 12\gamma^\mu\gamma^\nu
    F_{\mu\nu} + |k_R|^2{\bb I} - i\gamma^\mu\gamma^5 [{\mathbb Y}_\mu,D_R].
\end{equation}

In the twisted case, summing up the terms in Prop. \ref{Lichnedetail} one
obtains from \eqref{eq:173}
\begin{align}
  \label{eq:160}
  D_{\mathbb A}^2 &= -g^{\mu\nu}\nabla_\mu\nabla_\nu 
-\frac 12
    \gamma^\mu\gamma^\nu\left(\mathbb{F}_{\mu\nu} + D_\nu\, \Delta(\mathbb
    X)_\mu + \Delta(\XX)_\mu\Delta(\XX_\nu)\right) \\
  \label{eq:160-1}
& + \sig^2D_R^2 - \Delta(\sig)^2 -
  i\gamma^\mu\gamma^5\left( D_\mu (\sig D_R) - \frac 12 D_\mu
    \Delta(\sig)\right) \\
  \label{eq:160-2}
& - \frac i2
  \gamma^\mu\gamma^5\left\{3\XX_\mu - \Delta(\XX)_\mu, \Delta(\sig)\right\}.  
\end{align}
There are several important differences with the non-twisted case:
\begin{itemize}
\item In \eqref{eq:160}, the covariant derivative of $\Delta(\XX)_\nu$ and its potential
$\Delta(\XX)_\mu\Delta(\XX)_\nu$ can be traced back  to the Lichnerowicz formula for the twisted-covariant
free Dirac operator, 
\begin{equation}
  D_X^2= -g^{\mu\nu} \nabla^X_\mu \nabla^X_\nu - E_X\, \text{ where } \quad 
\nabla^X_\mu:= \nabla_\mu^S + \frac 12 \alpha_\mu^X.
\label{eq:172}
\end{equation}
These new terms arise because in the twisted case \eqref{eq:162} no
longer holds, instead one has
\eqref{eq:76}, that is
\begin{equation}
  \label{eq:134}
  [\gamma^\mu, X_\nu]_\rho = [X_\nu, \gamma^\mu]_\rho = 0.
\end{equation}

\item The appearance of the covariant derivative of $\sig D_R$ in
  \eqref{eq:160-1} is not surprising. It is already there in \eqref{eq:186}, where the last
  term  is nothing but the covariant derivative of $\sig D_R$ for $\sig$ the constant field $1$. Similarly $|k_R|^2$ in \eqref{eq:186} is the potential term $\sig^2D_R^2$  in \eqref{eq:160-1}  for $\sig=1$.

\item  In \eqref{eq:160-1} the scalar $\Delta(\sig)$ is described by  a dynamical term
  $-D_\mu \Delta(\sig)$ and a potential $-\Delta(\sig)^2$, whose sign
  are opposite to the similar terms for $\sig$.

\item  The interaction between $\XX_\mu$ and $\Delta(\sig)$ is not
  totally absorbed in the covariant derivative $D_\mu$. There remains
  in  \eqref{eq:160-2}
  an potential of interaction $\left\{3\XX_\mu, \Delta(\sig)
  \right\}$. As well, there is a potential of interaction $\left\{\Delta(\XX)_\mu,
    \Delta(\sig) \right\}$ between the extra scalar field and the
  additional vector field.
\end{itemize}

One may be puzzled by the presence of two distinct covariant
derivatives in the Lichnerowicz formula for $D_{\bb A}$:  $\nabla_\mu$
in the Laplacian and $D_\mu$ that encodes the dynamics of the fields $\Delta(\XX)_\mu$ and $\Delta(\sig)$.
In the non-twisted
case this is the same covariant derivative 
$\nabla_\mu^Y$ which plays both roles. However, because we switch 
gravitation off{\footnote{Our aim in this paper is to understand how the twist allows to generate the
field $\sig$. That is why for simplicity  we consider the flat
case. The curved case, which should be similar,  will be studied elsewhere.}} and consider
the flat case, in the heat kernel expansion of the spectral action
the covariant derivative $\nabla_\mu$ only appears through the term
$\nabla^\mu\nabla_\mu E$ (in $a_4$). The latter is interpreted as a
boundary term (see \cite[Rem.1.155]{Connes:2008kx}) and we shall not take it into account in this
paper. Doing so, only one covariant derivative remains, $D_\mu$.
This makes sense from our perspective:  the
fields $\Delta(\XX)_\mu$ and $\Delta(\sig)$ are viewed as ``excitations''
generated by the twist,  living on a background gauge
theory with connection $1$-form $X_\mu$;  so their dynamics is encoded
by $D_\mu$, not by $\nabla_\mu$. 

The remaining Seeley-de Witt
coefficients are $a_0$, which is not affected by the twist and is interpreted as the
cosmological constant (which recently turns out to be quantized, see \cite{quanta-geom})
and the integral of the trace of  $E$ (in $a_2$) and $E^2$ (in $a_4$)
for $E$ given in \eqref{eq:70}. In other terms the potential is the part of 
\begin{equation}
  \label{eq:175}
  V:= \Lambda^2 f_2 \;\text{Tr}\, E \;+ \;\frac 12\, f_0 \text{Tr}\, E^2
\end{equation}
that does not depend on the covariant derivative $D_\mu$. We analyze
this potential below, dividing it into three pieces: the potential $V(X)$ of the vector
field, $V(\sig)$ of the scalar field, and a potential of interaction $V(X,\sig)$.  

\subsection{The vector field and the breaking to the standard
model}
\label{secbreaking}

The potential
$V(X)$ is the part of $V$ that depends on $\Delta(\XX)_\mu$
and not on its derivative, that is
\begin{equation}
  \label{eq:184}
  V(X) = \Lambda^2 f_2 \,\text{Tr}\, E_X^0 + \frac 12\, f_0
  \text{Tr}\,  (E_X^0)^2,
\end{equation}
where $  E_X^0 := \frac 12 \gamma^\mu\gamma^\nu \Delta(\XX)_\mu
\Delta(\XX)_\nu$ is read in \eqref{eq:178}. One rewrites it as
\begin{equation}
  \label{eq:183}
  E_X^0 = \frac 12 \slashed\Delta^2(\XX),
\end{equation}
thanks to \eqref{eq:146} which guarantees that $\gamma^\nu$ anti-commutes with $\Delta(\XX)_\mu$
for all $\mu$.
\begin{prop}
  The potential $V(X)$ is never negative and vanishes iff
  $\Delta(\XX)_\mu =0$ for any~$\mu$.
\end{prop}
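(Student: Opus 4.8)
The plan is to leverage the rewriting \eqref{eq:183}, namely $E_X^0 = \tfrac12\,\slashed\Delta(\XX)^2$, where $\slashed\Delta(\XX) = \slashed\XX - \slashed\rho(\XX)$ is the selfadjoint operator of \eqref{eq:199}. From this both assertions follow almost formally: positivity is just positivity of a square, and the vanishing locus is controlled entirely by $\slashed\Delta(\XX)$.

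First I would note that, $\slashed\Delta(\XX)$ being selfadjoint, $E_X^0 = \tfrac12\slashed\Delta(\XX)^2$ and $(E_X^0)^2 = \tfrac14\slashed\Delta(\XX)^4$ are non-negative operators; since they act on $L^2(\M)\otimes{\mathbb C}^{128}$ as multiplication by matrix-valued functions on $\M$, the traces entering \eqref{eq:184} are $\Tr{E_X^0} = \tfrac12\int_\M\dd x\,\tr{\big(\slashed\Delta(\XX)^2\big)}\ge 0$ and $\Tr{(E_X^0)^2} = \tfrac14\int_\M\dd x\,\tr{\big(\slashed\Delta(\XX)^4\big)}\ge 0$ (up to a positive normalization constant), each integrand being the trace of a square of a hermitian matrix. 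As $f$ is a positive cutoff function its moments $f_0,f_2$ are strictly positive, and $\Lambda^2>0$, so $V(X) = \Lambda^2 f_2\,\Tr{E_X^0} + \tfrac12 f_0\,\Tr{(E_X^0)^2}\ge 0$.

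For the vanishing locus, $V(X)=0$ forces in particular $\Tr{E_X^0}=0$, hence the non-negative integrand $\tr{(\slashed\Delta(\XX)^2)}$ vanishes almost everywhere, and a hermitian matrix with vanishing square has itself zero square-trace only if it is zero, so $\slashed\Delta(\XX)=0$ as an operator. It then remains to recover the components $\Delta(\XX)_\mu$ from $\slashed\Delta(\XX)=-i\gamma^\mu\Delta(\XX)_\mu$; this is the only step needing a little care, since the $\Delta(\XX)_\mu$ do not commute with the Dirac matrices and their coefficients cannot simply be ``read off''. I would instead use the anticommutation $\{\gamma^\nu,\Delta(\XX)_\mu\}=0$ from \eqref{eq:146} together with the Clifford relation $\{\gamma^\mu,\gamma^\nu\}=2g^{\mu\nu}$, which give
\[
[\gamma^\nu,\slashed\Delta(\XX)] = -i\big(\gamma^\nu\gamma^\mu+\gamma^\mu\gamma^\nu\big)\Delta(\XX)_\mu = -2i\,\Delta(\XX)^\nu ,
\]
so that $\slashed\Delta(\XX)=0$ yields $\Delta(\XX)_\mu=0$ for every $\mu$ (the flat metric making the index raising harmless). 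Conversely, if all $\Delta(\XX)_\mu$ vanish then $E_X^0=0$ and $V(X)=0$, which completes the argument. I do not expect any genuine obstacle beyond the last extraction step singled out above.
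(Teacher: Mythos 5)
Your proof is correct, and it rests on the same two ingredients as the paper's: the selfadjointness of $\slashed\Delta(\XX)$ (so that $E_X^0=\tfrac12\slashed\Delta(\XX)^2$ and its square are positive, giving $V(X)\geq 0$ since $f_0,f_2>0$), and the anticommutation relation \eqref{eq:146}. Where you diverge is in how the vanishing of the trace is converted into the vanishing of the components. The paper never passes through the operator statement $\slashed\Delta(\XX)=0$: it symmetrizes the trace, using $\{\gamma^\nu,\Delta(\XX)_\mu\}=0$ together with cyclicity to obtain $\text{Tr}\,E_X^0=\tfrac12\sum_\mu\text{Tr}\big(\Delta(\XX)_\mu^2\big)$ (its \eqref{eq:188-bis}), from which positivity and the characterization of the zero locus are read off simultaneously, each $\Delta(\XX)_\mu$ being selfadjoint. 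You instead first deduce $\slashed\Delta(\XX)=0$ pointwise from $\text{Tr}\big(\slashed\Delta(\XX)^2\big)=0$, and then extract the components through the Clifford identity $[\gamma^\nu,\slashed\Delta(\XX)]=-i\{\gamma^\nu,\gamma^\mu\}\Delta(\XX)_\mu=-2i\,\Delta(\XX)^\nu$, which is a clean and correct way to handle the fact that the $\Delta(\XX)_\mu$ anticommute with, rather than commute with, the Dirac matrices; your identification of this as the only delicate step is accurate. The paper's symmetrized-trace route is marginally more economical (one trace identity does all the work, and it also re-proves positivity of $\text{Tr}\,E_X^0$ without invoking the pointwise structure), while your commutator extraction is slightly more robust in that it gives the stronger operator-level implication $\slashed\Delta(\XX)=0\Rightarrow\Delta(\XX)_\mu=0$ independently of any trace argument. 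Both arguments are complete.
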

\begin{proof}
Since $\slashed\Delta(\XX) $ is selfadjoint, $E_X^0$ and $(E_X^0)^2$ are
positive. Thus their trace is never negative, and vanishes if and only
if $E_X^0=(E_X^0)^2=0$. This condition is equivalent to 
\begin{equation}
  \label{eq:181}
  \Delta(\XX)_\mu = 0 \quad \forall \mu. 
\end{equation}
Indeed, since $\left\{\gamma^\nu,\Delta(\XX)_\mu\right\}=0$ one has
\begin{equation}
  \label{eq:175bis}
  \text{Tr }(\gamma^\mu\gamma^\nu \Delta(\XX)_\mu \Delta(\XX)_\nu) =
  \text{Tr }(\gamma^\mu \Delta(\XX)_\mu \Delta(\XX)_\nu\gamma^\nu) =   \text{Tr }(\gamma^\nu\gamma^\mu \Delta(\XX)_\mu \Delta(\XX)_\nu)
\end{equation}
where the last equality comes from the tracial property. Therefore
\begin{align}
  \label{eq:188}
  \text{Tr} \, E_X^0 &=  \frac 14 (\text{Tr }(\gamma^\mu\gamma^\nu
  \Delta(\XX)_\mu \Delta(\XX)_\nu) +  \text{Tr }(\gamma^\nu\gamma^\mu
  \Delta(\XX)_\mu \Delta(\XX)_\nu)),\\
\label{eq:188-bis}
&= \frac 12 \, g^{\mu\nu} \,\text{Tr}(\Delta(\XX)_\mu \Delta(\XX)_\nu)
=\frac 12\sum_\mu  \text{Tr} \left(\Delta^2(\XX)_\mu\right) .
\end{align}
Since $\Delta(\XX)_\mu$ is selfadjoint, $\Delta^2(\XX)_\mu$ is
positive. Its trace is never negative and vanishes if and only if
$\Delta(\XX)_\mu$ is zero. The same is true for the sum in
\eqref{eq:188-bis}, meaning that $\text{Tr } E_X^0$ - hence $E_X^0$~- vanishes if and only if
$\Delta(\XX)_\mu = 0$  for all  $\mu$. 

The
proposition is obtained noticing that  $f_0$ and $f_2$ are positive numbers.
\end{proof}

Condition \eqref{eq:181} is equivalent to
$\Delta(X)_\mu=0$ for any $\mu$. To obtain the breaking to the standard model, one needs to check
that the vanishing of $\Delta(X)_\mu$, that is
the invariance of $X_\mu$ under the twist, implies the invariance of its
components $R^i, Q_i$.
\begin{lemma}
\label{propinv}
  The biggest unital subalgebra of $\cal B\otimes \cinf$ for which any
  combination
  \begin{equation}
X_\mu=\delta^I_J\,\rho(R^i)\,\del_\mu Q_i - \delta_\alpha^\beta
  \,\bar N^i \del_\mu \bar  M_i
\label{eq:21}
  \end{equation}
is invariant under the twist is
  $\A_{SM}\otimes C^\infty(\M)$.
\end{lemma}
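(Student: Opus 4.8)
The plan is to establish the two halves of this extremality statement separately. Throughout I would use the properties of $\rho$, extended to bounded operators on $\cal H$ as conjugation by the unitary that exchanges the chiral labels $l,r$ (as in \S\ref{section:squaretwisteddirac}): it is multiplicative, squares to the identity, commutes with each $\del_\mu$, acts trivially on the chirality indices $s,t$ of the $M$-sector (being $\delta_{s\alpha}^{t\beta}$ there, cf.~\eqref{repa4}), and acts on the $Q$-sector by exchanging the blocks $Q^r_r\leftrightarrow Q^l_l$, cf.~\eqref{eq:5800}.

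First I would prove sufficiency: if $A_i=(Q_i,M_i)$ and $B^i=(R^i,N^i)$ lie in $\A_{SM}\otimes\cinf$ then $\rho(Q_i)=Q_i$ and $\rho(R^i)=R^i$ by~\eqref{eq:129}, so applying $\rho$ to~\eqref{eq:21} one gets
\begin{equation*}
\rho(X_\mu)=\delta^I_J\,R^i\,\del_\mu\rho(Q_i)-\delta_\alpha^\beta\,\bar N^i\,\del_\mu\bar M_i=\delta^I_J\,\rho(R^i)\,\del_\mu Q_i-\delta_\alpha^\beta\,\bar N^i\,\del_\mu\bar M_i=X_\mu ,
\end{equation*}
the first equality using multiplicativity of $\rho$, $\rho^2=\id$, $[\rho,\del_\mu]=0$ and triviality of $\rho$ on the $M$-sector, the second using $\rho(Q_i)=Q_i$ and $\rho(R^i)=R^i$. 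Hence every combination~\eqref{eq:21} with arguments in $\A_{SM}\otimes\cinf$ is $\rho$-invariant.

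Then I would prove maximality: let ${\cal B}'\otimes\cinf$, with ${\cal B}'$ a unital subalgebra of $\cal B$, have the stated property, and pick $(Q,M)\in{\cal B}'$. Choosing $B^i=\id$ in~\eqref{eq:21} (legitimate since ${\cal B}'$ is unital) gives $X_\mu=\delta^I_J\,\del_\mu Q-\delta_\alpha^\beta\,\del_\mu\bar M$, hence $\rho(X_\mu)=\delta^I_J\,\del_\mu\rho(Q)-\delta_\alpha^\beta\,\del_\mu\bar M$; the invariance requirement forces $\del_\mu(\rho(Q)-Q)=0$ for all $\mu$, i.e.\ $\rho(Q)-Q$ is a constant ($x$-independent) matrix. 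Replacing $Q$ by $fQ$ for arbitrary $f\in\cinf$ — still an element of ${\cal B}'\otimes\cinf$ — forces $f\,(\rho(Q)-Q)$ to be constant for every $f$, hence $\rho(Q)=Q$. So every element of ${\cal B}'$ is $\rho$-fixed, and ${\cal B}'$ is contained in the $\rho$-fixed subalgebra of $\cal B$; the latter is $\A_{SM}$, since triviality of the exchange $Q^r_r\leftrightarrow Q^l_l$ forces $q^r_L=q^l_L$ and $c^r_R=c^l_R$, reducing $\bb H^l_L\oplus\bb H^r_L$ to a single $\bb H$ and $\bb C^l_R\oplus\bb C^r_R$ to a single $\bb C$, i.e.\ bringing $(Q,M)$ into $\A_{SM}$ (cf.~the discussion after~\eqref{eq:129}). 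Combined with sufficiency, this shows $\A_{SM}\otimes\cinf$ is the biggest such subalgebra.

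The main obstacle, and the point I would be careful about, is the class of subalgebras over which ``biggest'' is understood. An arbitrary unital subalgebra of $\cal B\otimes\cinf$ need not be stable under multiplication by the functions $f\otimes\id$, $f\in\cinf$ — and, for instance, the subalgebra of constant $\cal B$-valued functions produces only $X_\mu=0$, trivially $\rho$-invariant, yet is not contained in $\A_{SM}\otimes\cinf$ — so the statement has to be read within the almost-commutative framework, i.e.\ restricted to subalgebras of the form ${\cal B}'\otimes\cinf$ (equivalently, those closed under multiplication by $f\otimes\id$ for every $f\in\cinf$), which is exactly the class making the ``multiply by $f$'' step above legitimate. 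I would flag this restriction at the start; the remaining content is the two short computations above.
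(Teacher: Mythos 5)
Your proof is correct, and on the key step it takes a genuinely different route from the paper's. Both arguments start the same way: set the left factor equal to $\id$ in \eqref{eq:21} and deduce from $\rho$-invariance that $\del_\mu\bigl(Q-\rho(Q)\bigr)=0$, i.e.\ that the offset between the two chiral blocks of $Q$ is a constant matrix (this is \eqref{eq:19}--\eqref{eq:7} in the paper). To kill that constant, the paper does not multiply by scalar functions; it uses only that the subalgebra is closed under products: if every $Q$ has the form \eqref{eq:7} with constant offset $c$, then $Q^2$ must have the same form, and comparing blocks as in \eqref{eq:7bis} forces $c=0$. You instead use stability under multiplication by $f\otimes\id$, $f\in\cinf$, replacing $Q$ by $fQ$ and letting a non-constant $f$ force the offset to vanish. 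After that, both proofs conclude identically, by identifying $\A_{sm}$ as the fixed-point subalgebra of $\rho$ in $\cal B$ (the discussion around \eqref{eq:129}); you also spell out the easy sufficiency direction, which the paper leaves implicit. The trade-off: the paper's squaring trick only invokes the algebra structure, but as written it is silent on elements whose $Q^r_r$ block is constant (there the comparison of \eqref{eq:7} and \eqref{eq:7bis} puts no constraint on $c$), so it too implicitly relies on the presence of non-constant functions; your $f$-multiplication step handles all elements uniformly but requires the subalgebra to be a $\cinf$-module, i.e.\ of the form ${\cal B}'\otimes\cinf$. The caveat you flag — that the constant-$\cal B$-valued functions form a unital subalgebra producing $X_\mu=0$ yet not contained in $\A_{sm}\otimes\cinf$ — is genuine, applies equally to the paper's own argument, and your restriction to subalgebras of the form ${\cal B}'\otimes\cinf$ is the reading under which the lemma (and the paper's proof) is meant.
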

\begin{proof} Let $\cal  G$ be any subalgebra of $\cal B\otimes \cinf$
  such that any  linear
  combination $X_\mu$ with $(R^i, N^i)$ and  $(Q_i, M_i)$ in $\cal
 G$ is invariant under the automorphism $\rho$. This means in
 particular that for $X =
 R\del_\mu Q - Q\del_\mu R$ with $R, Q$ arbitrary elements in $\cal
 G$, one has 
  \begin{equation}
    \label{eq:20}
    \rho(X_\mu) - X_\mu = \rho(R) \del_\mu Q - R
    \del_\mu\rho(Q) = 0.
  \end{equation}
Taking $R=\mathbb I$, this implies 
\begin{equation}
  \label{eq:19}
  \del_\mu (Q - \rho(Q)) = 0.
\end{equation}
So any element of $\cal G$ is $(Q, M)$ where
\begin{equation}
  \label{eq:7}
  Q =\left(\begin{array}{cc}
Q_r^r & 0 \\ 0 &Q_r^r + c
\end{array}\right)_{st}
\end{equation}
with  $c$ a constant.
For $\cal G$ to be an algebra, \eqref{eq:7} must be true also for
$Q^2$, that is there must exists a constant $c'$ such that 
\begin{equation}
  \label{eq:7bis}
  Q^2 =\left(\begin{array}{cc}
{(Q_r^r)}^2 & 0 \\ 0 & ({Q_r^r})^2 + c^2 + 2cQ_r^r 
\end{array}\right)_{st} =\left(\begin{array}{cc}
{(Q_r^r)}^2 & 0 \\ 0 & ({Q_r^r})^2 + {c'}^2 
\end{array}\right)_{st}. 
\end{equation}
This is possible if and only if $c=c'=0$. Thus $\rho(Q) = Q$ for any
$(Q, M)\in \cal G$. The proposition follows from the identification of
$\A_{sm}$ as the biggest $\rho$-invariant sub-algebra of $\cal B$.
\end{proof}
\noindent This proves the third statement of theorem \ref{theo1}, namely the
breaking of grand symmetry to the standard model is
  dynamical, and induced by the minimal of the spectral action of the
  twisted-covariant free Dirac operator $D_X$.  

 \subsection{The scalar field}
 \label{section:potentialsigma}

The part of the potential containing only the extra scalar field
and not the vector field  is 
\begin{equation}
  \label{eq:170}
    V(\sig):= \Lambda^2 f_2 \,\text{Tr}\, E^0_\sig +  \frac 12 f_0 \,\text{Tr}\, (E_\sig^0)^2,
\end{equation}
where
\begin{equation}
E_\sig^0 := \Delta^2(\sig)-\sigma^2D_R^2
\label{eq:65}
\end{equation}
is read in
\eqref{eq:178-2}. Compared to $V(X)$ which contains only
$\Delta(\XX)_\mu$ and not $\XX_\mu$, the potential $V(\sig)$ contains both $\sig$
and $\Delta(\sig)$. This gives two possibilites for minimizing:

Either one considers only $\Delta(\sig)$ as degree of freedom. The
potential then reduces to
\begin{equation}
  \label{eq:197}
  V(\Delta(\sig)):= \Lambda^2\, f_2\,\text{Tr}(\Delta^2(\sig)) + \frac 12 f_0\,\text{Tr}(\Delta^4(\sig)).
\end{equation}
Since $\Delta(\sig)$ is selfadjoint, this potential is positive and
vanishes if and only if $\Delta(\sig) = 0$. Going back to the the definition \eqref{eq:171} 
of $\Delta(\sig)$, this means
 \begin{equation}
\sig = \rho({\sig}) =
  \mathbb I.\label{eq:131}
  \end{equation}

Or one may prefer to take into account the whole potential
\eqref{eq:170}, including the term in $\sig$. In this case it is easier to take as degree of freedom the field
$\phi$.
\begin{lemma}
\label{propscal}
The potential of the scalar field is
  \begin{equation}
  \label{eq:18}
  V(\sig) = C_4 \,\phi^4   + C_2\,\phi^2 + C_0
  \end{equation}
where
$C_4:= 36 |k_R|^4 f_0 ,\; C_2:=8 |k_R|^2 (3\Lambda^2 f_2  - |k_R|^2
  f_0 ),\; C_0:= 8 |k_R|^2 \left( \frac{|k_R|^2}{2}f_0  - \Lambda^2 f_2\right)$.
\end{lemma}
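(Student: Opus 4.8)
The plan is to substitute the explicit form of $\sig$ found in Prop.~\ref{propsigma} into the endomorphism $E_\sig^0$ of \eqref{eq:65}, reduce $E_\sig^0$ and $(E_\sig^0)^2$ to polynomials in $\phi$ times either $\mathbb I$ or $\gamma^5$ multiplied by the support projection of $D_R^2$, and then read off $V(\sig)$ from \eqref{eq:170} by computing the two traces.

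First I would use $\sig = \mathbb I + \gamma^5\phi$ (Prop.~\ref{propsigma}) and $\rho(\sig) = \mathbb I - \gamma^5\phi$ \eqref{eq:66}, so that by \eqref{eq:171} $\Delta(\sig) = (\sig - \rho(\sig))\,D_R = 2\phi\,\gamma^5 D_R$. Since $\phi$ is a scalar and $\gamma^5$ commutes with $D_R$ with $(\gamma^5)^2 = \mathbb I$, this gives $\Delta^2(\sig) = 4\phi^2 D_R^2$ and $\sig^2 D_R^2 = \big((1+\phi^2)\mathbb I + 2\phi\,\gamma^5\big)D_R^2$, hence
\[
E_\sig^0 = \Delta^2(\sig) - \sig^2 D_R^2 = \big((3\phi^2-1)\,\mathbb I - 2\phi\,\gamma^5\big)\,D_R^2.
\]

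Next I would compute $D_R^2$ from \eqref{majoranadiracoperator}. As $D_R = \gamma^5 D_M$ is the operator of \eqref{majoranadiracoperator} with $\eta_s^t$ replaced by $\delta_s^t$, one gets $D_R^2 = |k_R|^2\,P$, where $P := \delta_s^t\delta_{\dot s}^{\dot t}\,\Xi_\alpha^\beta\,\Xi_\sI^\sJ$ is the projection that acts as the identity on the $\sC$, $s$ and $\dot s$ indices and as the rank-one $\Xi$ on the flavour and lepto-colour indices. Counting dimensions gives $\text{Tr}\,P = 2\cdot 2\cdot 2\cdot 1\cdot 1 = 8$, while $\text{Tr}(\gamma^5 P) = 0$ since $\gamma^5 = \eta_s^t\delta_{\dot s}^{\dot t}$ is traceless on the chiral index $s$, on which $P$ acts as the identity.

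Finally, as $P$ is a projection commuting with $\gamma^5$, expanding $(E_\sig^0)^2 = |k_R|^4\big((9\phi^4-2\phi^2+1)\,\mathbb I - 4\phi(3\phi^2-1)\,\gamma^5\big)P$ and applying the two trace identities makes the $\gamma^5$-linear pieces of $E_\sig^0$ and of $(E_\sig^0)^2$ drop out, leaving $\text{Tr}\,E_\sig^0 = 8|k_R|^2(3\phi^2-1)$ and $\text{Tr}(E_\sig^0)^2 = 8|k_R|^4(9\phi^4-2\phi^2+1)$. Substituting these into $V(\sig) = \Lambda^2 f_2\,\text{Tr}\,E_\sig^0 + \tfrac12 f_0\,\text{Tr}(E_\sig^0)^2$ and collecting powers of $\phi$ yields \eqref{eq:18} with $C_4 = 36|k_R|^4 f_0$, $C_2 = 8|k_R|^2(3\Lambda^2 f_2 - |k_R|^2 f_0)$ and $C_0 = 8|k_R|^2\big(\tfrac{|k_R|^2}{2}f_0 - \Lambda^2 f_2\big)$. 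The computation is essentially mechanical; the only step that requires care is the finite-dimensional bookkeeping that fixes $\text{Tr}\,P = 8$ and $\text{Tr}(\gamma^5 P) = 0$ — it is what makes $V(\sig)$ even in $\phi$ (no linear or cubic term) and pins down the overall numerical factor in the $C_i$.
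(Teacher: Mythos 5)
Your proposal is correct and follows essentially the same route as the paper: it substitutes $\sig=\mathbb I+\gamma^5\phi$, $\rho(\sig)=\mathbb I-\gamma^5\phi$ into $E_\sig^0$ to get $\big((3\phi^2-1)\mathbb I-2\phi\gamma^5\big)D_R^2$, uses $D_R^2=|k_R|^2$ times a rank-$8$ projection and the tracelessness of the $\gamma^5$ terms to get $\mathrm{Tr}\,E_\sig^0=8|k_R|^2(3\phi^2-1)$ and $\mathrm{Tr}\,(E_\sig^0)^2=8|k_R|^4(9\phi^4-2\phi^2+1)$, and then reads off the $C_i$ from \eqref{eq:170}. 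The only difference is that you spell out the dimension counting $\mathrm{Tr}\,P=8$ and the intermediate step $\Delta(\sig)=2\phi\,\gamma^5 D_R$ a bit more explicitly than the paper does, which is harmless.
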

\begin{proof}
By \eqref{eq:17}, \eqref{eq:66} and  \eqref{eq:171}  one has
\begin{equation}
\label{eq:135}
  E_\sig^0 = \left((3\phi^2 - 1){\mathbb I}_4 -  2\gamma^5\phi\right)\, D_R^2.
\end{equation}  From \eqref{majoranadiracoperator},
\begin{equation}
  \label{eq:83}
  D_R^2 = |k_R|^2 \, \delta_{\sC}^{\sD}\, \Xi_{\sJ\alpha}^{\sI\beta}
\end{equation}
so that $\gamma^5D_R^2$ has zero trace. Hence
\begin{align}
  \text{Tr} \, E_{\sig}^0 &= (3\phi^2 - 1) \;\text{Tr}\left( \;
    {\mathbb I}_4 \otimes D_R^2\right) = 8|k_R|^2\,(3\phi^2 -1). 
\end{align}
Squaring \eqref{eq:135} one gets
\begin{align}
  \label{eq:55}
  (E_{\sig}^0)^2 &= \left((3\phi^2 - 1)^2{\mathbb I}_4 +  4\phi^2 {\mathbb I}_4 - 2\gamma^5\phi(3\phi^2-1)\right)\, D_R^4
\end{align}
whose trace is
\begin{align}
  \label{eq:57}
  \text{Tr}\,   (E^0_{\sig})^2 = 8|k_R|^4\left((3\phi^2 - 1)^2+  4\phi^2 \right).
\end{align}
The result follows from \eqref{eq:170}.
\end{proof}
\noindent At a large unification scale $\Lambda$ it is reasonable to assume
that (see e.g. \cite[\S 11.3.2]{Walterlivre})
\begin{equation}
3\Lambda^2 f_2 \geq f_0|k_R|^2.
\label{eq:168}
\end{equation}
 Together with the positivity
of $f_0$ this shows that 
  $V(\sig)$ is minimum when $\phi=0$, and one is led to the same
  conclusion \eqref{eq:131} obtained by minimizing $V(\Delta(\sig))$.

The invariance \eqref{eq:131} of $\sig$ under the twist implies that
$D_\sig =D_M$, so that one is back to the Dirac operator 
of  the standard model. However this does not imply the
  reduction of the algebra to the one of the standard model. Indeed,
from \eqref{eq:13} the vanishing of $\phi$ means $c_R^r =c_R^l$,
  so that the bigger subalgebra of
  $\cinf\otimes \cal B$ for which any fluctuation yields a
  $\rho$-invariant $\sig$ is 
  \begin{equation}
    \label{eq:52}
    \cinf\otimes ({\mathbb H}_L^l \oplus {\mathbb H}_L^r \oplus {\mathbb C}
    \oplus M_3({\mathbb C})),
  \end{equation}
which contains, but is different from
  $\cinf\otimes \A_{sm}$.

\subsection{Potential of interaction}
\label{interaction}

We now consider the remaining part of the potential, that is the
interaction term $V(X,\sig)$ between the scalar and the vector fields. Writing
$V(X,\sig)$ explicitly in \eqref{eq:98} below, it will become clear it
is easier to minimize it together with the potential $V(X)$,
which will be done in prop. \ref{propinteract}.

Let 
\begin{equation}
  \label{eq:24}
  E_{X\sig}^0:=  \frac i2 \gamma^\mu\gamma^5\left\{ {\mathbb H}_\mu,
    \Delta(\sig)\right\}\quad\text{ with } \quad {\mathbb H}_\mu:=3\XX_\mu -\Delta(\XX)_\mu
\end{equation}
denote the part  of $E_{X\sig}$ in \eqref{eq:178-1} that does not depend on the covariant
derivative of the fields. The potential of interaction is made of all the terms in the trace of
$E^0_X+ E^0_\sig + E^0_{X\sig}$ and its square that depend on both $X$ and
$\sig$.  

\begin{lemma}
\label{lemmexsig} $E^0_{X\sig}$ is selfadjoint and traceless.
\end{lemma}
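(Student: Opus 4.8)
The plan is to bring the two Clifford factors in $E^0_{X\sig}$ next to each other, so that both claims follow from the (anti)commutation relations already recorded in the excerpt. First I would collect the ingredients: by \eqref{eq:146} and the discussion after \eqref{eq:171}, $\Delta(\sig)$ is selfadjoint, anticommutes with every $\gamma^\mu$ and commutes with $\gamma^5$; the Clifford generators satisfy $(\gamma^\mu)^\dagger=\gamma^\mu$, $(\gamma^5)^\dagger=\gamma^5$, $(\gamma^5)^2=\mathbb I$ and $\{\gamma^5,\gamma^\mu\}=0$. Next I would introduce $\slashed{\mathbb H}:=-i\gamma^\mu\mathbb H_\mu$ with $\mathbb H_\mu$ as in \eqref{eq:24}; since $\mathbb H_\mu=3\XX_\mu-\Delta(\XX)_\mu=2\XX_\mu+\rho(\XX_\mu)$ one has $\slashed{\mathbb H}=3\slashed\XX-\slashed\Delta(\XX)$, which is \emph{selfadjoint} because $\slashed\XX$ and $\slashed\Delta(\XX)$ are (cf. the operators defined at \eqref{eq:199}). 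I would also note that $\gamma^5$ \emph{anticommutes} with $\slashed{\mathbb H}$: it anticommutes with each $\gamma^\mu$ and commutes with each $\mathbb H_\mu$, the latter because $\gamma^5$ commutes with $\XX_\mu$ by construction and hence with $\rho(\XX_\mu)$ after applying $\rho$ and using $\rho(\gamma^5)=-\gamma^5$.

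With these facts in hand I would compute directly from \eqref{eq:24}: expanding the anticommutator, pushing $\gamma^5$ past $\mathbb H_\mu$ and $\Delta(\sig)$ (with which it commutes), and using $\gamma^\mu\Delta(\sig)=-\Delta(\sig)\gamma^\mu$ to gather the two factors into $\gamma^\mu\mathbb H_\mu=i\slashed{\mathbb H}$, one obtains
\begin{equation*}
E^0_{X\sig}=-\tfrac12\,[\slashed{\mathbb H},\Delta(\sig)]\,\gamma^5 .
\end{equation*}
Because $\gamma^5$ anticommutes with $\slashed{\mathbb H}$ and commutes with $\Delta(\sig)$, it anticommutes with $[\slashed{\mathbb H},\Delta(\sig)]$, so equivalently $E^0_{X\sig}=\tfrac12\,\gamma^5\,[\slashed{\mathbb H},\Delta(\sig)]$. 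Selfadjointness is then immediate: $[\slashed{\mathbb H},\Delta(\sig)]$ is a commutator of two selfadjoint operators, hence anti-selfadjoint, and therefore
\begin{equation*}
(E^0_{X\sig})^\dagger=-\tfrac12\,\gamma^5\,[\slashed{\mathbb H},\Delta(\sig)]^\dagger=\tfrac12\,\gamma^5\,[\slashed{\mathbb H},\Delta(\sig)]=E^0_{X\sig}.
\end{equation*}

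For tracelessness I would use the same relations to get $\gamma^5 E^0_{X\sig}\gamma^5=-E^0_{X\sig}$, i.e. $E^0_{X\sig}$ anticommutes with the selfadjoint involution $\gamma^5$; cyclicity of the trace then gives $\text{Tr}\,E^0_{X\sig}=\text{Tr}(\gamma^5 E^0_{X\sig}\gamma^5)=-\text{Tr}\,E^0_{X\sig}$, whence $\text{Tr}\,E^0_{X\sig}=0$. Concretely this is just the observation that $E^0_{X\sig}$ is off-diagonal in the chiral $l/r$ block decomposition of the spinor indices (there $\gamma^\mu\gamma^5$ is off-diagonal while $\{\mathbb H_\mu,\Delta(\sig)\}$ is diagonal), and off-diagonal block matrices are traceless. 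The only slightly delicate point in the argument is the bookkeeping of the (anti)commutation relations — in particular verifying that $\gamma^5$ commutes with $\mathbb H_\mu$ (which needs the $\rho$-covariance of $\gamma^5$ to handle the $\rho(\XX_\mu)$ piece) and that $\slashed{\mathbb H}$ is selfadjoint (which needs it expressed through $\slashed\XX$ and $\slashed\Delta(\XX)$, since $\XX_\mu$ itself is not selfadjoint); everything else is routine Clifford algebra.
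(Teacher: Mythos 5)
Your proof is correct and follows essentially the same route as the paper: both rewrite $E^0_{X\sig}$ as $\tfrac12\gamma^5\left[\slashed{\mathbb H},\Delta(\sig)\right]$ using the anticommutation of $\gamma^\mu$ with $\Delta(\sig)$ and $\gamma^5$, deduce selfadjointness from the selfadjointness of $\slashed{\mathbb H}$ and $\Delta(\sig)$ together with the $\gamma^5$-(anti)commutation relations, and get tracelessness from $\{\gamma^5,\slashed{\mathbb H}\}=[\gamma^5,\Delta(\sig)]=0$ plus cyclicity of the trace. The extra details you supply (selfadjointness of $\slashed{\mathbb H}$ via $3\slashed\XX-\slashed\Delta(\XX)$, and the check that $\gamma^5$ commutes with $\mathbb H_\mu$) are exactly the points the paper leaves implicit, and they are verified correctly.
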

\begin{proof}
Since  $\gamma^\mu$ anti-commutes with $\Delta(\sig)$ and $\gamma^5$, one has 
\begin{equation}
  \label{eq:90}
  E_{X\sig}^0= \frac 12 \gamma^5  \left[\slashed{\mathbb H}, \Delta(\sig)\right].
\end{equation}
 Therefore 
\begin{equation}
\text{Tr}\, E_{X\sig}^0 = \frac 12 \text{Tr}\, \gamma^5
\left[\slashed{\mathbb H}, \Delta(\sig)\right] = -\frac 12 \text{Tr} 
\left[\slashed{\mathbb H}, \Delta(\sig)\right]\gamma^5 =  -\frac 12 \text{Tr} \,\gamma^5
\left[\slashed{\mathbb H}, \Delta(\sig)\right]
\label{eq:32}
    \end{equation}
  where the first equality come from $\left\{\gamma^5, \slashed {\mathbb H}\right\}= \left[\gamma^5, \Delta(\sig)\right] = 0$ and the second from the tracial
      property. Thus $\text{Tr}\, E_{X\sig}^0  = - \text{Tr}\,
      E_{X\sig}^0,$ and so vanishes. The selfadjointness follows from
      the commutation properties of $\gamma^5$ and the selfadjointness
      of $\slashed{\mathbb H}$ and $\Delta(\sig)$.
\end{proof}
\noindent Furthermore $E_X$ and $E_\sig$ depend solely on $X$ and $\sig$, so the
potential of interaction
reduces to the trace of the part of $(E^0_X + E^0_\sig + E^0_{X\sig})^2$
that contains products of $X$ and $\sig$, namely
\begin{equation}
  \label{eq:44}
E_{X\sig}^2 + \left\{E^0_X, E^0_\sig\right\} +\left\{E^0_X,
  E^0_{X\sig}\right\} +  \left\{E^0_\sig, E^0_{X\sig}\right\}.
\end{equation} 
The last two terms are traceless because $E^0_X$ and $E^0_\sig$ are diagonal in
the $\sC\sD$ indices (see (\ref{eq:183}, \ref{eq:1999}, \ref{eq:199})
and (\ref{eq:135}, \ref{eq:83})) while $E^0_{X\sig}$ is off-diagonal
(being the commutator of a diagonal and an off-diagonal matrice). By the tracial property
$\text{Tr}\, \left\{E^0_X, E^0_\sig\right\}  = 2 \,\text{Tr}\,E^0_X
E^0_\sig$, hence the potential of interaction is 
\begin{equation}
  \label{eq:98}
  V(X,\sig) := \frac 12 f_0 \,\text{Tr}\,(E_{X\sig}^0)^2 +  f_0
    \,\text{Tr}\, E_X^0 E_\sig^0. 
\end{equation}

\begin{lemma}
\label{lemmainteract}
One has  \begin{equation}
    \label{eq:77}
    \text{Tr} \, E_X^0 E_\sig^0 = \frac 12 (3\phi^2 -1) \text{Tr}\left(\slashed\Delta^2(\XX) D_R^2\right).
  \end{equation}
\end{lemma}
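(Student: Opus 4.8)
The plan is to insert the explicit forms of $E_X^0$ and $E_\sig^0$ into the trace, expand, and isolate the term that gives the right-hand side of \eqref{eq:77} from a cross term that must be shown to vanish. By \eqref{eq:183} one has $E_X^0 = \frac12\slashed\Delta^2(\XX)$, and by \eqref{eq:135} $E_\sig^0 = \big((3\phi^2-1)\mathbb{I}_4 - 2\gamma^5\phi\big)D_R^2$, so that
\begin{equation*}
\text{Tr}\, E_X^0 E_\sig^0 = \frac12(3\phi^2-1)\,\text{Tr}\big(\slashed\Delta^2(\XX)\,D_R^2\big) - \phi\,\text{Tr}\big(\slashed\Delta^2(\XX)\,\gamma^5 D_R^2\big).
\end{equation*}
The first term is exactly the claimed expression, so the whole statement reduces to $\text{Tr}\big(\slashed\Delta^2(\XX)\,\gamma^5 D_R^2\big)=0$.

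The key observation is that $\Delta(\XX)_\mu$ is $\gamma^5$ times an operator acting trivially on the spinor indices $s\dot s$. Recall (cf. \eqref{eq:146}) that $\Delta(\XX)_\mu$ anticommutes with every Dirac matrix $\gamma^\nu$ and commutes with $\gamma^5$; consequently $\gamma^5\Delta(\XX)_\mu$ commutes with all four $\gamma^\nu$, so by the fact already used in the proof of Proposition~\ref{nogo}, that the only operators commuting with all the Dirac matrices are those trivial on $s\dot s$, one may write $\gamma^5\Delta(\XX)_\mu = \mathbb{A}_\mu$ with $\mathbb{A}_\mu$ acting only on the internal indices $\sC\sI\alpha$. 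Plugging $\Delta(\XX)_\mu = \gamma^5\mathbb{A}_\mu$ into $\slashed\Delta(\XX) = -i\gamma^\mu\Delta(\XX)_\mu$ and using $(\gamma^5)^2=\mathbb{I}$, $\{\gamma^5,\gamma^\mu\}=0$ and that $\mathbb{A}_\mu$ commutes with the Dirac matrices, one obtains $\slashed\Delta^2(\XX) = \gamma^\mu\gamma^\nu\,\mathbb{A}_\mu\mathbb{A}_\nu$.

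Since $D_R^2$ also acts only on internal indices (see \eqref{eq:83}), it and the $\mathbb{A}_\mu$ commute with $\gamma^5$, whence $\slashed\Delta^2(\XX)\,\gamma^5 D_R^2 = \gamma^\mu\gamma^\nu\gamma^5\,\mathbb{A}_\mu\mathbb{A}_\nu D_R^2$, which is, for each $\mu,\nu$, a product of an operator on the $4$-dimensional spinor factor $\mathbb{C}^4_{s\dot s}$ by one on the internal factor $\mathbb{C}^{32}_{\sC\sI\alpha}$ of $\mathsf{H}_F$. The trace therefore factorizes as $\sum_{\mu,\nu}\text{Tr}(\gamma^\mu\gamma^\nu\gamma^5)\,\text{Tr}(\mathbb{A}_\mu\mathbb{A}_\nu D_R^2)$, and it vanishes because $\text{Tr}(\gamma^\mu\gamma^\nu\gamma^5)=0$ in dimension four: in the chiral basis this trace equals $\text{Tr}(\sigma^\mu\tilde\sigma^\nu) - \text{Tr}(\tilde\sigma^\mu\sigma^\nu) = 2\delta^{\mu\nu} - 2\delta^{\mu\nu} = 0$. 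This proves \eqref{eq:77}.

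The one step that is not mere bookkeeping is the structural identity $\gamma^5\Delta(\XX)_\mu = \mathbb{A}_\mu$ with $\mathbb{A}_\mu$ internal; after that, the computation is routine Clifford algebra combined with the standard vanishing of $\text{Tr}(\gamma^5\gamma^\mu\gamma^\nu)$. Care should also be taken to keep the tensor splitting $\mathsf{H}_F \simeq \mathbb{C}^4_{s\dot s}\otimes\mathbb{C}^{32}_{\sC\sI\alpha}$ consistent throughout, so that the factorization of the trace into a spinorial and an internal trace is legitimate.
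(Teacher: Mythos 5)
Your proof is correct, and its opening move is exactly the paper's: inserting \eqref{eq:183} and \eqref{eq:135} gives the paper's equation \eqref{eq:68}, reducing everything to showing that $\mathrm{Tr}\bigl(\slashed\Delta^2(\XX)\,\gamma^5 D_R^2\bigr)=0$. Where you differ is in how that tracelessness is established. The paper writes out $\Delta(X)_\mu$, $\slashed\Delta(X)$ and $\slashed\Delta^2(X)$ explicitly in the chiral block basis \eqref{eq:204}--\eqref{eq:206} and cancels the two resulting $\sigma$-traces using cyclicity together with the commutation of the internal blocks with $\sigma^\mu,\tilde\sigma^\mu$. You instead observe that, since $\Delta(\XX)_\mu$ anticommutes with every $\gamma^\nu$ and commutes with $\gamma^5$, the operator $\gamma^5\Delta(\XX)_\mu$ lies in the commutant of the Clifford algebra and is therefore trivial on the spinor indices (the same fact invoked in Prop.~\ref{nogo}); this yields $\slashed\Delta^2(\XX)=\gamma^\mu\gamma^\nu\,\mathbb{A}_\mu\mathbb{A}_\nu$, the trace factorizes into spinorial times internal, and the spinorial factor $\mathrm{Tr}(\gamma^\mu\gamma^\nu\gamma^5)$ vanishes. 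This is a legitimate and arguably cleaner packaging: it makes the mechanism (a chirality-weighted spinor trace) transparent, makes the internal factors in the two cancelling terms manifestly identical, and avoids the explicit block matrices; the paper's explicit computation, on the other hand, produces the concrete form of $\Delta(X)_\mu$ and $\slashed\Delta(X)$ which is structurally useful elsewhere. One small point to make explicit: \eqref{eq:146} is stated for $\Delta(X)_\nu$, and your argument needs it for the full $\Delta(\XX)_\nu$, i.e.\ also for the antiparticle block $-\Delta(\bar X)_\nu$; this holds because $\bar X_\mu$ has the same chiral block structure proportional to $\delta_{\dot s}^{\dot t}$ (and the paper itself uses this extension just below \eqref{eq:183}), but it deserves a sentence in your write-up.
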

\begin{proof}
By \eqref{eq:183} and \eqref{eq:135} one has
\begin{equation}
  \label{eq:68}
  E_X^0 E_\sig^0 = \frac 12 (3\phi^2 -1) \slashed\Delta^2(\XX) D_R^2 -
  \phi \,\slashed\Delta^2(\XX) \gamma^5 D_R^2.
\end{equation}
The result amounts to show that the second term has vanishing trace. To see it, let use \eqref{eq:166},
\eqref{eq:1999} and \eqref{eq:83} to write
\begin{equation}
  \label{eq:202}
  \slashed\Delta^2(\XX) \gamma^5 D_R^2=|k_R|^2\left(
\begin{array}{cc}
  \slashed\Delta^2(X)\,\gamma^5\, \Xi_{\alpha\sI}^{\beta\sJ}& 0_{64}\\ 
  0_{64}& -\slashed{\Delta}^2(\bar X)\,\gamma^5
  \,\Xi_{\alpha\sI}^{\beta\sJ}
\end{array}\right)_{\sC\sD}.
\end{equation}
One has
\begin{equation}
  \label{eq:204}
   {\Delta(X)}_\mu = \delta_{\dot s}^{\dot t}\left(\begin{array}{cc}
 X_\mu^l -  X_\mu^r & 0_{32}\\
0_{32} & X_\mu^r - X_\mu^l\end{array}\right)_{st} =:
\delta_{\dot s}^{\dot t}\left(\begin{array}{cc}
\Delta_\mu & 0_{32}\\
0_{32} & -\Delta_\mu\end{array}\right)_{st},
\end{equation}
so that
\begin{equation}
 \slashed\Delta(X) = -i\left(\begin{array}{cc}
     0_{32}& -\sigma^\mu \Delta_\mu\\
     \bar\sigma^\mu \Delta_\mu &0_{32} 
\end{array}\right)_{st},\quad
  {\slashed \Delta^2(X)} = \left(\begin{array}{cc}
      \sigma^\mu\tilde\sigma^\nu\Delta_\nu \Delta_\mu & 0_{32}\\
      0_{32} & \tilde \sigma^\mu \sigma^\nu \Delta_\mu\Delta_\nu
\end{array}\right)_{st}.
\end{equation}
Hence
\begin{align}
  \label{eq:206}
  \text{Tr}\left(\slashed\Delta^2(X)\,\gamma^5\,\Xi_{\alpha\sI}^{\beta\sJ}
  \right)&= 
\text{Tr}\left(
  \begin{array}{cc}
    \sigma^\mu\tilde\sigma^\nu X_\mu^l X_\nu^r & 0_{32}\\
    0_{32} & -\tilde \sigma^\mu \sigma^\nu X_\mu^r X_\nu^l
\end{array}\right)_{st}\\
\nonumber
&= \text{Tr}\left(\sigma^\mu\tilde\sigma^\nu X_\mu^l X_\nu^r \right)
   - \text{Tr}\left(\tilde \sigma^\mu \sigma^\nu X_\mu^r X_\nu^l\right) 
= \text{Tr}\left(\sigma^\mu\tilde\sigma^\nu X_\mu^l  X_\nu^r \right)
- \text{Tr}\left(\tilde \sigma^\nu \sigma^\mu X_\nu^r X_\mu^l\right)
\end{align}
which vanishes by the trace property and the commutation of
$[X_\mu^l,\sigma^\mu] = [X_\mu^l,\sigma^\nu]=0$.  The same is true for
$\text{Tr}\left(\slashed\Delta^2(\bar X)\,\gamma^5\,\Xi_{\alpha\sI}^{\beta\sJ}\right)$, so that \eqref{eq:202}
  has zero trace.
  \end{proof}

By lemma \ref{lemmexsig}, $(E_{X\sig}^0)^2$ is positive, hence its
trace is never negative and minimal when $E_{X\sig}^0$ is
zero. However $\text{Tr}\, E_X^0 E_{X\sig}^0$ is not necessarily
bounded from below, which makes difficult to minimize the potential of
interaction alone. In fact it is easier to minimize it together with the
potential $V(X)$ of the vector field.
\begin{prop}\label{propinteract}
  The potential
  \begin{equation}
    \label{eq:152}
      V'(X,\sig):=  V(X) + V(X,\sig)
\end{equation}
is never negative and vanishes if and only if $\Delta(\XX)_\mu=0$ for
all $\mu$.\end{prop}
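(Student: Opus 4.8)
The plan is to rewrite $V'(X,\sig)=V(X)+V(X,\sig)$ entirely in terms of the non‑negative quantities $E_X^0$, $(E_{X\sig}^0)^2$ and $D_R^2$, and to absorb the only possibly negative contribution into $V(X)$. First I would substitute the expression \eqref{eq:184} of $V(X)$, the definition \eqref{eq:98} of $V(X,\sig)$, and Lemma \ref{lemmainteract}; since $E_X^0=\tfrac12\slashed\Delta^2(\XX)$ by \eqref{eq:183}, the latter reads $f_0\,\mathrm{Tr}\,E_X^0E_\sig^0=f_0(3\phi^2-1)\,\mathrm{Tr}\!\left(E_X^0D_R^2\right)$, so that
\begin{equation*}
V'(X,\sig)=\Lambda^2 f_2\,\mathrm{Tr}\,E_X^0+\tfrac12 f_0\,\mathrm{Tr}\,(E_X^0)^2+\tfrac12 f_0\,\mathrm{Tr}\,(E_{X\sig}^0)^2+f_0(3\phi^2-1)\,\mathrm{Tr}\!\left(E_X^0D_R^2\right).
\end{equation*}

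Next I would collect the positivity inputs. The operator $E_X^0=\tfrac12\slashed\Delta^2(\XX)$ is non‑negative because $\slashed\Delta(\XX)$ is selfadjoint; $E_{X\sig}^0$ is selfadjoint by Lemma \ref{lemmexsig}, so $(E_{X\sig}^0)^2\ge 0$; and by \eqref{eq:83} the operator $D_R^2=|k_R|^2\,\Xi_{\sJ\alpha}^{\sI\beta}\,\delta_\sC^\sD$ is $|k_R|^2$ times an orthogonal projection, hence $0\le D_R^2\le |k_R|^2\,\mathbb I$. Running the same manipulation as in \eqref{eq:188-bis} (where one uses that $\gamma^\nu$ anticommutes with $\Delta(\XX)_\mu$ and that $D_R^2$ commutes with the $\gamma$'s) gives $\mathrm{Tr}(E_X^0D_R^2)=\tfrac12\sum_\mu\mathrm{Tr}\!\left(\Delta^2(\XX)_\mu D_R^2\right)$, which lies between $0$ and $|k_R|^2\,\mathrm{Tr}\,E_X^0$. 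Since $3\phi^2-1\ge -1$ for every value of the real field $\phi$, one therefore gets
\begin{equation*}
V'(X,\sig)\ \ge\ \bigl(\Lambda^2 f_2-f_0|k_R|^2\bigr)\,\mathrm{Tr}\,E_X^0+\tfrac12 f_0\,\mathrm{Tr}\,(E_X^0)^2+\tfrac12 f_0\,\mathrm{Tr}\,(E_{X\sig}^0)^2 ,
\end{equation*}
and, under the large‑$\Lambda$ assumption $\Lambda^2 f_2\ge f_0|k_R|^2$ (a mild strengthening of \eqref{eq:168}) together with $f_0>0$, all three summands are non‑negative, so $V'(X,\sig)\ge 0$.

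For the equality case, $V'(X,\sig)=0$ forces each summand of the last display to vanish; the $\mathrm{Tr}\,(E_X^0)^2$ term (or, if $\Lambda^2 f_2>f_0|k_R|^2$ strictly, the $\mathrm{Tr}\,E_X^0$ term) then gives $E_X^0=0$, hence $\mathrm{Tr}\,E_X^0=0$, which by \eqref{eq:188-bis}---each $\Delta^2(\XX)_\mu$ being non‑negative---forces $\Delta(\XX)_\mu=0$ for all $\mu$. Conversely, $\Delta(\XX)_\mu=0$ makes $E_X^0$ vanish, hence also the three $X$‑dependent terms above, and $E_{X\sig}^0$ vanishes as well (directly, or because minimizing the scalar potential of \S\ref{section:potentialsigma} in parallel sets $\phi=0$), so $V'(X,\sig)=0$.

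The delicate point is exactly the sign of the cross‑term $f_0(3\phi^2-1)\,\mathrm{Tr}(E_X^0D_R^2)$, which is negative whenever $3\phi^2<1$ and, as the remark preceding the proposition notes, cannot be bounded below on its own; the argument goes through precisely because one does \emph{not} minimize $V(X,\sig)$ in isolation but couples it to $V(X)$, after which the bound $D_R^2\le|k_R|^2\,\mathbb I$ and the large‑$\Lambda$ hypothesis let the $\Lambda^2 f_2\,\mathrm{Tr}\,E_X^0$ term dominate the deficit.
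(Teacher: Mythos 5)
Your proof is correct and rests on the same pillars as the paper's: you refuse to minimize $V(X,\sig)$ in isolation, couple it to $V(X)$, use selfadjointness of $\slashed\Delta(\XX)$ and of $E^0_{X\sig}$ to get nonnegative traces, and invoke the high-energy hypothesis $\Lambda^2 f_2\ge f_0|k_R|^2$ (the paper's \eqref{eq:63}, quoted there as \eqref{eq:168}). The difference is in how the negative part of the cross term is absorbed. The paper keeps the cross term exactly: it splits $(3\phi^2-1)$ into $3\phi^2$, retained as a manifestly nonnegative piece $\tfrac{3f_0}{4}\phi^2\,\mathrm{Tr}(\slashed\Delta^2 D_R^2)$, and $-1$, which is absorbed into $\tfrac{\Lambda^2 f_2}{2}\mathrm{Tr}\,\slashed\Delta^2$ by decomposing the trace along the projection $p$ onto the support of $D_R^2$; this writes $V'$ as a sum of four nonnegative terms, and the vanishing of the first one forces $\slashed\Delta^2 p=\slashed\Delta^2(1-p)=0$, hence $\Delta(\XX)_\mu=0$. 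You instead discard the $3\phi^2$ piece via $3\phi^2-1\ge -1$ and bound $\mathrm{Tr}(E_X^0 D_R^2)\le |k_R|^2\,\mathrm{Tr}\,E_X^0$ using $0\le D_R^2\le |k_R|^2\mathbb I$, which is a cruder but sufficient lower bound, and you settle the equality case from $\mathrm{Tr}(E_X^0)^2=0$ rather than from the paper's $W(X,\sig)=0$; both variants are valid, need the same hypothesis, and your operator inequality makes the mechanism (the $\Lambda^2 f_2$ term dominating the deficit on the range of $D_R^2$) perhaps more transparent, while the paper's exact decomposition loses nothing and exhibits the residual positive $\phi^2$ coupling. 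One caveat: in the converse direction your word ``directly'' is not justified --- when $\Delta(\XX)_\mu=0$ one still has $E^0_{X\sig}=\tfrac{3i}{2}\gamma^\mu\gamma^5\{\XX_\mu,\Delta(\sig)\}$, which vanishes only once $\Delta(\sig)=0$ (i.e. $\phi=0$), so only your second, parenthetical justification (minimizing the scalar potential in parallel) actually closes that step; but the paper's own proof makes exactly the same leap there (its appeal to Lemma \ref{lemmainteract} covers the $\mathrm{Tr}\,E^0_XE^0_\sig$ term, not $\mathrm{Tr}\,(E^0_{X\sig})^2$), so this is not a defect of your argument relative to the published one.
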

\begin{proof}
We write $\slashed\Delta$ for $\slashed\Delta(\XX)$. By summing up \eqref{eq:184} and 
\eqref{eq:98} one gets
\begin{align}
  \label{eq:159}
  V'(X,\sig) &= \frac{\Lambda^2 f_2}{2} \text{Tr}\,\slashed\Delta^2 +
  \frac {f_0}8 \text{Tr}\,\slashed\Delta^4 +\frac {f_0}2\text{Tr}\,(E_{X\sig})^2 + \frac{f_0}2
  (3\phi^2-1)\text{Tr}\,(\slashed\Delta^2D_R^2),\\
  \label{eq:159-bis}
&=\text{Tr}\left( \frac{\Lambda^2 f_2}{2} \,\slashed\Delta^2 - \frac {f_0}2 \,
 \slashed\Delta^2D_R^2 \right)+  \frac {f_0}{8} \text{Tr}\,\slashed\Delta^4
  + \frac{f_0}2\text{Tr}\,(E_{X\sig})^2 + \frac{3f_0}4
  \phi^2\text{Tr}\,(\slashed\Delta^2D_R^2).
\end{align}

Let  $p:=
\delta_{st\sC}^{\dot t\dot s\sD}\,\Xi_{\sJ\alpha}^{\sI\beta}$ denote the
projection on the non-zero entries of $D_R^2$, so that $\delta_{st}^{\dot s\dot t}D_R^2 = |k_R|^2 p$. The first term in \eqref{eq:159-bis}  is
\begin{equation}
  \label{eq:164}
W(X,\sig):= \left( \frac{\Lambda^2 f_2}2 - \frac 12 f_0 |k_R|^2\right)\text{Tr}\,(\slashed\Delta^2\, p)
  +\frac{\Lambda^2 f_2}2 \text{Tr}\,(\slashed\Delta^2\,(1-p)).
\end{equation}
Because $\slashed\Delta$ is selfadjoint, 
\begin{equation}
  \label{eq:165}
  \text{Tr}\, (\slashed\Delta^2 p) = \text{Tr}(p \slashed\Delta^2 p) =
  \text{Tr}\left( (p\slashed\Delta) (p\slashed\Delta)^\dagger\right)
\end{equation}
is positive. The same is true for $\text{Tr}\,(\slashed\Delta^2(1-p))$. Assuming as in \eqref{eq:168} that at high energy
\begin{equation}
  \label{eq:63}
  \Lambda^2 f_2 \geq f_0|k_R|^2,
\end{equation}
one gets that $W(X,\sig)$ is never negative, and vanishes if and only
if 
$\slashed\Delta^2 p$ and $\slashed\Delta^2 (1-p)=0$, that is if and
only if
$\slashed\Delta^2=0$, which is equivalent to $\text{Tr} \,\Delta^2=0$
since
$\Delta^2$ is positive. By \eqref{eq:188-bis} this is equivalent to
$\Delta(\XX)_\mu=0$ for any
$\mu$.

The second term in \eqref{eq:159-bis} is never negative, and vanishes
when $\Delta(X)_\mu=0$ for any $\mu$.  The same is true for the third
term by lemma \ref{lemmainteract}, and for the last term since
\begin{equation}
  \label{eq:69}
  \text{Tr}\,(\slashed\Delta^2D_R^2)= \text{Tr}\left((D_R\slashed\Delta)(D_R\slashed\Delta)^\dagger\right).
\end{equation}
Hence $V'(X,\sig)$ is never negative, and vanishes if and only if
$\Delta(\XX)_\mu=0$ for any $\mu$.
\end{proof}
\medskip

Combining propositions \ref{propinteract} and \ref{propscal}, the whole potential $V(X) + V(\sig) + V(X, \sig)$ is zero if and
only if both the scalar field $\sig$ and the vector field
$\Delta(X)_\mu$ are zero. This proves the first statement of point iii) in theorem
\ref{theo1}. The second statement has been proven below lemma \ref{propscal}.
 
\section{Twist and representations}\setcounter{equation}{0}
\label{section:discussion}

We discuss the choices made in the construction of the twisted
spectral triple of the standard model: the middle-term solution consisting in imposing by hand
the reduction $M_8(\bb C)\to~M_4(\bb C)$, and the representation
of $\A_G$.

\subsection{Global twist}
\label{secdisc}

Instead of reducing by hand ${\cal B}_{LR}$ to $\cal B'$ by imposing the reduction
$M_8(\bb C) \to M_4(\bb C)$, one could twist ${\cal B}_{LR}$ as well. This means finding an automorphism $\rho$
of $M_8(\bb C)$ such that
\begin{equation}
\label{discussion1}
\sigma^{\mu} M\,\partial_{\mu}-\rho(M)\sigma^{\mu}\partial_{\mu} = 0,\quad\quad
\bar{\sigma}^{\mu} M\,\partial_{\mu}-\bar{\rho}(M)\bar{\sigma}^{\mu}\partial_{\mu}=0.
\end{equation}
Using 
$\sigma^{\mu}\bar{\sigma}^{\nu}\partial_{\mu}\partial_{\nu}=\nabla^{2}$,
the first expression yields
\begin{equation}
\rho(M)=\sigma^{\mu}M\bar{\sigma}^{\nu}\frac{1}{\nabla^{2}}\partial_{\mu}\partial_{\nu}.
\label{sigmaM def}
\end{equation}

This does not define an automorphism of
$\cinf\otimes \A_G$. Indeed, 
writing $T_{\mu\nu}\equiv\frac{1}{\nabla^{2}}\partial_{\mu}\partial_{\nu}$
and $M_{1}^{\mu\nu}\equiv\sigma^{\mu}M_{1}\bar{\sigma}^{\nu},$
one gets
\begin{eqnarray}
\rho(M_{1})\rho(M_{2}) & = & \left(M_{1}^{\mu\nu}T_{\mu\nu}\right)\left(M_{2}^{\alpha\beta}T_{\alpha\beta}\right)\\
 & = &
 M_{1}^{\mu\nu}\left[T_{\mu\nu},M_{2}^{\alpha\beta}\right]T_{\alpha\beta}+M_{1}^{\mu\nu}M_{2}^{\alpha\beta}T_{\mu\nu}T_{\alpha\beta},\\
&=& \rho(M_{1}M_{2})+M_{1}^{\mu\nu}\left[T_{\mu\nu},M_{2}^{\alpha\beta}\right]T_{\alpha\beta}
\end{eqnarray}
where
we compute
\begin{eqnarray}
M_{1}^{\mu\nu}M_{2}^{\alpha\beta}T_{\mu\nu}T_{\alpha\beta} & = & \sigma^{\mu}M_{1}\bar{\sigma}^{\nu}\sigma^{\alpha}M_{2}\bar{\sigma}^{\beta}\frac{1}{\nabla^{2}}\frac{1}{\nabla^{2}}\partial_{\mu}\partial_{\nu}\partial_{\alpha}\partial_{\beta}\nonumber \\
 & = & \sigma^{\mu}M_{1}M_{2}\bar{\sigma}^{\beta}\frac{1}{\nabla^{2}}\partial_{\mu}\partial_{\beta}\nonumber \\
 & = & \rho(M_{1}M_{2}).
\end{eqnarray} 

A possible solution is to look for a
$\star$ product such that
\begin{equation}
\rho(M_{1})\star\rho(M_{2})=\rho(M_{1}\star M_{2}),\label{eq: star product}
\end{equation}
that would encode the intrinsic mixing between
the manifold (space-time) and the matrix part (gauge sector) that
is the core of the Grand Symmetry. This would also force us to
consider an algebra $\A_0$ of pseudo-differential operators bigger than
$\cinf\otimes \A_G$. This point is particularly interesting if one
believes that almost commutative geometries are an
effective low energy description of a more fundamental theory, based
on a ``truly'' non-commutative algebra (that is with a finite
dimensional center). This idea
has been often advertised by D. Kastler, and it could be that $\A_0$
is not so far from the ``noncommutative salmon'' he aimed at
fishing. All this will be investigated in future works.
\medskip

The reason why we choose the representation \eqref{eq:49}  instead of
\eqref{eq:49bis} as in \cite{Devastato:2013fk} is that while it is
right that \eqref{sigmaM def}
is still in $M_4(\bb C)$, it would
not be true for an element $Q=Q_{\dot s\alpha}^{\dot t\beta}\in
M_2(\bb H)$
that $\sigma^{\mu}Q\bar\sigma^\nu$ is still in $M_2(\bb H)$.
However, all the results presented in this paper would also be true
with the representation  \eqref{eq:49bis}, as explained in the
next paragraph.

\subsection{Invariance of the constraints}
\label{subsec:inv}

The grand algebra in the representation
\eqref{eq:49} is broken by  the grading to \cite[eq. (3.17)]{Devastato:2013fk} 
\begin{equation}
  \label{eq:39}
  \A'_G = M_2(\bb H)_L \oplus M_2(\bb H)_R \oplus M_4^l(\bb C) \oplus M_4^r(\bb C).   
\end{equation}
To have bounded commutators with $\Ds$, we impose by hand that
quaternions act trivially on the $\dot s$ index, yielding the reduction to
\begin{equation}
  \label{eq:45}
\A':= \bb H_L  \oplus \bb H_R  \oplus M_4^l(\bb C) \oplus M_4^r(\bb C)
\end{equation}
whose elements are $(Q, M)$ where 
\begin{equation}
  \label{eq:59}
   Q= \delta_{s\dot s}^{t\dot t} 
   \left(\begin{array}{cc} 
       q_R& 0_2\\
       0_2 &  q_L
     \end{array}\right)_{\alpha\beta}, \quad M = 
   \left(\begin{array}{cc}
       M_l^l & 0_4 \\ 
       0_4 & M_r^r
     \end{array}\right)_{st}
\quad\text{ with }
q_r\in{\mathbb H}, M_l^l, M_r^r \in M_4(\mathbb C).
\end{equation}
The twist $\rho$ is still defined as the exchange of the left and
right part of spinors, but it now acts on the matrix part
\begin{equation}
  \label{eq:61}
  \rho(M) = \left(
    \begin{array}{cc}
      M_r^r & 0_4 \\ 
      0_4 & M_l^l
    \end{array}\right)_{ s t}.
\end{equation}
This guarantees that 
\begin{equation}
  \label{eq:51}
[\Ds, M]_\rho = (\ds M) +[\gamma^\mu, M]_\rho   = (\ds M)
\end{equation}
is bounded, so that $(C^\infty(\M)\otimes \A', {\cal H}, \Ds + D_M; \rho)$ is a twisted
spectral triple. The twisted first-order condition for $\Ds$ is
checked as in proposition \ref{twisted-spec-triple}.

For the twisted first-order condition imposed by $D_M$, one first
consider the subalgebra of ${\cal A}'$ 
\begin{equation}
\tilde {\cal A}:=\bb H_L\oplus\bb C_R\oplus M_{3}^{l}(\bb C)\oplus\bb
C^{l}\oplus M_{3}^{r}(\bb C)\oplus \bb C^r
\label{eq:9601}
\end{equation}
obtained by  asking 
\begin{equation}
q_{R}=\left(\begin{array}{cc}
c_{R} & 0\\
0 & \bar{c}_{R}
\end{array}\right)\quad\text{ with } c_{R}\in\bb C
\end{equation}
in \eqref{eq:59} and
\begin{equation}
M_{r}^{r}=\left(\begin{array}{cc}
m^{r} & 0_{2}\\
0_{2} & {\bf M}^{r}
\end{array}\right)_{\sI\sJ},\quad M_{l}^{l}=\left(\begin{array}{cc}
m^{l} & 0_{2}\\
0_{2} & {\bf M}^{l}
\end{array}\right)_{\sI\sJ}\text{ with } {\bf M}^r, {\bf M}^l\in M_3(\bb C),\;
m^r, m^l\in\bb C.
\label{eq:100-1}
\end{equation}
Let $B=(R,N)\in\tilde{{\cal B}}$ be another element of $\tilde \A$, with components $d_r,
n^{r},n^{l}\in\bb C$ and ${\bf N}^r,{\bf N}^l\in M_{3}(\bb C)$.
The double twisted commutator  $[[D_M, A]_\rho , \; JBJ^{-1}]_\rho$
is an off-diagonal matrix with components
\begin{align}
  \label{eq:64}
&(\sD_M M - Q\sD_M)\bar R - \rho(\bar N) (\sD_M M -
  Q\sD_M),\\
  \label{eq:bis64}
& (\sD_M Q - \rho(M) \sD_M) \bar N
  - \bar R(\sD_M Q - \rho(M)\sD_M). 
\end{align}
One has
\begin{align}
\rho(\bar N ){\sf D_{\nu}}M & =(\rho(\bar N)\eta\Xi M)_{s\sJ}^{t\sI}(\Xi\delta)_{\dot{s}\alpha}^{\dot{t}\beta}=
\left(
  \begin{array}{cc}
    \bar{\sf n}^l\sf{m}^r & 0_{4}\\
    0_{4} & -\bar{\sf n}^r {\sf m}^l
  \end{array}
\right)_{st}\otimes
\left(
  \begin{array}{cc}
    \Xi & 0_{4}\\
    0_{4} &\Xi
  \end{array}
\right)_{\dot s\dot t},\label{eq:102}\\[5pt]
\rho(\bar{N})Q\sf D_{\nu} & =(\rho(\bar N)\eta\Xi)_{s\sJ}^{t\sI}\,(Q\Xi)_{\dot s\alpha}^{\dot t\beta}=
\left(
  \begin{array}{cc}
    \bar{\sf n}^l & 0_{4}\\
      0_{4} & -\bar{\sf n}^r
    \end{array}\right)_{st}\otimes
  \left(
    \begin{array}{cc}
      {\sf c}_{R} & 0_{4}\\
      0_{4} & {\sf c}_{R}
    \end{array}\right)_{\dot s\dot t},\\[5pt]
{\sf D_{\nu}}M\bar{R} & =(\eta\Xi M)_{s\sJ}^{t\sI}\,(\Xi\bar{R})_{\dot{s}\alpha}^{\dot{t}\beta}=
\left(
  \begin{array}{cc}
    {\sf m}^{r} & 0_{4}\\
    0_{4} & -{\sf m}^l
  \end{array}\right)_{st}\otimes
\left(
  \begin{array}{cc}
    \bar{\sf d}_{R} & 0_{4}\\
    0_{4} & \bar{\sf d}_{R}
  \end{array}
\right)_{\dot{s}\dot{t}},\\[5pt]
Q{\sf D}_{\nu}\bar{R} &
=(\eta\Xi)_{s\sJ}^{t\sI}\,(Q\Xi\bar{R})_{\dot{s}\alpha}^{\dot{t}\beta}=
\left(
  \begin{array}{cc}
    \Xi & 0_{4}\\
    0_{4} & -\Xi
  \end{array}\right)_{st}\otimes
\left(
  \begin{array}{cc}
    {\sf c}_{R}{\bar{\sf d}_{R}} & 0_{4}\\
    0_{4} & {\sf c}_{R}\bar{\sf d}_{R}
  \end{array}
\right)_{\dot{s}\dot{t}},
\end{align}
where we defined 
\begin{equation}
{\sf m^{r}}:=\left(\begin{array}{cc}
m^{r} & 0\\
0 & 0_{3}
\end{array}\right)_{\alpha\beta},\quad
{\sf m^{l}}:=\left(\begin{array}{cc}
m^{l} & 0\\
0 & 0_{3}
\end{array}\right)_{\alpha\beta},
\quad
{\sf c}_{R}=\left(\begin{array}{cc}
c_{R} & 0\\
0 & 0_{3}
\end{array}\right)_{\sI\sJ}
\label{eq:104bis}
\end{equation}
and similarly for ${\sf n}^r$, ${\sf n}^l$ and ${\sf d}_R$. Collecting the various terms,
one finds that \eqref{eq:64} is zero if and only if 
\begin{equation}
(c_R-m^r)(\bar{d}_R-\bar{n}^l)=0,\quad(c_R-m^l)(\bar{d}_R-\bar{n}^r)=0
\label{eq:105-bis}
\end{equation}
which are the same constraints \eqref{eq:105} coming from the other representation.
The same is true for \eqref{eq:bis64}, using
\begin{align}
\bar{R}\rho(M)\sf D_{\nu} & =(\rho(M)\eta\Xi)_{s\sJ}^{t\sI}\,(\Xi\bar{R})_{\dot{s}\alpha}^{\dot{t}\beta}=\left(\begin{array}{cc}
{\sf m}^l & 0_{4}\\
0_{4} & -{\sf m}^r
\end{array}\right)_{st}\otimes\left(\begin{array}{cc}
\bar{\sf d}_{R} & 0_{4}\\
0_{4} & \bar{\sf d}_{R}
\end{array}\right)_{\dot{s}\dot{t}},\label{eq:102bisbis}\\[5pt]
\bar{R}\sD_{\nu}Q & =(\eta\Xi)_{s\sJ}^{t\sI}\,(\bar R\Xi Q)_{\dot{s}\alpha}^{\dot{t}\beta}=\left(\begin{array}{cc}
\Xi & 0_{4}\\
0_{4} & -\Xi
\end{array}\right)_{st}\otimes\left(\begin{array}{cc}
{\sf c}_R\bar{\sf d}_R & 0_{4}\\
0_{4} & {\sf c}_R\bar{\sf d}_R
\end{array}\right)_{\dot{s}\dot{t}},\\[5pt]
\rho(M)\sD_{\nu}\bar{N} & =(\rho(M)\eta\Xi\bar{N})_{s\sJ}^{t\sI}(\Xi)_{\dot{s}\alpha}^{\dot{t}\beta}=\left(\begin{array}{cc}
 {\sf m}^l \,\bar{\sf n}^r& 0_{4}\\
0_{4} & -{\sf m}^r\,\bar{\sf n}^l
\end{array}\right)_{st}\otimes\left(\begin{array}{cc}
\Xi & 0_{4}\\
0_{4} & \Xi
\end{array}\right)_{\dot{s}\dot{t}},\\[5pt]
\sD_{\nu} Q\bar{N} & =(\eta\Xi \bar{N})_{s\sJ}^{t\sI}\,(\Xi Q)_{\dot{s}\alpha}^{\dot{t}\beta}=\left(\begin{array}{cc}
\bar{\sf n}^r & 0_{4}\\
0_{4} & -\bar{\sf n}^l
\end{array}\right)_{st}\otimes\left(\begin{array}{cc}
{\sf c}_{R} & 0_{4}\\
0_{4} & {\sf c}_{R}
\end{array}\right)_{\dot{s}\dot{t}}.
\end{align}
 Solving \eqref{eq:105} by
asking $m^r = c_R$, that is identifying $\bb C^{r}$ and  $\bb C_R$
with a single copy $\bb C_R^r$ of the complex numbers, one reduces
$\tilde A$ to 
\begin{equation}
  \label{eq:54}
  \A:= \mathbb H_L \oplus \bb C^r_R \oplus \bb C^l\oplus M_3^l(\bb C) \oplus
  M_3^r(\bb C). 
\end{equation}

This algebra plays for the representation \eqref{eq:49bis} the same role
as the algebra $\cal B$  for the representation
\eqref{eq:49}. Repeating the computation of \S\ref{soussectionsigma},
one finds a scalar field similar to~$\sig$. 
Thus, except for the hope of a global twist described in \S\ref{secdisc},
there is  at the moment no motivation to prefer one or the other of the two natural
representations of the grand algebra.
 
\section{Conclusion}\setcounter{equation}{0}

Let us summarize our results by the following chain of breaking, to be compared with \eqref{eq:50}:
\begin{framed}
  \begin{eqnarray*}
    \label{eq:final}
\mathcal{A}_{G}&=& M_4(\mathbb{H}) \oplus M_8(\bb C) \\[.25truecm] 
\nonumber 
&\Downarrow& \hspace{0truecm}\text{grading condition}\\[.25truecm] 
\nonumber
{\cal B}_{LR} &=&(\bb H^l_L\oplus \bb H^r_L\oplus \bb H^l_R\oplus \bb H^r_R) \oplus M_8(\bb C)\\[.25truecm]  
\nonumber 
&\Downarrow& \hspace{0truecm}
\text{ bounded commutator for } M_8(\bb C)\\ [.25truecm] 
\nonumber
{\cal B}' &=&(\bb H^l_L\oplus \bb H^r_L\oplus \bb H^l_R\oplus \bb H^r_R) \oplus M_4(\bb C) \\[.25truecm] 
\nonumber 
&\Downarrow&\hspace{0truecm} \text{$1^\text{st}$-order for the
  Majorana-Dirac operator} D_M\\[.25truecm]
\nonumber
 {\cal B} &=&(\bb H^l_L\oplus \bb H^r_L\oplus \bb C^l_R\oplus \bb C^r_R) \oplus M_3(\bb C)\oplus \mathbb C
 \text{ with } \bb C = \bb C_R^r \\[.25truecm] 
\nonumber 
&\Downarrow&\hspace{0truecm} \text{minimum of the spectral action}\\[.25truecm] 
\A_{sm} &=&\bb C \oplus \bb H \oplus M_3(\bb C)
\end{eqnarray*}
\end{framed}
\medskip

Starting with the ``not so grand algebra'' $\cal B$, one builds a
twisted spectral triple whose fluctuations generate both an extra
scalar field $\boldsymbol \sigma$ and an additional vector field
$X_\mu$. This is a Pati-Salam like model - the unitary of $\cal
B$ yields both an $SU(2)_R$ and an $SU(2)_L$, together with an extra $U(1)$ - but in a pre-geometric
phase since the Lorentz symmetry (in our case: the Euclidean $SO(n)$ symmetry) is
not explicit. The spectral action spontaneously breaks this model to
the standard model, in which the Lorentz symmetry is explicit, with the scalar and the vector fields playing
a role similar as the one of Higgs field. We thus have a dynamical model of emergent geometry.
\medskip

The idea that the scalar field $\sigma$ is associated to the
spontaneous breaking of a bigger symmetry to the standard model had been
formulated in 
\cite{Devastato:2013fk}, but, was not fully implemented, because the fluctuation of the
free Dirac operator by the grand algebra ${\cal A}_G$ yields an operator whose
square is a non-minimal Laplacian. The heat kernel expansion of
such operators is notably difficult to compute (see
\cite{Iochum:2016aa} for recent developments on that matter).  Almost
simultaneously, a similar idea has been successfully 
implemented in\cite{Chamseddine:2013fk}, where the Pati-Salam like symmetry does not come
from a bigger algebra, but follows from relaxing the first-order
condition.
It would be interesting to understand to what extend the twisted
fluctuations presented here are a particular case of those inner
fluctuation without first oder condition. More generally, the
  structure of the set of twisted fluctuations and of the associated
  twisted-gauge transformations of $\mathbb A$ needs to be
  worked out. Let us mention a possibly relevant notion of twisted
  connections, explained for instance in
  \cite{Fathizadeh:2011aa}.
\medskip

The twist $\rho$ is remarkably
  simple, and its mathematical significance should be studied more in
  details, in particular how it should be incorporated in the axioms
  of noncommutative geometry, like the orientability condition where
the commutator with the Dirac operator plays a crucial role. Also, the physical meaning of the twist is intriguing:
  the un-twisting of $\cal B$ forces the action of the algebra to be the same on the left and right components of
  spinors. In this sense the breaking of the grand algebra to the
  standard model is a sort of ``primordial'' chiral
  symmetry breaking.
\medskip

Full phenomenology and comparison with \cite{Chamseddine:2013uq}
require to take into account all fermions, not
only the right neutrino. This means to compute the spectral action of $\slashed D + D_M  +
\gamma^5\otimes D_0$. This would also allow to check that our $\boldsymbol
\sigma$ couples to the Higgs as $\sigma$ does in
\cite{Chamseddine:2012fk}. The simultaneous occurrence of both
  a scalar and a vector field offers interesting perspective for
  physics beyond the standard model. A similar phenomena
    appears in a recent
   proposal on how to  generate the field $\sig$ in NCG \cite{Farnsworth:2014aa}, where it
   comes together with an additional bosonic field with $B-L$ charge. 
\medskip

Finally, let us mention a very recent work of Chamseddine, Connes and
Mukhanov \cite{quanta-geom}
where the algebra $\A_F$ for $a=2$ is obtained without the
ad-hoc symplectic hypothesis, but from an higher degree Heisenberg
relation for the space-time coordinates. It
would be interesting to understand whether the case $a=4$ enters this
framework.
\bigskip

\emph{Since the first version of this paper, the twisting of real spectral triples
has been investigated in a systematic way in
\cite{Landi:2015aa} (see also \cite{T.-Brzezinski:2016aa} for an
alternative proposition). It has been shown that the 
twisted
first-order condition introduced in def. \ref{deftwistedfirstorder}
makes sense in full
generality. In particular, requiring that
the automorphism $\rho$ commutes with the real structure simply
amounts to twisting the commutator $[[D,a]_\rho, JbJ^{-1}]$ with the natural
image of $\rho$ in the group of automorphism of the opposite algebra
$\A^\circ$. Furthermore, imposing as an input that the
fermionic content of the theory is not touched (meaning that
both the Hilbert space and the Dirac operator remain unchanged), it has
been shown that there is no other choice for twisting an almost
commutative geometry than the exchange of the left/right components of spinors. There is however
some freedom in the doubling of the algebra, and the results of
\cite{Landi:2015aa} indicate that there exists at least one other model,
 in which the algebra $M_3(\mathbb C)$ is doubled as well. The
phenomenological consequence shall be explored in a
forthcoming paper.}

\vspace{.5truecm}
\begin{center}
\rule{5cm}{.5pt}
\end{center}
\vspace{.5truecm}

 \paragraph{Acknowledgments:}
The authors thank W. v. Suijlekom and J.-C. Wallet for having suggested
in a completely independent ways that twists could be a solution to
the unboundedness of commutators. Special thank to F. Lizzi for
launching the grand algebra project, many discussions and early
reading of the manuscript. Thanks also to F. D'Andrea for
pointing out misleading formulations in the first version of the
paper. Part of this
work was done during a stay of A.D. at the university of Nijmegen.

\setcounter{section}{0}
\bibliographystyle{amsplain}
\def\baselinestretch{.5}
\providecommand{\bysame}{\leavevmode\hbox to3em{\hrulefill}\thinspace}
\providecommand{\MR}{\relax\ifhmode\unskip\space\fi MR }
\providecommand{\MRhref}[2]{%
  \href{http://www.ams.org/mathscinet-getitem?mr=#1}{#2}
}
\providecommand{\href}[2]{#2}

 \end{document}